\documentclass[12pt]{article}
\usepackage{amsmath, amssymb, amsfonts, amsthm}
\usepackage{graphicx}
\usepackage{subfigure}
\usepackage{enumerate}
\usepackage{url} 
\usepackage{bbm, bm}
\usepackage{times, enumitem}
\usepackage{mathtools}
\usepackage{upgreek}
\usepackage{caption}
\usepackage{hyperref}
\usepackage{xcolor}
\usepackage[normalem]{ulem} 
\usepackage{setspace}

\usepackage[round]{natbib}
\bibpunct{(}{)}{;}{a}{}{,}

\newcommand{\blind}{0}

\addtolength{\oddsidemargin}{-.5in}%
\addtolength{\evensidemargin}{-1in}%
\addtolength{\textwidth}{1in}%
\addtolength{\textheight}{1.7in}%
\addtolength{\topmargin}{-1in}%


\newtheorem{theorem}{Theorem}[section] 
\newtheorem{lemma}{Lemma}[section] 

\newtheorem{proposition}{Proposition}[section]

\newtheorem{remark}{Remark}

\makeatletter
\newcommand{\customlabel}[2]{%
   \protected@write \@auxout {}{\string \newlabel {#1}{{#2}{\thepage}{#2}{#1}{}} }%
   \hypertarget{#1}{#2}
}
\makeatother

\newcommand{\E}{\mathbb{E}}
\newcommand{\Pb}{\mathbb{P}}

\newcommand{\var}{\text{var}}
\newcommand{\cov}{\text{cov}}

\newcommand{\Pn}{\mathbb{P}_n}

\newcommand{\R}{\mathbb{R}}

\DeclareMathOperator{\Beta}{Beta}
\DeclareMathOperator{\diag}{diag}           
\DeclareMathOperator{\expit}{expit}
\DeclareMathOperator{\Unif}{Unif}
\def\ind{\perp\!\!\!\perp}

\newcommand{\sol}{\mathsf{s}^*}

\newcommand{\cmmnt}[1]{\ignorespaces}  
\newcommand{\mishler}[1]{\textcolor{red}{#1}}
\newcommand{\kim}[1]{\textcolor{blue}{#1}}


\begin{document}

\def\spacingset#1{\renewcommand{\baselinestretch}%
{#1}\small\normalsize} \spacingset{1}


\if0\blind
{
  \title{\bf Counterfactual Mean-variance Optimization}
  \author{Kwangho Kim\\
    Department of Statistics, Korea University\\
    \url{kwanghk@korea.ac.kr}\\
    and \\
    Alan Mishler\\
    J. P. Morgan AI Research\\
    \url{alan@alanmishler.com}\\
    and \\
    José R. Zubizarreta\\
    Departments of Health Care Policy, Biostatistics, Statistics, Harvard University\\
    \url{zubizarreta@hcp.med.harvard.edu}
    }
   \date{}
  \maketitle
} \fi

\if1\blind
{
  \bigskip
  \bigskip
  \bigskip
  \begin{center}
    {\LARGE\bf Counterfactual Mean-variance Optimization}
\end{center}
  \medskip
} \fi

\bigskip
\begin{abstract}
We study a counterfactual mean-variance optimization, where the mean and variance are defined as functionals of counterfactual distributions. The optimization problem defines the optimal resource allocation under various constraints in a hypothetical scenario induced by a specified intervention, which may differ substantially from the observed world. We propose a doubly robust-style estimator for the optimal solution to the counterfactual mean-variance optimization problem and derive a closed-form expression for its asymptotic distribution. Our analysis shows that the proposed estimator attains fast parametric convergence rates while enabling tractable inference, even when incorporating nonparametric methods. We further address the calibration of the counterfactual covariance estimator to enhance the finite-sample performance of the proposed optimal solution estimators. Finally, we evaluate the proposed methods through simulation studies and demonstrate their applicability in real-world problems involving healthcare policy and financial portfolio construction.
\end{abstract}

\noindent%
{\it Keywords:} Causal Inference, Doubly-robust Estimation, Pareto Efficiency, Dataset Shift, Covariance Matrix Calibration.
\vfill

\newpage
\spacingset{1.8} 

\section{Introduction}

\emph{Counterfactuals}, also referred to as \emph{potential outcomes}, represent the hypothetical responses of a unit to a particular treatment or intervention, regardless of whether the intervention is actually administered. While counterfactuals have been the dominant causal language in statistics \citep{rubin1974estimating, holland1986statistics, hofler2005causal}, it has more recently emerged as a valuable tool in predictive modeling, particularly for improving decision-making under shifts in treatment patterns \citep[e.g.,][]{wang2019equal, dickerman2020counterfactual, lin2021scoping, dickerman2022predicting, kim2022doubly, kim2025semiparametric}. In this work, we explore a counterfactual extension of the traditional optimal resource allocation problem, formulating a domain-general counterfactual mean-variance optimization framework that parameterizes the tradeoff between the mean and variance of outcomes under a hypothetical intervention.

\subsection{Related Work}

Traditional, non-counterfactual mean-variance optimization has been used to estimate optimal allocations of resources in various settings, including financial investment \citep{markowitz1968portfolio}, {decision theory \citep{meyer1987two}}, product development \citep{cardozo1983ApplyingFinancialPortfolio}, healthcare policy \citep{fagefors2021ApplicationPortfolioTheory, qu2012MeanVarianceModel}, and electrical engineering \citep{delarue2011ApplyingPortfolioTheory}. In counterfactual mean-variance optimization, our goal is to estimate optimal allocations under hypothetical scenarios that may alter the outcome distribution, such as a healthcare policy intervention that affects the compliance rate for scheduled patient appointments (see Section~\ref{subsec:medical-appt}).

In this work, we analyze the counterfactual mean-variance optimization problem by formulating it as a quadratic program (QP), wherein both the objective function and constraints are defined in terms of functionals over a counterfactual distribution. Optimization problems of this counterfactual nature have been widely studied within the causal inference literature, including applications to policy evaluation \citep[e.g.,][]{kitagawa2018should, athey2021policy}, optimal treatment regimes under resource constraints \citep[e.g.,][]{luedtke2015optimal, luedtke2016optimal}, and algorithmic fairness \citep[e.g.,][]{mishler2021fade, mishler2021fairness, coston2020counterfactual}. Unconstrained variants of such problems also often arise in non-standard causal effect estimation via projection methods, where complex nonparametric estimands are approximated within parametric model classes \citep[e.g.,][]{neugebauer2007nonparametric, semenova2021debiased, kennedy2021semiparametric, mcclean2024nonparametric}. More recently, counterfactual prediction has been studied in a similar spirit, with the goal of minimizing counterfactual loss functions under a set of user-specified constraints \citep{kim2022doubly, kim2025semiparametric}. Nonetheless, and perhaps surprisingly, the formulation and analysis of mean-variance optimization under counterfactual scenarios remain largely unexplored.

A key complication arises from the fact that our estimand is defined as the optimal solution to a specialized form of stochastic program, whose coefficients depend on an unobservable counterfactual distribution. This dependence precludes the direct application of standard stochastic programming techniques such as stochastic approximation and sample average approximation methods \citep[see][Section 2]{kim2025semiparametric}. Notably, accurate estimation of the stochastic components of the optimization problem does not necessarily ensure accurate recovery of the corresponding optimal solution. Moreover, covariance matrix calibration is often necessary, especially in high-dimensional settings. However, it remains unclear whether conventional calibration procedures are valid or effective in our counterfactual setting.

\subsection{Contribution and Outline}
We study counterfactual mean-variance optimization as a new tool for informing decision-making under hypothetical scenarios, particularly those not observed in the present but potentially arising in the future, by leveraging recent advances in the counterfactual prediction literature. In Section~\ref{sec:problem}, we introduce the framework, along with the notations and assumptions. Section~\ref{sec:estimation-inference} presents a flexible, doubly robust nonparametric estimator for the optimal solution to the counterfactual mean-variance program. In Section~\ref{sec:calibration-Sigma}, we address the problem of calibrating the estimated counterfactual covariance matrix to enhance the finite-sample performance of the proposed estimators.  This component of the analysis is of independent interest, given the foundational role that covariance matrix estimation occupies in a broad spectrum of statistical and data science applications. In Section~\ref{sec:numerical-illustraction}, we evaluate the finite-sample properties of our estimators through simulation and apply our method to real-world problems in healthcare and finance. In the healthcare application, we examine the optimal proportion of same-day appointments under varying reminder systems that influence patient no-show rates differently. In the financial application, we study optimal portfolio allocation under different federal interest rate regimes. Section~\ref{sec:discussion} concludes.

\section{Problem Formulation} \label{sec:problem}

Suppose that we observe an i.i.d. sample $(Z_{1}, \ldots, Z_{n})$ of tuples $Z = (Y, A, X) \sim \mathbb{P}$, where $Y = (Y_1, \ldots, Y_k) \in \mathbb{R}^k$ denotes multiple outcomes for some fixed and finite $k$, $X \in \mathcal{X} \subset \mathbb{R}^d$ represents a vector of covariates, and $A \in \mathcal{A} = {0, 1}$ is a binary treatment indicator. For simplicity, we assume a binary treatment setting; however, the proposed framework is, in principle, extensible to multi-valued treatments. For $1 \leq i \leq k$, we let $Y^a$ denote the counterfactual that would be observed under treatment $A = a$, $a \in \mathcal{A}$. We focus on the following \emph{counterfactual mean-variance optimization} problem, in which the goal is to balance the “reward” (i.e., the mean) against the “risk” (i.e., the variance) under a counterfactual scenario where the treatment variable $A$ is set to a fixed value $a$:
\begin{equation}
\label{eqn:couterfactual-MV}
\begin{aligned}
    & \underset{w \in \R^k}{\text{minimize}} \quad {1}/{2}w^\top\Sigma^a w  - \lambda w^\top m^a\\
    & \text{subject to } \quad w \in \mathcal{S}^a, \\
\end{aligned}     \tag{$\mathsf{P}_{\text{MV}}$}  
\end{equation}
for $m_a = (m^a_1, ... , m^a_k)^\top$ and $\Sigma^a = \left( \Sigma^a_{ij} \right)$, where
$m^a_i = \E[Y^a_i]
$, $\Sigma^a_{ij} = \text{cov}(Y^a_i,Y^a_j)$, $1 \leq i,j \leq k$. Note that the counterfactual parameters $m^a$ and $\Sigma^a$ are unknown and must be estimated.
The set of constraints $\mathcal{S}^a$ is, by default, defined as
\begin{align} \label{eqn:default-form-constraints}
    \mathcal{S}^a = \{w \mid w^\top 1 = 1, \,  w \geq 0, \, w^\top m^a \geq \text{r}_{min} \},
\end{align}
which ensures that the solution is a vector of convex weights yielding a counterfactual mean no less than a user-specified threshold $\text{r}_{min} \in \R$. While feasible sets of this form are commonly used in conventional mean-variance optimization problems, our framework can accommodate a broader class of constraints. For example, one may consider a general set of linear constraints
\begin{align} \label{eqn:general-form-constraints}
\mathcal{S}^a = \left\{ w \in \mathbb{R}^k \;\middle\vert\; B^a w \leq c^a \right\},
\end{align}
where each element of the matrix $B^a \in \R^{r \times k}$ and the vector $c^a \in \R^r$ can be defined as a functional of the counterfactual distribution under treatment level $A = a$, without affecting the validity of the subsequent results. Since \eqref{eqn:default-form-constraints} is a special case of \eqref{eqn:general-form-constraints}, we refer to \eqref{eqn:general-form-constraints} as our constraint set throughout the paper.

This will be discussed in greater detail in the following section. $\lambda \geq 0$ is a user-determined risk tolerance coefficient that quantifies {their} tolerance towards the risk. The higher the value of $\lambda$, the larger the variance the user is willing to tolerate in order to maximize the reward.

To identify $m^a$ and $\Sigma^a$, that is, to express these counterfactual quantities in terms of the observed data distribution $\Pb$, we impose the following standard causal assumptions \citep[e.g.,][Chapter 12]{imbens2015causal}: for any $a \in \mathcal{A}$,
\begin{itemize}[label={}, leftmargin=*]
	\item \customlabel{assumption:c1}{(C1)} \emph{Consistency}: $Y=Y^a$ if $A=a$
	\item \customlabel{assumption:c2}{(C2)} \emph{No unmeasured confounding}: $A \ind Y^a \mid X$
	\item \customlabel{assumption:c3}{(C3)} \emph{Positivity}: $\Pb(A=a|X)>0  \text{ a.s. }$
\end{itemize}
Assumptions \ref{assumption:c1} - \ref{assumption:c3} are maintained throughout this paper. Under these assumptions, for all $i,j \in \{1,...,k\}$, $\E[Y^a_i]$ and $\E[Y^a_iY^a_j]$ are identified as $\E\{\E[Y_i \mid X, A=a]\}$ and $\E\{\E[Y_iY_j \mid X, A=a]\}$, respectively, and thus estimable from the observed sample. Standard estimation strategies in causal inference can be employed to estimate the counterfactual components $m^a$ and $\Sigma^a$, as discussed in detail in the following section.

\textbf{Notation.}
\cmmnt{Here we give some notation used throughout this paper. When used as subscript letters, we let $t$ and $i,j$ generally denote indices for observed sample and different outcomes, respectively.} For clarity, we use the subscripts $i, j$ only to index the different outcome variables $Y_1, \ldots, Y_k$, and we reserve the subscript $t$ to index the samples $Z_1, \ldots Z_n$. For any fixed vector $v$ and matrix $M$, we let $\Vert v \Vert_2$ and $\Vert M \Vert_F$ denote the Euclidean norm (or $L_2$-norm) and Frobenius norm, respectively. $\Vert \cdot \Vert_2$ is understood as the spectral norm when it is used with a matrix. \cmmnt{We let $\mathbb{B}_{\delta}(\bar{z})$ denote the open ball with radius $\delta > 0$ around the point $\bar{z}$ with \kim{\sout{\mishler{metric}}} $\Vert \cdot \Vert_2$ (unless otherwise mentioned), i.e., $\mathbb{B}_{\delta}(\bar{z})= \{z \mid \Vert z - \bar{z} \Vert_2 < \delta \}$.} Let $\Pn$ denote the empirical measure over $(Z_1,...,Z_n)$. Given a fixed operator $h$ (e.g., an estimated function), we let $\Pb$ denote the conditional expectation over a new independent observation $Z$, as in $\Pb(h)=\Pb\{h(Z)\}=\int h(z)d\Pb(z)$. Further, we use $\Vert h \Vert_{2,\Pb}$ to denote the $L_2(\Pb)$ norm of $h$ defined by $\Vert h \Vert_{2,\Pb} = \left[\Pb (h^2) \right]^{1/2} = \left[\int h(z)^2 d\Pb(z)\right]^{1/2}$. Lastly, we let $\sol(\mathsf{P})$ denote the set of optimal solutions of an optimization program $\mathsf{P}$. 

\section{Estimation and Inference} \label{sec:estimation-inference}

Leveraging tools from semiparametric theory in causal inference and recent advances in counterfactual prediction, we develop a nonparametric estimator for the optimal solution to \ref{eqn:couterfactual-MV}, and establish that it attains $\sqrt{n}$ convergence rates and asymptotic normality under mild regularity conditions. To this end, we first present efficient estimators for the counterfactual mean ($m^a$) and covariance ($\Sigma^a$). To simplify notation, we introduce the following nuisance functions
\begin{align*}
    & \pi_a(X)=\Pb[A=a \mid X], \\
    & \mu_i(X,a)=\E[Y_i \mid X, A=a], \\
    & \sigma_{ij}(X,a) = \E[Y_iY_j \mid X, A=a],
\end{align*}
and let $\widehat{\pi}_a, \widehat{\mu}_i,$ and $\widehat{\sigma}_{ij}$ be some estimators of ${\pi}_a, {\mu}_i,$ and ${\sigma}_{ij}$, respectively. Further, we let
\begin{align*}
    & \phi^a_i(Z;\eta_{i}) = \frac{\mathbbm{1}(A=a)}{\pi_a(X)}\left\{Y_i - \mu_i(X,A)\right\} + \mu_i(X,a), \\
    & \phi^a_{ij}(Z;\eta_{ij}) = \frac{\mathbbm{1}(A=a)}{\pi_a(X)}\left\{Y_iY_j - \sigma_{ij}(X,A)\right\} + \sigma_{ij}(X,a),
\end{align*}
denote the uncentered efficient influence functions for the parameters ${\psi}^a_{i} \coloneqq \E[Y^a_i] = \E\{\E[Y_i \mid X, A=a]\}$ and ${\psi}^a_{ij} \coloneqq \E[Y^a_iY^a_j] = \E\{\E[Y_iY_j \mid X, A=a]\}$, with the relevant nuisance functions collectively denoted by $\eta_{i}=\{\pi_a(X), \mu_i(X,A)\}$ and $\eta_{ij}=\{\pi_a(X), \sigma_{ij}(X,A)\}$, respectively. 

Estimation of mean counterfactual outcomes such as ${\psi}^a_{i}$ under the standard identification assumptions \ref{assumption:c1} - \ref{assumption:c3} has been extensively studied in the causal inference literature, e.g., for the average treatment effect. The most commonly used estimators include the plug-in (PI) regression, also known as g-computation, and inverse probability weighting (IPW), whose estimation errors are directly influenced by the convergence rates of the underlying nuisance estimators. Semiparametric (or doubly robust) estimators are another well-established class of methods. They can be viewed as augmented versions of plug-in or IPW estimators, incorporating an additional bias-correction term to improve robustness and efficiency \citep{robins1995semiparametric, robins2000inference}. Semiparametric estimators possess several appealing properties: (1) they can attain the fast parametric $\sqrt{n}$ convergence rate even when all nuisance functions are estimated flexibly at slower, nonparametric rates, and (2) they achieve asymptotic normality with semiparametric efficiency under standard regularity conditions \citep[][]{kennedy2016semiparametric, kennedy2024semiparametric}.

For $m^a_i = \E[Y^a_i]$, ${\Sigma}^a_{ij} = \E[Y^a_iY^a_j] - \E[Y^a_i]\E[Y^a_j]$, the corresponding semiparametric estimators are constructed as
\begin{equation} \label{eqn:return-estimator}
\begin{aligned} 
    \widehat{m}^a_i(Z;\widehat{\eta}_i)  = \Pn\left\{ \widehat{\phi}^a_i(Z) \right\},
\end{aligned}
\end{equation}
\begin{equation} \label{eqn:cov-estimator}
\begin{aligned}
    \widehat{\Sigma}^a_{ij}(Z;\widehat{\eta}_i, \widehat{\eta}_j, \widehat{\eta}_{ij})
    = \Pn\left\{\widehat{\phi}^a_{ij}(Z)\right\}  - \Pn\left\{\widehat{\phi}^a_i(Z) \right\}
    \Pn\left\{\widehat{\phi}^a_j(Z)\right\},
\end{aligned}
\end{equation}
respectively, where $\widehat{\phi}^a_i(Z) = {\phi}^a_i(Z; \widehat{\eta}_i)$ and $\widehat{\phi}^a_{ij}(Z) = {\phi}^a_{ij}(Z; \widehat{\eta}_{ij})$. We now state the following assumptions pertaining to the nuisance function estimators:
\begin{enumerate}[label={}, leftmargin=*]
	    \item \customlabel{assumption:B1}{(B1)} $\widehat{\pi}_a, \widehat{\mu}_i, \widehat{\mu}_j, \widehat{\sigma}_{ij}$ are constructed using a single separate iid sample $\mathsf{D}_0^n = \{Z_{n+1}, ..., Z_{2n}\}$
	    \item \customlabel{assumption:B2}{(B2)}  $\Pb(\widehat{\pi}_a \in [\epsilon, 1-\epsilon]) = 1$ for some $\epsilon > 0$ 
	    \item \customlabel{assumption:B3}{(B3)} $\Vert \widehat{\mu}^a_i - {\mu}^a_i \Vert_{2,\Pb} = o_\Pb(1)$, $\Vert \widehat{\pi}^a_i - {\pi}^a_i \Vert_{2,\Pb} = o_\Pb(1)$, $\Vert \widehat{\sigma}^a_{ij} - {\sigma}^a_{ij} \Vert_{2,\Pb} = o_\Pb(1)$
	    \item \customlabel{assumption:B4}{(B4)}  
	    $   \begin{aligned}[t]
            &\Vert \widehat{\pi}_a - \pi_a \Vert_{2,\Pb} \max_{i} \Vert \widehat{\mu}_i - \mu_i \Vert_{2,\Pb} = o_\Pb(n^{-1/2}), \\
            &\Vert \widehat{\pi}_a - \pi_a \Vert_{2,\Pb} \max_{i,j} \Vert \widehat{\sigma}_{ij} - \sigma_{ij} \Vert_{2,\Pb} = o_\Pb(n^{-1/2})
            \end{aligned} 
        $
\end{enumerate}	    

The above assumptions are commonly used in semiparametric approaches in causal inference.
In this work, we employ sample splitting as described in \ref{assumption:B1} to permit the use of arbitrarily complex nuisance estimators \citep{kennedy2016semiparametric, Chernozhukov17, Chernozhukov18}. Specifically, the nuisance functions are estimated on an independent sample of size $n$, distinct from the estimation sample on which $\mathbb{P}_n$ operates; in fact, it suffices that the auxiliary sample be of order $O(n)$, that is, of the same asymptotic order as the estimation sample. (See Remark \ref{rmk:sample-splitting}). If one is willing to rely on appropriate empirical process conditions (e.g., Donsker or low-entropy type conditions \citep{van2000asymptotic}), then the nuisance estimators can be estimated on the same sample without \ref{assumption:B1}; however, this would limit the flexibility of the nuisance estimators. The requirement \ref{assumption:B4} that the second-order nuisance errors converge to zero at faster than $\sqrt{n}$ rates is a sufficient condition commonly found in standard semiparametric settings with finite-dimensional parameters \citep[e.g.,][]{kennedy2020optimal,kennedy2024semiparametric}. 

\begin{remark}[Sample splitting] \label{rmk:sample-splitting}
For nuisance estimation, we can always create separate independent samples by splitting the data in half (or in folds) at random; furthermore, full sample size efficiency can be attained by
swapping the samples as in cross-fitting \citep[e.g.,][]{zheng2010asymptotic, kennedy2016semiparametric, Chernozhukov17, newey2018cross}. Following previous studies \citep[e.g.,][]{kennedy2020optimal, kennedy2021semiparametric}, for simplicity in the exposition we use a single split procedure in our analysis, as the extension to averages across independent splits is straightforward.
\end{remark}

In what follows, we present conditions under which the proposed estimators for $m^a$ and $\Sigma^a$, based on \eqref{eqn:return-estimator} and \eqref{eqn:cov-estimator}, are $\sqrt{n}$-consistent, asymptotically normal, and semiparametrically efficient.

\begin{lemma} \label{lem:double-robust}
Let $\widehat{m}^a$ and $\widehat{\Sigma}^a$ denote the estimators whose components are defined by \eqref{eqn:return-estimator} and \eqref{eqn:cov-estimator}, respectively. Under Assumptions \ref{assumption:B1} - \ref{assumption:B3}, we have
\begin{align*}
    \Vert \widehat{m}^a - m^a \Vert_2 &= O_\Pb\left(\Vert \widehat{\pi}_a - \pi_a \Vert_{2,\Pb} \max_{i} \Vert \widehat{\mu}_i - \mu_i \Vert_{2,\Pb} \right) + O_\Pb\left(n^{-1/2}\right),\\
    \Vert \widehat{\Sigma}^a - \Sigma^a \Vert_2 &= 
    O_\Pb\left(\Vert \widehat{\pi}_a - \pi_a \Vert_{2,\Pb} \left\{ \max_{i} \Vert \widehat{\mu}_i - \mu_i \Vert_{2,\Pb} +\max_{i,j} \Vert \widehat{\sigma}_{ij} - \sigma_{ij} \Vert_{2,\Pb} \right\} \right)+O_\Pb\left(n^{-1/2}\right).
\end{align*}
If we further assume the nonparametric conditions \ref{assumption:B4}, then
\begin{align} 
    \sqrt{n}(\widehat{m}^a_i - m^a_i) & \xrightarrow[]{d} N(0,\var(\phi^a_i)), \label{eqn:asymptotic-m} \\
    \sqrt{n}\left( \widehat{\Sigma}^a_{ij} - {\Sigma}^a_{ij} \right) 
    & \xrightarrow[]{d} [-\psi^a_j, -\psi^a_i, 1] N\left(0, \cov\left([{\phi}_{i}, {\phi}_{j}, {\phi}_{ij}]^\top\right) \right), \label{eqn:asymptotic-sigma}
\end{align}
and $\widehat{m}^a,  \widehat{\Sigma}^a$ are efficient, meaning that there exist no other regular estimators that are asymptotically unbiased and have smaller variance.
\end{lemma}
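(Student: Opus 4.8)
The plan is to treat $\widehat{m}^a_i$ and the first term $\Pn\{\widehat{\phi}^a_{ij}\}$ of $\widehat{\Sigma}^a_{ij}$ by a single argument, since $\phi^a_{ij}$ has exactly the structure of $\phi^a_i$ with the outcome $Y_i$ replaced by the product $Y_iY_j$ and the regression $\mu_i$ replaced by $\sigma_{ij}$. For a generic augmented-IPW functional I would write, for a fixed realization of the nuisance estimate $\widehat{\eta}$ (legitimately fixed because sample splitting \ref{assumption:B1} makes $\widehat{\eta}$ independent of the estimation sample on which $\Pn$ acts),
\[
\Pn\{\widehat{\phi}\} - \psi = (\Pn - \Pb)(\phi) + (\Pn - \Pb)(\widehat{\phi} - \phi) + \{\Pb(\widehat{\phi}) - \psi\},
\]
where $\phi = \phi(\cdot;\eta)$ is evaluated at the true nuisances. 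The first term is a sample average of a fixed, square-integrable function and is $O_\Pb(n^{-1/2})$ by the central limit theorem. I would control the second (empirical process) term by conditioning on the nuisance training sample: its conditional variance equals $n^{-1}\var(\widehat{\phi} - \phi \mid \mathsf{D}_0^n) \le n^{-1}\Vert \widehat{\phi}-\phi \Vert_{2,\Pb}^2$, and since \ref{assumption:B2}--\ref{assumption:B3} give $\Vert \widehat{\phi}-\phi \Vert_{2,\Pb} = o_\Pb(1)$, a conditional Chebyshev argument forces this term to be $o_\Pb(n^{-1/2})$.

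The crux is the third (bias) term. Exploiting \ref{assumption:c1}--\ref{assumption:c3} and the explicit form of $\phi$, a direct computation of $\Pb\{\widehat{\phi}\}$ yields the product (von Mises) remainder
\[
\Pb\{\widehat{\phi}^a_i\} - m^a_i = \E\left[\frac{\pi_a(X)-\widehat{\pi}_a(X)}{\widehat{\pi}_a(X)}\,\bigl(\mu_i(X,a)-\widehat{\mu}_i(X,a)\bigr)\right],
\]
and similarly with $(\sigma_{ij},\widehat{\sigma}_{ij})$ in place of $(\mu_i,\widehat{\mu}_i)$ for $\phi^a_{ij}$; using the overlap bound \ref{assumption:B2} and Cauchy--Schwarz this is $O_\Pb(\Vert \widehat{\pi}_a-\pi_a \Vert_{2,\Pb}\Vert \widehat{\mu}_i-\mu_i \Vert_{2,\Pb})$ (resp. with $\widehat{\sigma}_{ij}$). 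This is precisely where double robustness appears, since the bias vanishes if either nuisance is consistent. Assembling the three terms and maximizing over $i$ (and over $i,j$), then passing from coordinatewise to $\Vert\cdot\Vert_2$ / $\Vert\cdot\Vert_F$ bounds by finite-dimensional norm equivalence, gives the first two displays once I handle the product $\Pn\{\widehat{\phi}^a_i\}\Pn\{\widehat{\phi}^a_j\}$ in $\widehat{\Sigma}^a_{ij}$: writing $\widehat{m}^a_i\widehat{m}^a_j - \psi^a_i\psi^a_j = \widehat{m}^a_i(\widehat{m}^a_j-\psi^a_j)+\psi^a_j(\widehat{m}^a_i-\psi^a_i)$ and using $\widehat{m}^a_i = \psi^a_i + o_\Pb(1) = O_\Pb(1)$ reduces the product error to the already-controlled mean errors.

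Under the additional rate condition \ref{assumption:B4}, each bias term is $o_\Pb(n^{-1/2})$, so every estimator is asymptotically linear with influence function equal to its (uncentered) efficient influence function: $\sqrt{n}(\widehat{m}^a_i - m^a_i) = \sqrt{n}(\Pn-\Pb)(\phi^a_i)+o_\Pb(1)$, and the CLT gives \eqref{eqn:asymptotic-m}. For \eqref{eqn:asymptotic-sigma} I would view $\widehat{\Sigma}^a_{ij} = g(\Pn\widehat{\phi}^a_{ij},\Pn\widehat{\phi}^a_i,\Pn\widehat{\phi}^a_j)$ with $g(u,v,w)=u-vw$, note that the three inner averages are jointly asymptotically linear so that $\sqrt{n}$ times their centered vector converges to $N(0,\cov([\phi_i,\phi_j,\phi_{ij}]^\top))$, and apply the delta method; the gradient $\nabla g$ evaluated at $(\psi^a_{ij},\psi^a_i,\psi^a_j)$ equals $(1,-\psi^a_j,-\psi^a_i)$, which reproduces the stated coefficient vector $[-\psi^a_j,-\psi^a_i,1]$ after reordering to match the covariance argument. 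Efficiency of $\widehat{m}^a$ follows because $\phi^a_i$ is the efficient influence function for $m^a_i$; efficiency of $\widehat{\Sigma}^a$ follows because a smooth transformation of an efficient estimator remains efficient, its influence function being the delta-method image just computed.

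The main obstacle I anticipate is the interplay of the bias (von Mises) expansion with the nonlinear product structure of $\widehat{\Sigma}^a_{ij}$: unlike a pure counterfactual mean, the covariance is a smooth but nonlinear functional of three counterfactual moments, so I must (i) verify the product-of-errors form of each bias term carefully enough to read off double robustness and the exact second-order rate, and (ii) track how the mean errors propagate through the product term without degrading the stated rate. The delta-method step for the limiting law is then routine once joint asymptotic linearity of the three moment estimators is established.
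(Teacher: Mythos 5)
Your proposal is correct and follows essentially the same route as the paper: both rest on the asymptotic linearity of each $\Pn\{\widehat{\phi}^a_i\}$ and $\Pn\{\widehat{\phi}^a_{ij}\}$ with a second-order (product-of-nuisance-errors) bias term, the paper citing \citet{kennedy2016semiparametric} for exactly the three-term decomposition (CLT term, sample-split empirical-process term, von Mises remainder) that you derive explicitly, and both then pass from entrywise to Frobenius/spectral norms and apply the delta method to $g(u,v,w)=u-vw$ with gradient $(1,-\psi^a_j,-\psi^a_i)$ for the limiting law of $\widehat{\Sigma}^a_{ij}$. Your explicit verification of the bias identity $\Pb\{\widehat{\phi}^a_i\}-m^a_i=\E[\{\pi_a-\widehat{\pi}_a\}\{\mu_i-\widehat{\mu}_i\}/\widehat{\pi}_a]$ and your handling of the product term $\widehat{m}^a_i\widehat{m}^a_j-\psi^a_i\psi^a_j$ are both correct and merely make explicit what the paper leaves to the cited reference.
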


A proof of Lemma \ref{lem:double-robust}, along with the proofs of all other results, is provided in the supplementary material.
The result in Lemma \ref{lem:double-robust} is essentially due to the fact that our estimators are built from the efficient influence function, leading to second-order products of nuisance estimation errors.

Having established efficient estimation strategies for the counterfactual components of the optimization problem \ref{eqn:couterfactual-MV}, we now turn to the estimation and inference of its optimal solutions. Let $\widehat{w}$ be an optimal solution of the \emph{approximating program} of \ref{eqn:couterfactual-MV} in which we replace $m^a$, $\Sigma^a$ with their estimates $\widehat{m}^a$, $\widehat{\Sigma}^a$, respectively. Then $\widehat{w}$ is our proposed estimator for the optimal solution to \ref{eqn:couterfactual-MV}. Let $r_n$ be any sequence such that
\begin{align} \label{eqn:nuisance-product-rate}
    \Vert \widehat{\pi}_a - \pi_a \Vert_{2,\Pb} \left\{ \max_j \Vert \widehat{\mu}_j - \mu_j \Vert_{2,\Pb} + \max_{i,j} \Vert \widehat{\sigma}_{ij} - \sigma_{ij} \Vert_{2,\Pb} \right\} = O_\Pb\left( r_n \right).
\end{align}
The following result provides the rates of convergence for $\widehat{w}$, which follows directly from Theorem 2 in \citet{kim2025semiparametric}.
\begin{theorem} \label{cor:rates-opt-sol-MV}
Assume that $\Sigma^a$ is positive definite and let $w^* \equiv \sol(\text{\ref{eqn:couterfactual-MV}})$. Then under Assumptions \ref{assumption:B1}, \ref{assumption:B2},  \ref{assumption:B3}, 
\[
\Vert \widehat{w} - w^* \Vert_2 = O_\Pb\left(r_n \vee n^{-1/2}\right).
\]
If we additionally assume the second-order nonparametric conditions in \ref{assumption:B4}, then this becomes 
\[
\Vert \widehat{w} - w^* \Vert_2 = O_\Pb\left(n^{-1/2}\right).
\]
\end{theorem}

Theorem \ref{cor:rates-opt-sol-MV} shows that the proposed estimator can attain fast $\sqrt{n}$ rates even when we estimate all the nuisance regression functions at much slower rates; for example, it suffices that all the nuisance functions converge to their true values at a faster-than-$n^{1/4}$ rate in $L_2(\Pb)$ norm. This enables the use of a broad class of nonparametric regression techniques, depending on structural assumptions such as sparsity or smoothness \citep[][Section 4]{kennedy2016semiparametric}.

Since inference for optimal solution estimators is commonly performed using bootstrap methods, the case in which $\sqrt{n}(\widehat{w} - w^*)$ converges in distribution to a multivariate normal random vector is of particular importance. In the absence of this guarantee, the bootstrap procedure may yield inconsistent inference for the solution estimators \citep{fang2019inference}. Establishing the asymptotic distribution of $\widehat{w}$ requires stronger assumptions than those needed to ensure consistency.

For any feasible point $\bar{w} \in \mathcal{S}^a$ in \ref{eqn:couterfactual-MV}, we let
\begin{align*}
L(\bar{w},\bar{\gamma}) = {1}/{2}\bar{w}^\top\Sigma^a \bar{w}  - \lambda \bar{w}^\top m^a + \underset{j \in J_0(\bar{w})}{\sum}\bar{\gamma}_j \left( {B^a_j}^\top \bar{w} - c^a_j \right)
\end{align*}
denote the corresponding Lagrangian function with multipliers $\bar{\gamma} \geq 0$, and define the \emph{active index set} by
\[
J_0(\bar{w}) = \{1\leq j \leq r : {B^a_j}^\top \bar{w} - c^a_j = 0 \}.
\]
Then we require the following regularity condition.
\begin{enumerate}[label={}, leftmargin=*]
	    \item \customlabel{assumption:B5}{(B5)} 
For $w^* \in \sol({\text{\ref{eqn:couterfactual-MV}}})$ and the corresponding multipliers ${\gamma}^*$, we assume that $\{{B^a_j}^\top: j \in J_0(w^*)\}$ are linearly independent, and that the KKT conditions
\begin{align*}
    \Sigma^a w^*  - \lambda m^a + \underset{j \in J_0(w^*)}{\sum}{\gamma}^*_j {B^a_j}^\top w^* = 0, \quad \diag({\gamma}^*)(B^a w^* - c^a) = 0
\end{align*}
are satisfied such that
$
    {\gamma}^*_j > 0, \forall j \in J_0(w^*).
$    
\end{enumerate}	  

Assumption \ref{assumption:B5} ensures that the Linear Independence Constraint Qualification (LICQ) and Strict Complementarity (SC) hold at $w^* \in \sol({\text{\ref{eqn:couterfactual-MV}}})$. LICQ guarantees the validity of first-order KKT conditions at optimal solutions, while SC requires strictly positive dual variables for active constraints. Both conditions are commonly imposed to ensure well-posedness and tractability in nonlinear programming \citep[e.g.,][]{still2018lectures}. The following result provides sufficient conditions for establishing both $\sqrt{n}$-consistency and asymptotic normality of $\widehat{w}$.

\begin{theorem} \label{cor:asymptotics-opt-sol-MV}
Suppose that $\Sigma^a$ is positive definite and Assumptions \ref{assumption:B1}-\ref{assumption:B5} hold. For matrices $\textsf{C}_{ac} = \left[B^a_j, \, j \in J_0(w^*) \right]$ and $\gamma^*_{ac} = [\gamma^*_j, j \in J_0(w^*)]$, 
\begin{align} \label{eqn:closed-form-asymptotics}
    n^{\frac{1}{2}} \left(\widehat{w} - w^* \right) \xrightarrow{d} \begin{bmatrix}
        \Sigma^a & \textsf{C}_{ac}^\top \\
        \textsf{C}_{ac} & 0
        \end{bmatrix}^{-1} \begin{bmatrix}
        \bm{1} \\
        \diag(\gamma^*_{ac})\bm{1} 
        \end{bmatrix}^\top
        \mathcal{Z}_{w^*},
\end{align}
where $\mathcal{Z}_{w^*}$ is a mean-zero multivariate normal random vector such that
\begin{align*}
        n^{1/2}\begin{bmatrix}
        (\widehat{\Sigma}^a  - \Sigma^a) w^* + \sum_j\gamma_{j \in J_0(w^*)}^*\left\{ \widehat{B}^a_j - B^a_j \right\} \\
        -(\widehat{\textsf{C}}_{ac} - \textsf{C}_{ac})w^*
        \end{bmatrix} \xrightarrow{d} \mathcal{Z}_{w^*}.
\end{align*}
\end{theorem}

The above result is derived using Theorem 3 of \citet{kim2025semiparametric}.
Theorem \ref{cor:asymptotics-opt-sol-MV} implies that asymptotically valid confidence intervals and hypothesis tests can be constructed using the bootstrap method.

\section{Calibration of $\widehat{\Sigma}^a$} \label{sec:calibration-Sigma}

In the previous section, we introduced a semiparametric estimator for the counterfactual covariance matrix $\widehat{\Sigma}^a$, which serves as the quadratic component of the approximating program for \ref{eqn:couterfactual-MV}. However, in the absence of structural assumptions on the dependence (e.g., diagonality or factor models), the estimated covariance matrix may be ill-conditioned or fail to be positive (semi)definite. In our setting, such issues can substantially compromise the accuracy of the resulting optimal solution estimates. In this section, we present two calibration methods for the counterfactual covariance estimator that mitigate these challenges while preserving the convergence rate of the optimal solution estimator.

\subsection{Optimal Linear Shrinkage Estimation} \label{subsec:counterfactual-shrinkage}

If $\widehat{\Sigma}^a$ is ill-conditioned or nearly singular, solving linear systems involving $\widehat{\Sigma}^a$ becomes highly susceptible to numerical instability. As a result, each iteration of standard algorithms used to solve our approximating program may incur substantial numerical errors, potentially causing the algorithm to diverge by disrupting the descent direction. These issues can significantly impair the finite-sample accuracy of the estimated optimal solutions, despite the favorable asymptotic properties of the proposed estimator.

Covariance shrinkage offers a promising solution to this issue \citep[e.g.,][]{yang1994estimation, daniels1999nonconjugate, daniels2001shrinkage, ledoit2020power}. The core idea is to balance bias and variance by shrinking $\widehat{\Sigma}^a$ toward a target matrix, often interpreted as a reference prior. In this subsection, we develop a linear shrinkage estimator for ${\Sigma}^a$ by adapting the method of \citet{ledoit2004well}, one of the most widely used approaches for regularizing sample covariance matrices. However, adapting their method to our setting is nontrivial, as counterfactual outcomes are unobserved, rendering the direct use of sample covariance infeasible.

Let $\mathbb{I}$ denote the identity matrix. Our goal is to find the optimal linear combination of $\mathbb{I}$ and $\widehat{\Sigma}^a$ with minimum expected quadratic loss, which is represented by the solution of the following program\footnote{{Note that $\forall i,j$, $\widehat{\Sigma}^a_{ij}$ depends on the nuisance estimates $\widehat{\eta}_i, \widehat{\eta}_j, \widehat{\eta}_{ij}$, each of which is a function of {a separate independent sample} $\mathsf{D}_0^n$. {(See Assumption \ref{assumption:B1}.)} So in our notation, $\Pb\Vert \Sigma_S - \Sigma^a \Vert_F^2 = \E\left[ \Vert \Sigma_S - \Sigma^a \Vert_F^2 \mid \mathsf{D}_0^n \right]$.}}:
\begin{equation}
\label{eqn:opt-shrinkage}
\begin{aligned}
    & \underset{\rho, \nu \in \mathbb{R}}{\text{minimize}} \quad  \Pb\Vert \Sigma_S - \Sigma^a \Vert_F^2 \\
    & \text{subject to } \quad \Sigma_S= \rho\nu\mathbb{I} + (1-\rho)\widehat{\Sigma}^a.
\end{aligned}    
\end{equation}

Let ${\Sigma}^*_S \coloneqq \rho^*\nu^*\mathbb{I} + (1-\rho^*)\widehat{\Sigma}^a$ where $(\rho^*, \nu^*)$ is the optimal solution of \eqref{eqn:opt-shrinkage}. $\Sigma^*_S$ can be regarded as an oracle estimator that reduces the expected error of $\widehat{\Sigma}^a$ in the Frobenius norm (conditional on the nuisance parameter estimates) by shrinking it toward the matrix $\nu^*\mathbb{I}$. It is an oracle in the sense that the optimal shrinkage parameters $\rho^*$ and $\nu^*$ are unknown. In parallel to \cite{ledoit2004well}, we propose to estimate $\rho^*$ and $\nu^*$ by \cmmnt{\kim{If $\widehat{\Sigma}^a$ is relatively accurate so $\Pb\Vert \widehat{\Sigma}^a  - \Sigma^a \Vert_F^2$ is small enough, there is little need to shrink it; otherwise, we can improve the accuracy to a large extent by shrinking it toward the shrinkage target  $\nu^*\mathbb{I}$ with the optimal shrinkage intensity $\rho^*$.} Since computing $(\rho^*, \nu^*)$ requires knowledge of an unknown true parameter $\Sigma^a$, similarly to \cite{ledoit2004well}, we propose to estimate $\rho^*, \nu^*$ by}
$\widehat{\rho} = \frac{\widehat{\beta}^2}{\widehat{\delta}^2}$ and $\widehat{\nu}  = \frac{1}{k}\sum_{i=1}^k \widehat{\Sigma}^a_{ii}$, respectively, where $\widehat{\delta}^2 = \Vert \widehat{\Sigma}^a - \widehat{\nu}\mathbb{I} \Vert_F^2$ and $\widehat{\beta}^2=\min\{\tilde{\beta}^2, \widehat{\delta}^2 \} \quad \text{with} \quad \tilde{\beta}^2 = \frac{1}{n^2}\sum_{t=1}^n \Vert \widetilde{\Sigma}_{t}- \widehat{\Sigma}^a \Vert_F^2$, and the $(i,j)$-entry of the matrix $\widetilde{\Sigma}_{t}$ is defined by $\widehat{\phi}^a_{ij}(Z_{t})  - \Pn\{\widehat{\phi}^a_i(Z) \}\Pn\{\widehat{\phi}^a_j(Z)\}$, $1\leq \ i,j \leq k$, $1 \leq \ t \leq n$. Consequently, our proposed estimator for the optimal linear shrinkage is given by
\begin{align} \label{eqn:cov-shrinkage-est}
    \widehat{\Sigma}^*_S= \widehat{\rho}\widehat{\nu}\mathbb{I} + (1-\widehat{\rho})\widehat{\Sigma}^a.
\end{align}
The following theorem establishes the consistency of $\widehat{\Sigma}^*_S$.

\begin{theorem} \label{thm:consistency-cov-shrinkage}
Assume that \ref{assumption:B1} - \ref{assumption:B3} hold and that we have an initial estimate $\widehat{\Sigma}^a$ via \eqref{eqn:cov-estimator}.
Suppose further that $\mathbb{E}\left[\|Y_i^a\|^4\right] < \infty$ for all $i = 1, \ldots, k$.
Recall that $r_n$ is the rate given in \eqref{eqn:nuisance-product-rate} and, let $\widehat{w}_{S}$ denote the corresponding estimate for the optimal solution derived by substituting $\widehat{\Sigma}^*_S$ for $\widehat{\Sigma}^a$ in our approximating program. 
Then,
\begin{align*}
 \Vert \widehat{\Sigma}^*_S - {\Sigma}^*_S \Vert_F = O_\Pb\left(n^{-1/2} \vee r_n\right), \quad
\Vert \widehat{\Sigma}^*_S - \widehat{\Sigma}^a \Vert_F = O_\Pb\left(n^{-1/2} \vee r_n\right),
\end{align*}
and
\[
\left\Vert \widehat{w}_{S}- w^* \right\Vert_2 = O_\Pb\left(n^{-1/2} \vee r_n \right).
\]
\end{theorem}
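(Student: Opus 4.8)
The plan is to establish the three displays in order, reducing each to Lemma \ref{lem:double-robust}, the algebraic structure of the shrinkage map, and the optimality that defines the oracle $\Sigma^*_S$. The simplification I would exploit is that neither the second nor the first display requires a closed form for the oracle pair $(\rho^*,\nu^*)$: because $\widehat{\Sigma}^a$ is only conditionally unbiased when the nuisances are correct, a direct Ledoit--Wolf-style expansion of $(\rho^*,\nu^*)$ would carry awkward bias cross-terms, so I would instead bound the two shrinkage corrections $\Vert\widehat{\Sigma}^*_S-\widehat{\Sigma}^a\Vert_F$ and $\Vert\Sigma^*_S-\widehat{\Sigma}^a\Vert_F$ directly and finish by the triangle inequality. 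Writing $\epsilon_n \coloneqq n^{-1/2}\vee r_n$, I first record from Lemma \ref{lem:double-robust} that $\Vert\widehat{m}^a - m^a\Vert_2 = O_\Pb(\epsilon_n)$ and $\Vert\widehat{\Sigma}^a - \Sigma^a\Vert_F = O_\Pb(\epsilon_n)$, using equivalence of the spectral and Frobenius norms for fixed $k$.

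For the second display I would use the identity $\widehat{\Sigma}^*_S - \widehat{\Sigma}^a = \widehat{\rho}(\widehat{\nu}\mathbb{I} - \widehat{\Sigma}^a)$, so that $\Vert\widehat{\Sigma}^*_S - \widehat{\Sigma}^a\Vert_F = \widehat{\rho}\,\widehat{\delta} = \widehat{\beta}^2/\widehat{\delta}$. Since $\widehat{\beta}^2 = \min\{\tilde{\beta}^2,\widehat{\delta}^2\}\le\widehat{\delta}^2$ gives $\widehat{\beta}\le\widehat{\delta}$, we get $\widehat{\beta}^2/\widehat{\delta}\le\widehat{\beta}$, and it remains only to show $\widehat{\beta} = O_\Pb(n^{-1/2})$. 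This follows from $\widehat{\beta}^2\le\tilde{\beta}^2 = \tfrac1n\sum_{i,j}\widehat{\mathrm{var}}_n(\widehat{\phi}^a_{ij})$ together with the finite fourth-moment hypothesis: conditionally on $\mathsf{D}_0^n$ each empirical variance converges to $\mathrm{var}(\phi^a_{ij}\mid\mathsf{D}_0^n)$, which is finite because $\phi^a_{ij}$ is built from $Y_iY_j$ and $\widehat{\pi}_a\ge\epsilon$ by \ref{assumption:B2}. Hence $\tilde{\beta}^2 = O_\Pb(n^{-1})$ and the second display in fact holds at the faster rate $O_\Pb(n^{-1/2})$.

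For the first display I would exploit the optimality defining $\Sigma^*_S$. Taking $\rho=0$ in program \eqref{eqn:opt-shrinkage} produces the feasible point $\widehat{\Sigma}^a$, so optimality yields $\Pb\Vert\Sigma^*_S - \Sigma^a\Vert_F^2\le\Pb\Vert\widehat{\Sigma}^a - \Sigma^a\Vert_F^2$, the conditional mean-squared error of $\widehat{\Sigma}^a$, which is $O_\Pb(\epsilon_n^2)$ by the squared-bias/variance decomposition underlying Lemma \ref{lem:double-robust} (the variance term requiring the finite fourth moments). A conditional Markov argument converts this into $\Vert\Sigma^*_S - \Sigma^a\Vert_F = O_\Pb(\epsilon_n)$, and the triangle inequality with Lemma \ref{lem:double-robust} gives $\Vert\Sigma^*_S - \widehat{\Sigma}^a\Vert_F = O_\Pb(\epsilon_n)$. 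Combining with the second display, $\Vert\widehat{\Sigma}^*_S - \Sigma^*_S\Vert_F\le\Vert\widehat{\Sigma}^*_S - \widehat{\Sigma}^a\Vert_F + \Vert\widehat{\Sigma}^a - \Sigma^*_S\Vert_F = O_\Pb(\epsilon_n)$.

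For the third display I would first note $\Vert\widehat{\Sigma}^*_S - \Sigma^a\Vert_F\le\Vert\widehat{\Sigma}^*_S - \widehat{\Sigma}^a\Vert_F + \Vert\widehat{\Sigma}^a - \Sigma^a\Vert_F = O_\Pb(\epsilon_n)$, so that $\widehat{w}_S$ solves the approximating program of \ref{eqn:couterfactual-MV} whose objective data $(\widehat{m}^a,\widehat{\Sigma}^*_S)$ and induced constraint data $(\widehat{C},\widehat{d})$ (depending on $\widehat{m}^a$ through $w^\top m^a\ge \mathrm{r}_{min}$) all lie within $O_\Pb(\epsilon_n)$ of their targets. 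Positive-definiteness of $\Sigma^a$ makes the objective strictly convex, which supplies \ref{assumption:A2} (indeed $\nabla_x^2 f(w^*)=\Sigma^a\succ 0$) and forces $w^*$ to be unique, so $\dist(\widehat{w}_S,\sol(\text{\ref{eqn:couterfactual-MV}}))=\Vert\widehat{w}_S - w^*\Vert_2$; \ref{assumption:A1} holds by consistency. Theorem \ref{thm:general-nlp-rates} then yields $\Vert\widehat{w}_S - w^*\Vert_2 = O_\Pb(\max\{\Vert\widehat{m}^a - m^a\Vert_2,\Vert\widehat{\Sigma}^*_S-\Sigma^a\Vert_F,\Vert\widehat{C}-C\Vert_F,\Vert\widehat{d}-d\Vert_2\}) = O_\Pb(\epsilon_n)$. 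The main obstacle is the first display, and the right device is the optimality/feasibility argument above, which sidesteps expanding the oracle $(\rho^*,\nu^*)$ and its bias terms; the only remaining care is the standard passage from conditional $L_2$ bounds to $O_\Pb$ statements.
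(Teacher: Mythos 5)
Your proposal is correct, and for the two covariance displays it takes a genuinely different and more economical route than the paper. The paper introduces an intermediate oracle $\tilde{\Sigma}^*_S$ built from the population quantities $\beta^2 = \Pb\Vert\widehat{\Sigma}^a-\Sigma^a\Vert_F^2$, $\delta^2=\Pb\Vert\widehat{\Sigma}^a-\nu\mathbb{I}\Vert_F^2$, proves $\Vert\tilde{\Sigma}^*_S-\Sigma^*_S\Vert_F=o_\Pb(1)$ via a Lipschitz-stability argument for the parametric shrinkage program (Lemma \ref{lem:shrinkage-lemma-1}), and then separately establishes rate-consistency of each of $\widehat{\nu},\widehat{\delta},\widehat{\beta}$ by comparison with the infeasible sample covariance $\widehat{S}^a$ and Lemmas 3.3--3.4 of \citet{ledoit2004well} (Lemma \ref{lem:shrinkage-lemma-2}). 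You bypass all of this: for the second display you use the exact identity $\Vert\widehat{\Sigma}^*_S-\widehat{\Sigma}^a\Vert_F=\widehat{\beta}^2/\widehat{\delta}\le\widehat{\beta}$ together with $\tilde{\beta}^2=n^{-1}\sum_{i,j}\{n^{-1}\sum_t(\widehat{\phi}^a_{ij}(Z_t)-\Pn\widehat{\phi}^a_{ij})^2\}=O_\Pb(n^{-1})$ (which is where the fourth-moment and \ref{assumption:B2} hypotheses enter, exactly as in the paper's part iii), and for the first display you exploit that $\rho=0$ is feasible in \eqref{eqn:opt-shrinkage}, so $\Pb\Vert\Sigma^*_S-\Sigma^a\Vert_F^2\le\Pb\Vert\widehat{\Sigma}^a-\Sigma^a\Vert_F^2=O_\Pb((n^{-1/2}\vee r_n)^2)$ by the conditional $L_2$ bound of Proposition \ref{prop:bound-L2P-norm-Sigma}, after which conditional Markov and two triangle inequalities finish. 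This is cleaner and avoids the Ledoit--Wolf machinery entirely; what it gives up is any statement that $(\widehat{\rho},\widehat{\nu})$ actually converge to the oracle shrinkage parameters, which the paper's route delivers as a by-product and which would matter if one wanted to claim the estimator inherits the oracle's MSE \emph{improvement} rather than merely that shrinkage does not degrade the rate. Your third display is essentially the paper's argument (Theorem \ref{thm:general-nlp-rates} applied with the perturbed objective data, with $\Sigma^a\succ 0$ supplying \ref{assumption:A2} and uniqueness); the only caveat, shared with the paper, is that positive definiteness of $\Sigma^a$ is needed for that step even though the theorem statement does not list it explicitly. The one place to be slightly more careful in a full write-up is the claim that the conditional MSE of $\widehat{\Sigma}^a$ is $O_\Pb((n^{-1/2}\vee r_n)^2)$: this is an $L_2(\Pb)$ statement rather than an in-probability one, so it needs $\Vert\widehat{\phi}^a_{ij}\Vert_{2,\Pb}=O_\Pb(1)$ (hence the moment and boundedness conditions), which is precisely what Proposition \ref{prop:bound-L2P-norm-Sigma} supplies; you flag this, so there is no gap.
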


To the best of our knowledge, although we focus on the case of fixed and finite $k$, the proposed estimator $\widehat{\Sigma}^*_S$ in \eqref{eqn:cov-shrinkage-est} is the first attempt to apply the idea of shrinkage estimation to counterfactual covariance matrices. 
$\widehat{\Sigma}^*_S$ is guaranteed to be non-singular. 
The use of $\widehat{\Sigma}^*_S$ mitigates the limitations of $\widehat{\Sigma}^a$ discussed above, and Theorem \ref{thm:consistency-cov-shrinkage} establishes that this adjustment does not alter the convergence rate of the optimal solution estimator $\widehat{w}$. 


\subsection{Positive Definite Correction}
A key assumption underlying Theorem \ref{cor:asymptotics-opt-sol-MV} is that ${\Sigma}^a$ is positive definite (PD). However, in practice, there is no guarantee that the estimator $\widehat{\Sigma}^a$ is positive definite as well, which implies that the resulting approximating program may fail to be (strictly) convex. This poses practical challenges: standard optimization methods may become trapped in local optima, and the lack of strict convexity prevents the use of efficient quadratic programming solvers \citep[e.g.,][]{osqp2020}, which are particularly important in large-scale applications.

PD correction methods, which replace $\widehat{\Sigma}^a$ with a nearby PD approximation $\widehat{\Sigma}^*_{\text{cor}}$, can be employed to address this issue. The following theorem establishes that such corrections do not affect the convergence rates established in earlier results, provided that $\widehat{\Sigma}^*_{cor}$ and $\widehat{\Sigma}^a$ get arbitrarily close in probability at a sufficiently fast rate. For instance, $\widehat{\Sigma}^*_{\text{cor}}$ may be obtained using the algorithm of \citet{higham2002accuracy}, which computes the nearest positive definite matrix. Alternatively, one may construct $\widehat{\Sigma}^*_{\text{cor}}$ using the model-free calibration procedure proposed by \citet{huang2017calibration}, incorporating a minimum eigenvalue threshold that vanishes as $n \to \infty$.

\begin{theorem} \label{thm:pd-correction}
Assume that $\Sigma^a$ is PD and let $\widehat{\Sigma}^*_{cor}$ denote a symmetric PD matrix indexed by $\widehat{\Sigma}^a$ such that $\Vert \widehat{\Sigma}^*_{cor} - \widehat{\Sigma}^a\Vert_2 = o_\Pb(1)$, and let $\widehat{w}_{cor}$ be the corresponding estimate for the optimal solution derived by substituting $\widehat{\Sigma}^*_{cor}$ for $\widehat{\Sigma}^a$ in our approximating program. Then under Assumptions \ref{assumption:B1} - \ref{assumption:B3}, 
\[
\left\Vert \widehat{w}_{cor}- w^* \right\Vert_2 =  O_\Pb\left(\Vert\widehat{\Sigma}^*_{cor} - \widehat{\Sigma}^a\Vert_2 \vee r_n\vee n^{-1/2} \right).
\]
If $\widehat{\Sigma}^*_{cor} = \widehat{\Sigma}^a$ whenever $\widehat{\Sigma}^a$ is PD, meaning that $\widehat{\Sigma}^a$ is only replaced when it is not PD, then the righthand side simplifies to $O_\Pb\left(r_n\vee n^{-1/2} \right)$.
\end{theorem}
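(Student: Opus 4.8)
The plan is to recognize $\widehat{w}_{cor}$ as the optimal solution of an instance of the approximating program \ref{eqn:nlp-obs} in which the component estimate $\widehat{T}$ is replaced by a modified estimate $\widehat{T}_{cor}$ whose covariance block is $\widehat{\Sigma}^*_{cor}$ rather than $\widehat{\Sigma}^a$, while the mean block (and hence the constraint data $\widehat{C},\widehat{d}$ arising from $w^\top m^a \geq \text{r}_{min}$) is left unchanged. The result then follows by applying Theorem \ref{thm:general-nlp-rates} to this modified program, once we have (i) checked its hypotheses \ref{assumption:A1} and \ref{assumption:A2}, and (ii) controlled $\Vert \widehat{T}_{cor} - T\Vert_2$. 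Since $\Sigma^a$ is PD, the objective of \ref{eqn:couterfactual-MV} is strictly convex over the convex feasible set, so $w^*$ is the unique minimizer and $\sol(\text{\ref{eqn:couterfactual-MV}}) = \{w^*\}$; consequently $\dist(\widehat{w}_{cor},\sol(\text{\ref{eqn:couterfactual-MV}})) = \Vert \widehat{w}_{cor} - w^*\Vert_2$, which is exactly the quantity to be bounded. Moreover, because $\widehat{\Sigma}^*_{cor}$ is PD by hypothesis, the corrected program is itself strictly convex, so $\widehat{w}_{cor}$ is well defined.

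Verifying the hypotheses is straightforward. Condition \ref{assumption:A2} concerns the \emph{true} program only, whose objective Hessian is $\nabla^2_w f(w^*) = \Sigma^a$; since $\Sigma^a$ is PD, \ref{assumption:A2} holds (as noted after its statement, positive definiteness of the Hessian is sufficient), and crucially this is a statement about the fixed true matrix, not about the corrected estimate. For \ref{assumption:A1} we need $\Vert \widehat{T}_{cor}-T\Vert_2 = o_\Pb(1)$ together with consistency of $\widehat{C},\widehat{d}$; the latter follows from consistency of $\widehat{m}^a$ under \ref{assumption:B1}-\ref{assumption:B3} (Lemma \ref{lem:double-robust}), and the former is handled by the decomposition below.

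The key estimate is the triangle-inequality decomposition
\[
\Vert \widehat{T}_{cor}-T\Vert_2 \;\leq\; \Vert \widehat{T}_{cor}-\widehat{T}\Vert_2 + \Vert \widehat{T}-T\Vert_2 .
\]
Because $\widehat{T}_{cor}$ and $\widehat{T}$ differ only in their covariance entries, $\Vert \widehat{T}_{cor}-\widehat{T}\Vert_2$ equals the Euclidean norm of the differing entries of $\widehat{\Sigma}^*_{cor}-\widehat{\Sigma}^a$, which for fixed finite $k$ is bounded by a constant multiple of $\Vert \widehat{\Sigma}^*_{cor}-\widehat{\Sigma}^a\Vert_F$, itself comparable to $\Vert \widehat{\Sigma}^*_{cor}-\widehat{\Sigma}^a\Vert_2$ by norm equivalence; in particular it is $o_\Pb(1)$ by hypothesis, securing \ref{assumption:A1}. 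The second term is $O_\Pb(r_n \vee n^{-1/2})$ by Lemma \ref{lem:double-robust} and the definition \eqref{eqn:nuisance-product-rate} of $r_n$, as is $\max\{\Vert\widehat{C}-C\Vert_F,\Vert\widehat{d}-d\Vert_2\}$ (these depend only on the unchanged $\widehat{m}^a$). Feeding $\max\{\Vert\widehat{T}_{cor}-T\Vert_2,\Vert\widehat{C}-C\Vert_F,\Vert\widehat{d}-d\Vert_2\} = O_\Pb(\Vert \widehat{\Sigma}^*_{cor}-\widehat{\Sigma}^a\Vert_2 \vee r_n \vee n^{-1/2})$ into Theorem \ref{thm:general-nlp-rates} yields the displayed rate.

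For the final claim, suppose $\widehat{\Sigma}^*_{cor}=\widehat{\Sigma}^a$ whenever $\widehat{\Sigma}^a$ is PD. Under \ref{assumption:B1}-\ref{assumption:B3}, Lemma \ref{lem:double-robust} gives $\widehat{\Sigma}^a \xrightarrow{p} \Sigma^a$, and since $\Sigma^a$ is PD and the smallest eigenvalue is a continuous function of the matrix entries, $\Pb(\widehat{\Sigma}^a \text{ is PD}) \to 1$. On this event the correction is inactive, so $\Vert \widehat{\Sigma}^*_{cor}-\widehat{\Sigma}^a\Vert_2 = 0$ with probability tending to one; hence that term does not contribute to the rate and the bound collapses to $O_\Pb(r_n \vee n^{-1/2})$. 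The proof is essentially a direct invocation of the generic rate theorem, so the only step requiring real care is the bookkeeping in the reduction: confirming that the correction perturbs exactly the covariance block of $T$ and that \ref{assumption:A2} is read off the fixed true Hessian $\Sigma^a$, not the corrected estimate.
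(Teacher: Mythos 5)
Your proposal is correct and follows essentially the same route as the paper's ``Alternate proof'' of this theorem: invoke the generic rate result (Theorem \ref{thm:general-nlp-rates}) for the program with the corrected covariance, split the error via the triangle inequality through $\widehat{\Sigma}^a$, and for the second claim show that $\Pb(\widehat{\Sigma}^*_{cor}\neq\widehat{\Sigma}^a)\to 0$ so that $\Vert\widehat{\Sigma}^*_{cor}-\widehat{\Sigma}^a\Vert$ is $o_\Pb(a_n)$ for any positive sequence $a_n$. (The paper's primary proof instead argues directly from Lipschitz stability of the parametric program $\mathsf{P}(\Sigma,m)$ with an indicator-function decomposition, but your argument matches the appendix's alternate version, including the careful point that \ref{assumption:A2} is checked on the true Hessian $\Sigma^a$.)
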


As with the shrinkage method discussed in the previous section, PD corrections that satisfy the conditions of Theorem \ref{thm:pd-correction} offer a principled approach to enhancing the stability of the covariance matrix estimator. In particular, when (strict or strong) convexity is desired, such corrections can improve the finite-sample performance of the optimal solution estimator.
We note that more general shrinkage estimators, such as linear shrinkage estimators with custom designed targets or nonlinear shrinkage approaches, may also be adapted to our setting \citep[see, e.g.,][]{ledoit2020power}. However, extending our framework to accommodate these more sophisticated estimators lies beyond the scope of the present paper and is left for future work.

\section{Empirical Studies} \label{sec:numerical-illustraction}
In this section, we evaluate the performance of the proposed estimator on two simulated datasets and demonstrate its applicability through two real-world case studies. %

\subsection{\citeauthor{kang2007demystifying}'s Study with Multivariate Outcomes} \label{sec:simulation}
To estimate the counterfactual means and covariance matrices, we employ three methods: the PI, IPW, and our proposed SP estimator. Estimator performance is assessed via integrated bias and root-mean-squared error (RMSE), defined by
\[
\text{bias} = \frac{1}{k}\sum_{i=1}^k\Big\vert \frac{1}{B}\sum_{j=1}^B \widehat{w}^j_{i} - w^*_i \Big\vert, \quad \text{RMSE} = \frac{1}{k}\sum_{i=1}^k\left[ \frac{1}{B}\sum_{s=1}^B \left\{ \widehat{w}^s_{i} - w^*_i \right\}^2 \right]^{1/2},
\]
across $B=250$ simulations, where $w^*$ are the optimal weights of the true QP and $\widehat{w}^s$ are estimates of $w^*$ at the $s$-th simulation. Our data generation is based on the simulation study by \citet{kang2007demystifying}, modified to accommodate multivariate outcomes as follows:
\begin{gather*}
    X=(X_1,X_2,X_3,X_4) \sim N(0,I),\\
    \pi_a(X) = \text{expit}(-0.5X_1 + 0.25X_2 - 0.125X_3 - 0.05X_4),\\
    (Y_i \mid X,A) \sim N(\mu_i(X,A),V_i),
\end{gather*}
where for $i=1,...,k$,
\begin{gather*}
\mu_i(X,A) = b_i + A(d_i + X\alpha),\\
b_i \sim \Unif(-0.5, 0.5), \quad d_i \sim \Unif(-0.25, 0.25), \\
\alpha = \left(0.1 + u_{i1}, -0.1 + u_{i2}, 0.2 + u_{i3}, -0.2 + u_{i4} \right)^\top,\\
u_{i1},u_{i2},u_{i3},u_{i4} \sim \Unif(-0.5, 0.5),\\
V_i \sim \Unif(1.5, 3).
\end{gather*}

\begin{figure}[t!]
    \centering
    \subfigure{\includegraphics[width=0.48\textwidth]{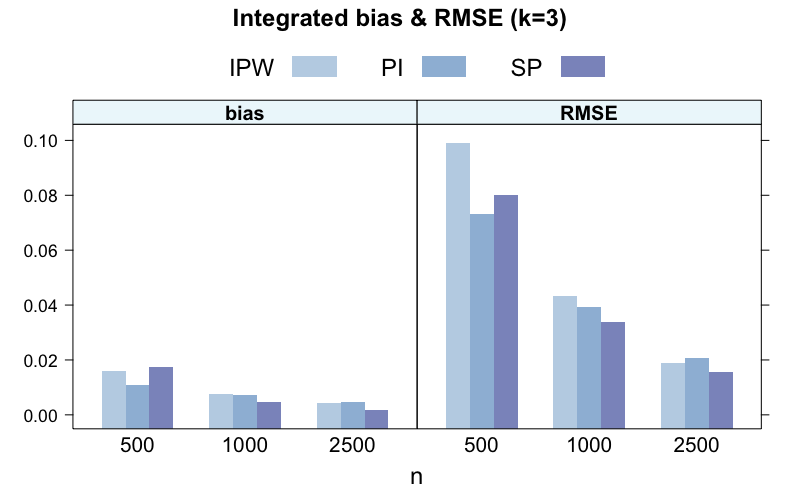}} 
    \hfill%
    \subfigure{\includegraphics[width=0.48\textwidth]{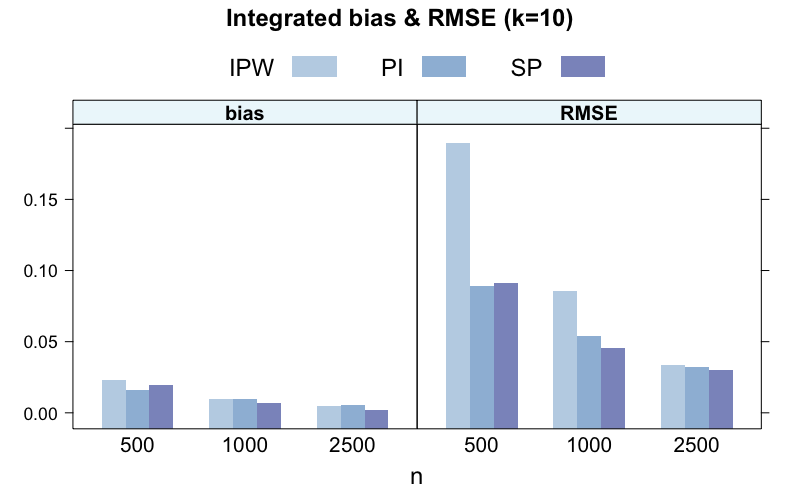}}
    \caption{Finite sample performance of the three estimators based on non-transformed covariates $X$, across different sample sizes.}
    \label{fig:sim1-cor}
\end{figure}
\begin{figure}[t!]
    \centering
    \subfigure{\includegraphics[width=0.48\textwidth]{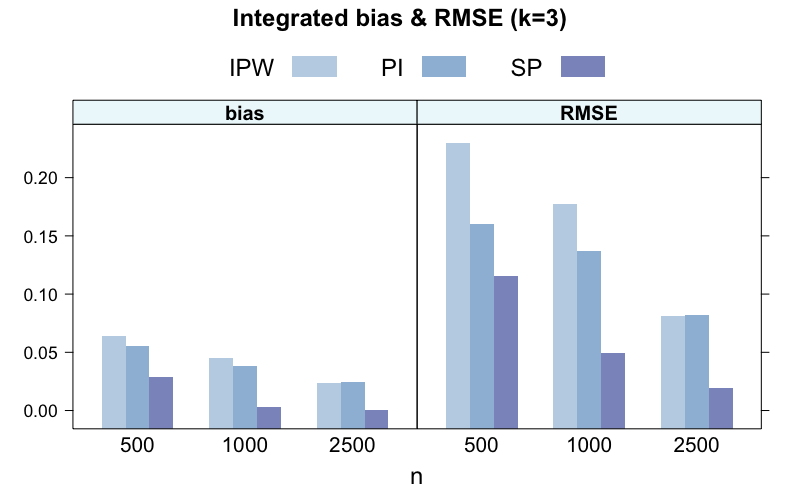}}
    \hfill%
    \subfigure{\includegraphics[width=0.48\textwidth]{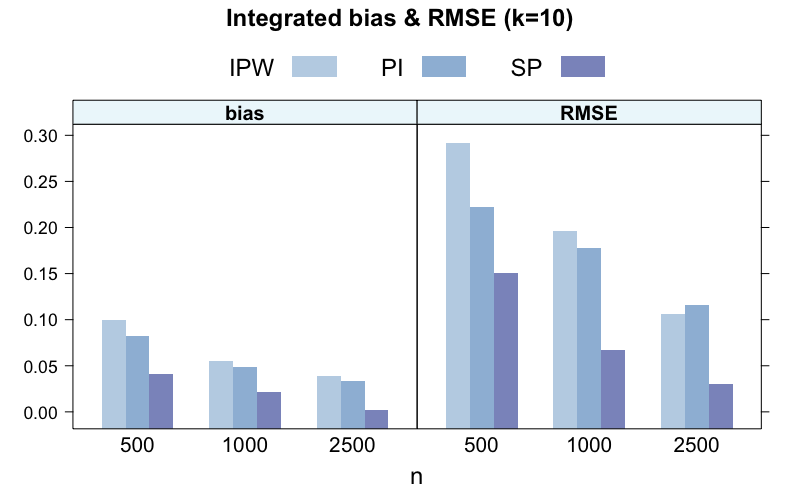}}
    \caption{Finite sample performance of the three estimators with the transformed covariates. At each simulation, either the propensity score ($\pi$) or outcome regression models ($\mu, \sigma$) is estimated using the transformed covariates $\widetilde{X}$, not $X$.}
    \label{fig:sim1-mis}
\end{figure}

Here, $\Unif(l, u)$ denotes the uniform distribution over the interval $[l,u]$. We use sample sizes $n=500, 1000, 2500$. Throughout this section, to estimate the nuisance regression functions, we use the cross-validation super learner ensemble estimator implemented in the \texttt{SuperLearner} R package to combine generalized additive models, multivariate adaptive regression splines, and random forests. For simplicity, we estimate all the nuisance functions on a separate independent set with equal sample size, and the minimum level of the weighted mean outcome $\text{r}_{min}$ is set to $-\infty$. 

We analyze the counterfactual regime corresponding to $A = 1$.
We consider two versions for each of the three estimators, depending on how each of the nuisance functions are estimated: using the baseline covariates $X$ or using transformed covariates $\widetilde{X}$, based on the same transformations as in \citet{kang2007demystifying}, i.e.,
\[
\widetilde{X}=\left(\exp(X_1/2), X_2/(1+\exp(X_1)) + 10, (X_1X_3/25+0.6)^3, (X_2+X_4+20)^2 \right).
\]
When the transformed covariates $\widetilde{X}$ are used, estimation of the nuisance functions is more challenging, and at each round of simulation we estimate either the propensity score ($\pi_a$) or outcome regressions ($\mu_i, \sigma_{ij}$) using $\widetilde{X}$ with equal chance. In other words, $\widetilde{X}$ is used to estimate $\pi_a$ for roughly $B/2$ simulations and to estimate ($\mu_i, \sigma_{ij}$) for the remaining simulations. 

Our results use the shrinkage estimator $\widehat{\Sigma}^*_S$ developed in Section \ref{subsec:counterfactual-shrinkage} as it shows a slight improvement in RMSE than the PD correction method. In general, we achieve between $10$ and $30$ percent relative improvement with the proposed calibration methods (see Appendix \ref{appsec:calibration-additional} for details).
The results are presented in Figures \ref{fig:sim1-cor}, \ref{fig:sim1-mis}.

In Figure \ref{fig:sim1-cor}, the proposed estimator performs as well or slightly better than the PI or IPW estimators. However, in Figure \ref{fig:sim1-mis}, when one of the nuisance estimators is based on  $\widetilde{X}$, the proposed estimator gives substantially smaller bias and RMSE in general, and performs better with $n$ than do the other methods. This interesting behavior follows from the results in Section \ref{sec:estimation-inference} that the proposed estimator has second-order multiplicative bias and thus it is sufficient to require $n^{1/4}$ rates on nuisance estimation in order for this estimator to attain $\sqrt{n}$ rates, while the PI and IPW directly inherit the slower-than-$\sqrt{n}$ rates at which the nuisance parameters are estimated and are expected to be converge particularly slowly when $\widetilde{X}$ is used. This behavior appears to hold regardless of the value of the decision variable $k$, although we have slightly larger bias and RMSE for $k=10$ than $k=3$. 

\subsection{Relative Improvement in RMSE through Covariance Calibration}
\label{appsec:calibration-additional}

\begin{figure}[t!]
    \centering
    \subfigure{\includegraphics[width=0.48\textwidth]{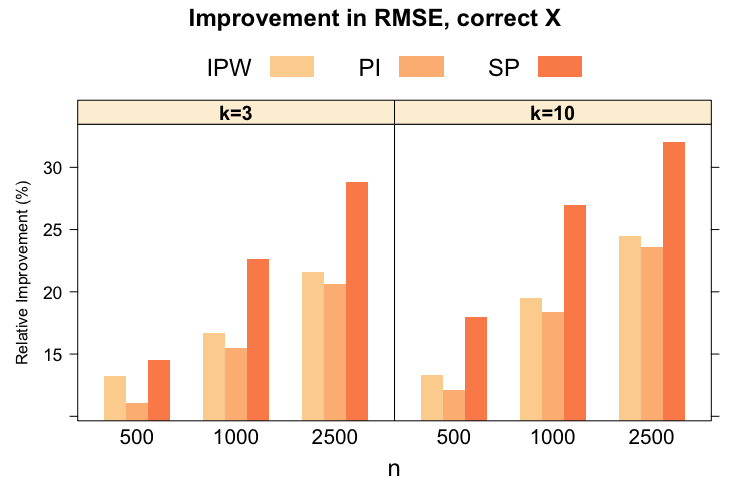}} 
    \hfill%
    \subfigure{\includegraphics[width=0.48\textwidth]{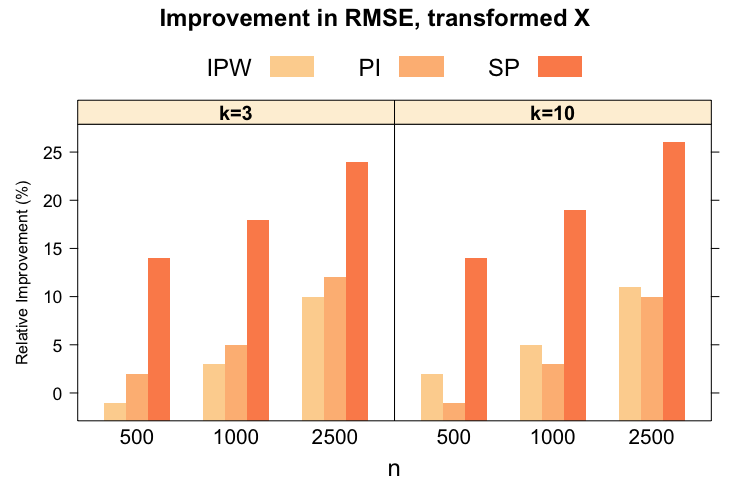}}
    \caption{Relative improvement in RMSE of the three estimators through the covariance shrinkage, based on the non-transformed covariates $X$ (left) and the transformed covariates (right).}
    \label{fig:rel-imp-linear-shrinkage}
\end{figure}
\begin{figure}[t!]
    \centering
    \subfigure{\includegraphics[width=0.48\textwidth]{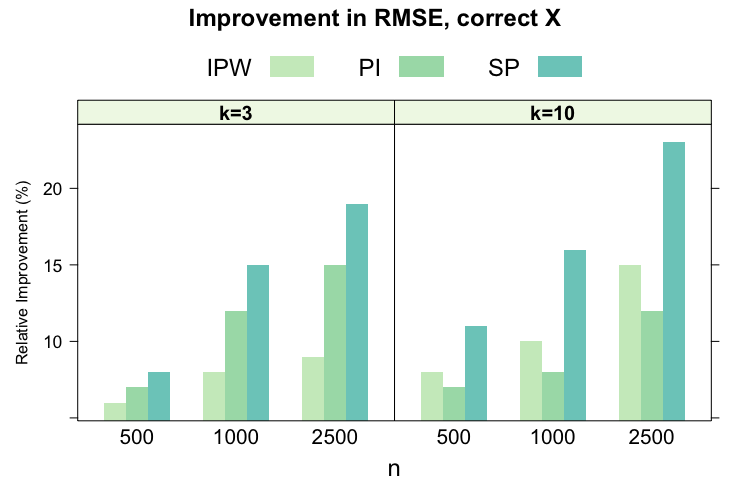}}
    \hfill%
    \subfigure{\includegraphics[width=0.48\textwidth]{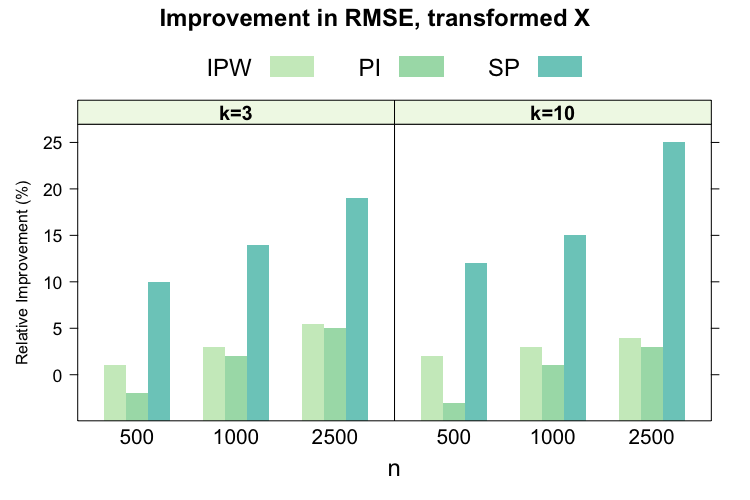}}
    \caption{Relative improvement in RMSE of the three estimators through the PD correction, based on the non-transformed covariates $X$ (left) and the transformed covariates (right).}
    \label{fig:rel-imp-psd-correction}
\end{figure}

We conduct an additional simulation, using the same setup as in Section \ref{sec:simulation}, to demonstrate that our proposed calibration methods enhance the performance of the proposed optimal solution estimator relative to that based on $\widehat{\Sigma}^a$. For each calibration method $\widehat{\Sigma}^* \in \{\widehat{\Sigma}^*_S, \widehat{\Sigma}^*_{cor}\}$, we compute the percentage relative improvement in RMSE using the following formula
\begin{align*}
    \frac{\left\{\text{RMSE}(\widehat{\Sigma}^a) - \text{RMSE}(\widehat{\Sigma}^*)\right\}}{\text{RMSE}(\widehat{\Sigma}^*)},
\end{align*}
where the RMSE of the optimal solution estimator $\widehat{w}$ is computed in the same way as Section \ref{sec:simulation} with $\widehat{\Sigma}^a$ or its calibrated version $\widehat{\Sigma}^*$. Again, we construct all the nuisance estimators on the independent, separate sample with the same size. For $\widehat{\Sigma}^*_{cor}$ we compute the the nearest PD matrix by using R function \texttt{nearPD}, which implements the algorithm of \citet{higham2002accuracy}, and then forces positive definiteness if needed.

The results are presented in Figures \ref{fig:rel-imp-linear-shrinkage} and \ref{fig:rel-imp-psd-correction}. Both calibration methods appear to significantly improve upon the original optimal solution estimator that is computed without covariance calibration, although the improvement becomes less substantial as sample size increases. More importantly, without calibration, the number of suboptimal solutions, i.e., solutions that fail to converge, increases noticeably. Among the three estimators, the semiparametric estimator yields the greatest improvement. When the transformed covariates are used, the relative improvement in RMSE has been largely wiped out for the PI and IPW estimators; however, the semiparametric estimator continues to exhibit substantial gains. In general, larger improvements are observed for the shrinkage estimator than the PD correction with the given simulation setup.

\subsection{Optimal Medical Appointment Scheduling} \label{subsec:medical-appt}
Here, we present the first case study demonstrating the practical applicability of the proposed methods.
Medical providers have finite time to provide care for large populations of patients. In order to accommodate patients' scheduling needs and their own staffing needs, providers must choose how many appointment slots to reserve for \emph{fixed} appointments, which are scheduled in advance, vs. \emph{open-access} appointments, which are scheduled on short notice, often the same day that patients request them. Providers naturally wish to maximize the daily utilization rate, i.e., the proportion of slots each day in which patients are actually seen, while minimizizing variance in this rate across days. The utilization rate depends in part on the patient no-show rate, which can be quite high for fixed appointments.

\citet{qu2012MeanVarianceModel} employed mean–variance optimization to determine the optimal allocation between fixed and open-access appointments across various provider types. \cmmnt{, as it is desirable for providers to fill as many appointment slots as possible while minimizing the variance (uncertainty) in the proportion of slots that are utilized. , where the outcome of interest is the number of patients seen per session.} Their study considered a simplified setting in which all relevant parameters were assumed to be known, eliminating the need for estimation from observed data. Building on their approach, we illustrate how our counterfactual framework can enable reliable decision support in healthcare, particularly in the presence of interventions that may substantially shift the distribution of outcomes.

We generate a simulated dataset of 10,000 observations describing patient appointments, appointment types, an intervention to improve patient attendance, and utilization rates as follows:
\begin{gather*}
    X \sim \Unif(-1, 1) \\
    \pi_1(X) = \expit(0.6 + 0.1*X^3) \\
    Y_o \mid A, X \sim \Beta(1 + A/5, X^2) \\
    Y_f \mid A, X \sim \Beta(0.1 + A/2, X^2)
\end{gather*}
with $Y_o \ind Y_f \mid A, X$. Here, the intervention $A \in \{0, 1\}$ represents two types of appointment reminders. These patient prompts have been shown to reduce no-show rates for fixed appointments, with calls from staff $(A = 1)$ leading to greater improvements than automated reminders $(A = 0)$ \citep{parikh2010EffectivenessOutpatientAppointment}. Suppose that medical providers are interested in determining the optimal mix of two appointment types under policies that assign $A = 0$ or $A = 1$ to all appointments, based on data in which $A$ varies across observations. For example, they may have been piloting an automated calling system ($A = 1$), or may be transitioning from infrequent manual reminder calls to a policy of making calls before every appointment. $Y_o$ and $Y_f$ represent the observed utilization rates, the proportions of daily open-access and fixed appointments, respectively, in which providers see patients. For simplicity, we assume that the utilization rates do not depend on the number of appointments of each type offered. $X$ represents a synthesis of variables that influence $A$, or $Y_o$ and $Y_f$, such as weather conditions, which can affect both staff availability and patient no-show rates, or the severity of patient comorbidities. 

Since open-access appointments are made on very short notice, it is likely that the benefit of personal reminders relative to automated ones is greater for fixed appointments than for open-access appointments. Our data generating process reflects this view, with Table \ref{tab:utilization_rates} showing that personal calls increase utilization rates substantially relative to automated ones for fixed appointments, but have only a small effect for open-access appointments.
\begin{table}[]
    \centering
    \begin{tabular}{ccc}
        $A$ & $Y_o$ & $Y_f$  \\
        \hline
        0 & 0.79 (0.08) & 0.68 (0.14) \\
        1 & 0.81 (0.07) & 0.79 (0.08)
    \end{tabular}
    \caption{Means (variances) in utilization rates for open-access $(Y_o)$ vs fixed $(Y_f)$ appointments.}
    \label{tab:utilization_rates}
\end{table}

\begin{figure}[!t]
    \centering
    \includegraphics[width=0.75\linewidth]{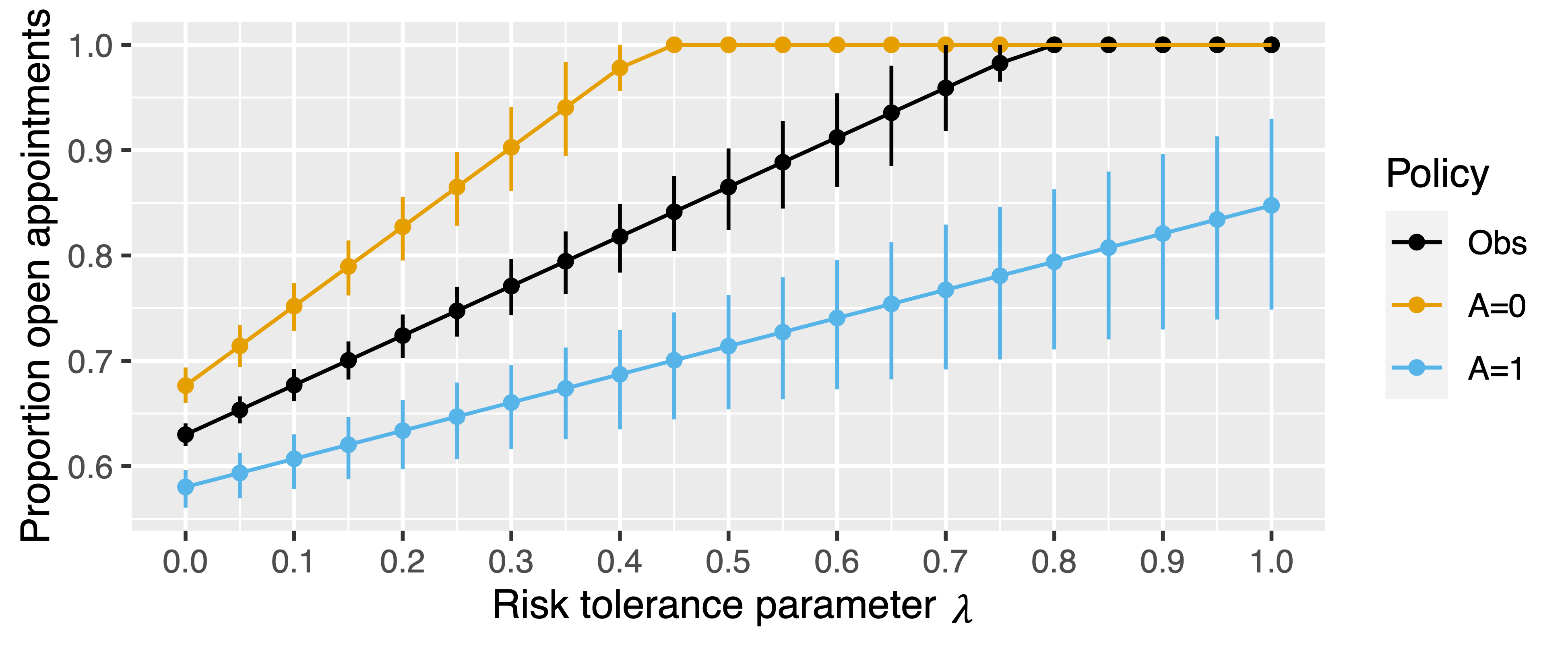}
    \caption{Estimated optimal proportion of open appointments under the two counterfactual policies and the observable policy (``Obs''), with 95\% bootstrap confidence intervals, across different risk tolerances.}
    \label{fig:proportion_open_appointments}
\end{figure}
\begin{figure}[!t]
    \centering
    \includegraphics[width=0.75\linewidth]{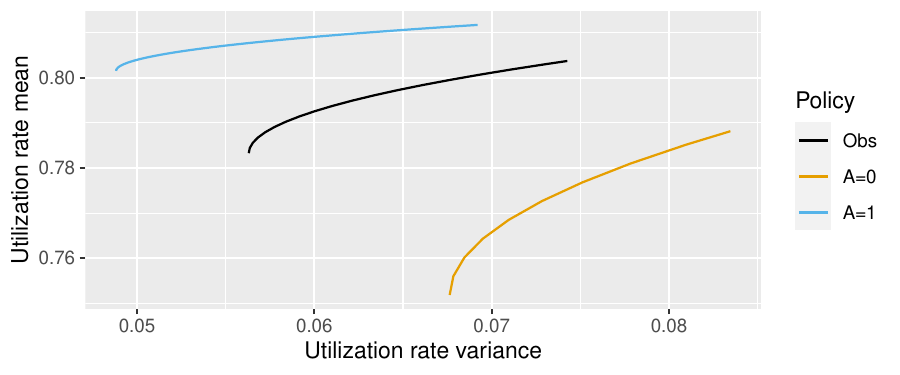}
    \caption{Pareto-efficient frontiers under the two counterfactual policies and the observable policy. Each point on a each curve represents an estimated optimum allocation of fixed vs. open-access appointments for a given value of the risk tolerance parameter $\lambda$.} 
    \label{fig:appointments_pareto}
\end{figure}

Figure \ref{fig:proportion_open_appointments} shows the estimated optimal proportion of open appointments under the two counterfactual conditions and the observable condition, across a range of values of the risk tolerance parameter $\lambda$. Because utilization is higher on average for open-access appointments, the optimal proportion increases as risk tolerance increases. Because personal calls ($A=1$) increase utilization for fixed appointments compared to automated ones ($A=0$), the optimal proportion of open-access appointments is smaller if personal calls were to be implemented globally than if automated calls were to be implemented globally, or if the provider were to continue with the current practice of automated reminder calls for some appointments or personally made for others.

Figure \ref{fig:appointments_pareto} shows the estimated counterfactual Pareto-efficient frontiers for the three conditions, with each curve spanning $\lambda \in [0, 2]$. The leftmost point on each curve represents $\lambda=0$, i.e., the composition of appointments that minimizes variance without regard to the mean. The $A=1$ curve dominates the others, meaning that for any given mean utilization rate (variance), the lowest variance (highest mean utilization rate) is achieved by the personal reminder policy. This suggests that the provider may achieve the best mean-variance tradeoff by choosing the personal reminder policy ($A = 1$). This example illustrates how the proposed methods support reliable healthcare decision-making by optimizing resource allocation under constraints, where each target policy may substantially shift the outcome distribution.

\subsection{Counterfactual Portfolio Modeling}
We next illustrate our method in the context of classical financial portfolio modeling. We consider monthly returns from six Vanguard index funds representing different asset classes, the same funds used in \citet{kim2021MeanVarianceOptimization}: U.S. large caps (VFIAX), U.S. small caps (VSMAX), developed markets outside the U.S. (VTMGX), emerging markets (VEIEX), the U.S. total bond market (VBTLX), and U.S. medium- and lower-quality corporate bonds (VWEAX). We use daily adjusted closing prices collected from Yahoo Finance.

The intervention considered in this example is the \emph{federal funds effective rate}, which represents the average interest rate at which banks lend to one another overnight. This rate is influenced by the {federal funds target rate}, which is set by the Federal Reserve (Fed). \cmmnt{\footnote{In recent years, the Fed has set a target range rather than a single target rate.} \cmmnt{The Federal Reserve influences the effective rate through actions such as the buying and selling of government bonds, but the effective rate ultimately arises from market forces, which means it may deviate from the target rate by a variable amount \citep{federalreserve2021FedExplained}.} There is a large literature devoted to understanding how changes in federal interest rates affect asset prices over various time spans \citep{bernanke2005WhatExplainsStock, li2010ImpactMonetaryPolicy, miranda-agrippino2020MonetaryPolicyGlobal}. Though the nature and timing of these effects is disputed \citep{bouakez2013StockReturnsMonetary, rigobon2004ImpactMonetaryPolicy}, it is likely that the optimal portfolio weights differ under different (counterfactual) rate environments.
}
There is a large literature devoted to understanding how changes in federal interest rates affect asset prices over various time spans \citep[e.g.,][]{li2010ImpactMonetaryPolicy, miranda-agrippino2020MonetaryPolicyGlobal}. Though the nature and timing of these effects is disputed, it is likely that the optimal portfolio weights differ under different (counterfactual) rate environments \citep[e.g.,][]{bouakez2013StockReturnsMonetary}.

For each month, we let $A=1$ if the effective rate increased with respect to the previous month, and $A=0$ otherwise. For example, if the effective rate for October was 3.0 and the effective rate for November was 3.25, then we would have $A=1$ for November. \cmmnt{\footnote{This reflects the fact that effective rates are market-determined, so the effective monthly rate is only realized at the end of the month.} Recall that we require a set of covariates that meet the no unmeasured confounding assumption (C2) and the positivity assumption (C3). One possibility is to include covariates that predict interest rates and/or (counterfactual) asset prices up to noise. }As covariates, we include the Consumer Price Index (CPI) and the unemployment rate, which correspond to the Fed's dual mandate to promote maximum employment and price stability \citep{federalreserve2021FedExplained}. We also include the five factors from the Fama and French asset pricing model, which aim to explain long-term expected portfolio returns \citep{fama2015FivefactorAssetPricing}. \cmmnt{Although there are many other macroeconomic variables that have historically been used to predict price movements and returns, it is debatable to what extent they have predictive value that is of use to investors \citep{welch2008ComprehensiveLookEmpirical}.} \cmmnt{In the interest of keeping the number of covariates small in order to minimize nuisance parameter estimation error, and since we primarily intend this as in illustrative exercise, we do not include any additional covariates.}

Our data span 2011-2020 (120 months), of which 58 months involve rate increases $(A = 1)$. \cmmnt{We randomly split the data into equally sized folds, estimate the nuisance parameters on one fold, and estimate the optimal weights on the other fold.} Figure \ref{fig:asset_weights} shows the estimated optimal weights for a range of values of the risk tolerance parameter $\lambda$, under the two counterfactual scenarios as well as the traditional observable setting. (We use the word ``scenarios'' to emphasize the fact that investors have no control over Fed policy.) \cmmnt{(We refer to these settings as ``scenarios'' rather than ``policies'' because unlike in the medical appointment application, the user does not control the treatment process; investors have no control over Fed policy.) \mishler{The ``observable'' weights are widely used in the context of robo-advising, which is a large and growing industry \citep{beketov2018RoboAdvisorsQuantitative}. Figure 5 additionally shows estimated weights calculated via traditional mean-variance optimization using only the data in which $A$ is equal to 0 (denoted ``$|A=0$'') or only the data in which $A$ is equal to 1 (``$|A=1$''). These cases represents the weights that would be ideal conditional on the Fed actually raising or not raising rates, rather than on the counterfactual scenario in which the Fed continually raises or does not raise rates.}}
As expected, for small values of $\lambda$, the portfolios all tilt heavily toward the U.S. total bond market (VBTLX), which has the lowest return and lowest volatility among the asset classes. As $\lambda$ increases, the portfolios tilt instead toward assets with higher return and higher volatility. In the $A=1$ scenario, the portfolio tilts toward corporate bonds (VWEAX), whereas in the other two scenarios, the portfolio tilts toward large cap stocks (VFIAX), suggesting that the volatility-return tradeoffs differ in a (counterfactual) environment in which rates are rising versus an environment in which they are steady or falling.

\begin{figure} 
    \centering
    \includegraphics[width=.65\linewidth]{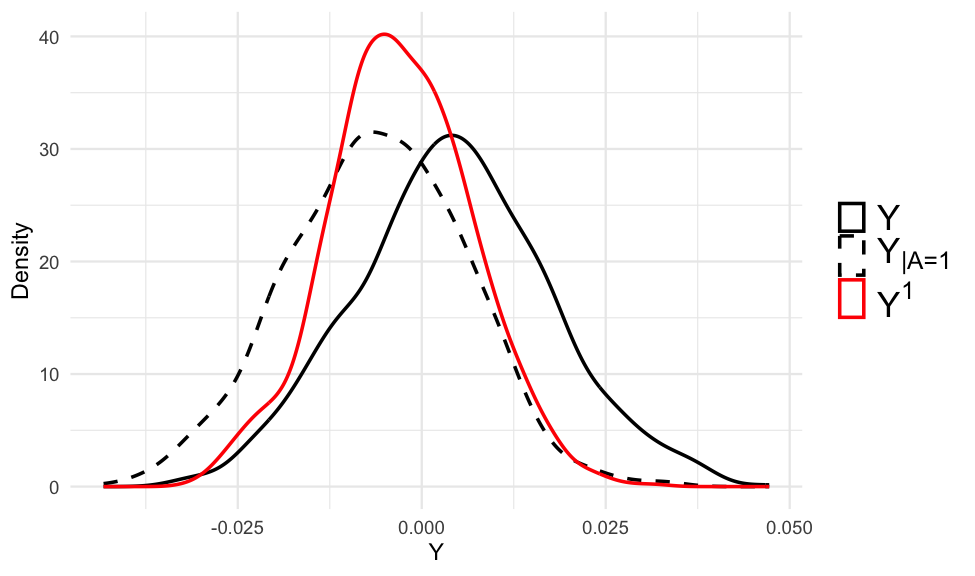}
    \caption{Distribution of the observed asset return for VWEAX (\(Y\)), the return conditional on \(A = 1\) using a subset of observed data (\(Y \mid A = 1\)), and the counterfactual return distribution under the policy \(A = 1\) (\(Y^1\)).}
    \label{fig:counterfactual-return-dist}
\end{figure}

We focus on counterfactual return distributions under $A = a$, i.e., $Y^a$, for $a \in \{0,1\}$, where potential confounders, such as the economic and firm-specific conditions surrounding the Fed’s decisions, as described above, are appropriately adjusted for, in order to isolate the effects attributable solely to the Fed’s action. This contrasts with $Y \mid A = a$, where shifts in the outcome distribution may be influenced by confounding (see Figure~\ref{fig:counterfactual-return-dist}). Using the proposed estimator, we compute the portfolio weights under counterfactual scenarios, where the mean and variance are calculated from the counterfactual returns $Y^0$ and $Y^1$. These are compared to weight estimates obtained from subsets of the observed data in which $A = 0$ or $A = 1$ (denoted by ``$|A=0$'' and ``$|A=1$'', respectively). The results are presented in Figure~\ref{fig:asset_weights}. There, we observe substantial differences between the counterfactual and factual optimal portfolios.

These counterfactual portfolios, which have never been studied in the literature, may be of intrinsic scientific interest. They can provide additional insight into portfolio robustness, by illuminating the sensitivity of the portfolio weights to surprise rate hikes or cuts. For example, consider an asset manager who believes that the current Fed is more aggressive toward inflation than previous Feds. Then the observational weights may be based on an underestimation of the Fed's likelihood of raising rates, and the asset manager may wish to tilt their portfolio toward the weights in the $A=1$ scenario. We present this as a heuristic argument for now, and leave a more thorough analysis of the uses of our framework in practice for future work.

\begin{figure} 
    \centering
    \includegraphics[width=.85\linewidth]{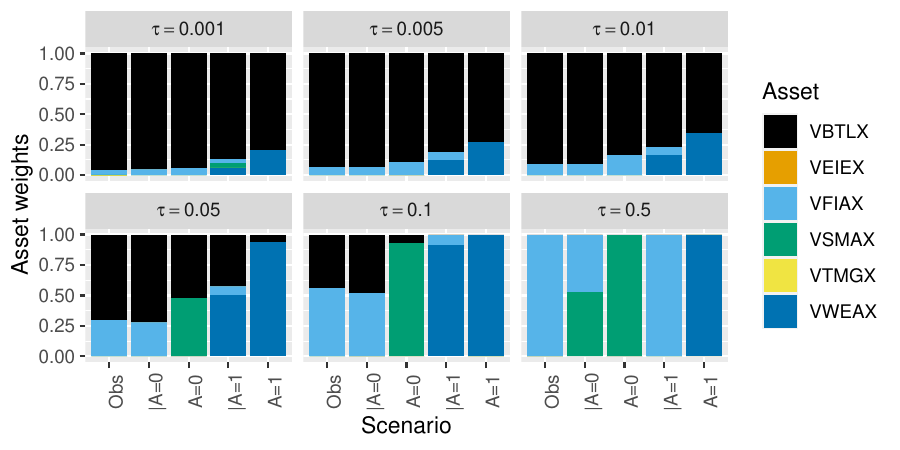}
    \caption{Estimated optimal asset weights in the observational and counterfactual scenarios.}
    \label{fig:asset_weights}
\end{figure}

\section{Discussion} \label{sec:discussion}

In this paper, we proposed counterfactual mean–variance optimization, a novel framework for determining optimal resource allocation under constraints, in the presence of hypothetical interventions that define unseen scenarios potentially far removed from the observed world.
Leveraging recent advances in counterfactual prediction, we developed a doubly robust estimator that achieves $\sqrt{n}$-consistency and asymptotic normality, even when employing flexible nonparametric regression methods. To address potential numerical instabilities, we also introduced calibration methods for the counterfactual covariance matrix estimator, which mitigate issues such as ill-conditioning and non-positive (semi)definiteness. Our methods were validated through simulation studies and demonstrated in real-world case studies in healthcare operations and financial portfolio optimization. Furthermore, the shrinkage approach proposed in Section~\ref{subsec:counterfactual-shrinkage} opens the door to extending a broad class of covariance matrix estimation techniques, such as those discussed in \citet[][Section 2.4]{ledoit2020power}, to counterfactual inference settings.

There are several promising avenues for future research. First, our methods may be extended to optimal resource allocation problrms under additional constraints, such as budget, fairness, or interpretability constraints, which frequently arise in domains like medicine and the social sciences. Second, the objective in \eqref{eqn:couterfactual-MV} could be generalized to incorporate richer reward and risk functionals beyond the mean and variance. For instance, to capture asymmetry in return distributions, one could employ Value-at-Risk by replacing $w^\top \Sigma^a w$ with $Q_{\alpha}(w^\top m^a)$, where $Q_{\alpha}(\cdot)$ denotes the $\alpha$-quantile. Lastly, while our current formulation assumes linear constraints, extending the framework to accommodate nonlinear constraints may open new insights into counterfactual resource allocation in complex decision environments.



\bibliographystyle{agsm}
\bibliography{bibliography}

@article{kitagawa2018should,
  title={Who should be treated? empirical welfare maximization methods for treatment choice},
  author={Kitagawa, Toru and Tetenov, Aleksey},
  journal={Econometrica},
  volume={86},
  number={2},
  pages={591--616},
  year={2018},
  publisher={Wiley Online Library}
}

@article{athey2021policy,
  title={Policy learning with observational data},
  author={Athey, Susan and Wager, Stefan},
  journal={Econometrica},
  volume={89},
  number={1},
  pages={133--161},
  year={2021},
  publisher={Wiley Online Library}
}

@article{meyer1987two,
  title={Two-moment decision models and expected utility maximization},
  author={Meyer, Jack},
  journal={The American Economic Review},
  pages={421--430},
  year={1987},
  publisher={JSTOR}
}

@article{kennedy2021semiparametric,
  title={Semiparametric counterfactual density estimation},
  author={Kennedy, Edward H and Balakrishnan, Sivaraman and Wasserman, Larry},
  journal={arXiv preprint arXiv:2102.12034},
  year={2021}
}

@article{rubin1974estimating,
	Author = {Rubin, Donald B},
	Journal = {Journal of Educational Psychology},
	Number = {5},
	Pages = {688},
	Publisher = {American Psychological Association},
	Title = {Estimating causal effects of treatments in randomized and nonrandomized studies.},
	Volume = {66},
	Year = {1974}}

@article{kim2018causal,
  title={Causal effects based on distributional distances},
  author={Kim, Kwangho and Kim, Jisu and Kennedy, Edward H},
  journal={arXiv preprint arXiv:1806.02935},
  year={2018}
}

@article{semenova2021debiased,
  title={Debiased machine learning of conditional average treatment effects and other causal functions},
  author={Semenova, Vira and Chernozhukov, Victor},
  journal={The Econometrics Journal},
  volume={24},
  number={2},
  pages={264--289},
  year={2021},
  publisher={Oxford University Press}
}

@article{kang2007demystifying,
  title={Demystifying double robustness: A comparison of alternative strategies for estimating a population mean from incomplete data},
  author={Kang, Joseph DY and Schafer, Joseph L},
  journal={Statistical science},
  pages={523--539},
  year={2007},
  publisher={JSTOR}
}

@incollection{kennedy2016semiparametric,
  title={Semiparametric theory and empirical processes in causal inference},
  author={Kennedy, Edward H},
  booktitle={Statistical causal inferences and their applications in public health research},
  pages={141--167},
  year={2016},
  publisher={Springer}
}

@article{zheng2010asymptotic,
  title={Asymptotic theory for cross-valiyeard targeted maximum likelihood estimation},
  author={Zheng, Wenjing and Van Der Laan, Mark J},
  year={2010},
  journal={Working Paper 273},
  publisher={U.C. Berkeley Division of Biostatistics Working Paper Series}
}

@article{Chernozhukov17,
Author = {Chernozhukov, Victor and Chetverikov, Denis and Demirer, Mert and Duflo, Esther and Hansen, Christian and Newey, Whitney},
Title = {Double/Debiased/Neyman Machine Learning of Treatment Effects},
Journal = {American Economic Review},
Volume = {107},
Number = {5},
Year = {2017},
Month = {May},
Pages = {261-65}}

@book{imbens2015causal,
  title={Causal inference in statistics, social, and biomedical sciences},
  author={Imbens, Guido W and Rubin, Donald B},
  year={2015},
  publisher={Cambridge University Press}
}

@article{Chernozhukov18,
    author = {Chernozhukov, Victor and Chetverikov, Denis and Demirer, Mert and Duflo, Esther and Hansen, Christian and Newey, Whitney and Robins, James},
    title = "{Double/debiased machine learning for treatment and structural parameters}",
    journal = {The Econometrics Journal},
    volume = {21},
    number = {1},
    pages = {C1-C68},
    year = {2018},
    month = {01}
}

@article{holland1986statistics,
  title={Statistics and causal inference},
  author={Holland, Paul W},
  journal={Journal of the American statistical Association},
  volume={81},
  number={396},
  pages={945--960},
  year={1986},
  publisher={Taylor \& Francis}
}

@article{robins2000inference,
  title={Inference for imputation estimators},
  author={Robins, James M and Wang, Naisyin},
  journal={Biometrika},
  volume={87},
  number={1},
  pages={113--124},
  year={2000},
  publisher={Oxford University Press}
}

@article{kennedy2020optimal,
  title={Optimal doubly robust estimation of heterogeneous causal effects},
  author={Kennedy, Edward H},
  journal={arXiv preprint arXiv:2004.14497},
  year={2020}
}

@article{fang2019inference,
  title={Inference on directionally differentiable functions},
  author={Fang, Zheng and Santos, Andres},
  journal={The Review of Economic Studies},
  volume={86},
  number={1},
  pages={377--412},
  year={2019},
  publisher={Oxford University Press}
}

@article{robins1995semiparametric,
  title={Semiparametric efficiency in multivariate regression models with missing data},
  author={Robins, James M and Rotnitzky, Andrea},
  journal={Journal of the American Statistical Association},
  volume={90},
  number={429},
  pages={122--129},
  year={1995},
  publisher={Taylor \& Francis}
}

@article{huang2017calibration,
  title={A calibration method for non-positive definite covariance matrix in multivariate data analysis},
  author={Huang, Chao and Farewell, Daniel and Pan, Jianxin},
  journal={Journal of Multivariate Analysis},
  volume={157},
  pages={45--52},
  year={2017},
  publisher={Elsevier}
}

@book{higham2002accuracy,
  title={Accuracy and stability of numerical algorithms},
  author={Higham, Nicholas J},
  year={2002},
  publisher={SIAM}
}

@article{yang1994estimation,
  title={Estimation of a covariance matrix using the reference prior},
  author={Yang, Ruoyong and Berger, James O},
  journal={The Annals of Statistics},
  pages={1195--1211},
  year={1994},
  publisher={JSTOR}
}

@article{osqp2020,
  author  = {Stellato, B. and Banjac, G. and Goulart, P. and Bemporad, A. and Boyd, S.},
  title   = {{OSQP}: an operator splitting solver for quadratic programs},
  journal = {Mathematical Programming Computation},
  year    = {2020},
  volume  = {12},
  number  = {4},
  pages   = {637--672}
}

@article{daniels2001shrinkage,
  title={Shrinkage estimators for covariance matrices},
  author={Daniels, Michael J and Kass, Robert E},
  journal={Biometrics},
  volume={57},
  number={4},
  pages={1173--1184},
  year={2001},
  publisher={Wiley Online Library}
}

@article{daniels1999nonconjugate,
  title={Nonconjugate Bayesian estimation of covariance matrices and its use in hierarchical models},
  author={Daniels, Michael J and Kass, Robert E},
  journal={Journal of the American Statistical Association},
  volume={94},
  number={448},
  pages={1254--1263},
  year={1999},
  publisher={Taylor \& Francis}
}

@article{still2018lectures,
  title={Lectures on parametric optimization: An introduction},
  author={Still, Georg},
  journal={Optimization Online},
  year={2018}
}

@article{ledoit2004well,
  title={A well-conditioned estimator for large-dimensional covariance matrices},
  author={Ledoit, Olivier and Wolf, Michael},
  journal={Journal of multivariate analysis},
  volume={88},
  number={2},
  pages={365--411},
  year={2004},
  publisher={Elsevier}
}

@article{ledoit2020power,
  title={The power of (non-) linear shrinking: A review and guide to covariance matrix estimation},
  author={Ledoit, Olivier and Wolf, Michael},
  journal={Journal of Financial Econometrics},
  year={2020}
}

@article{hofler2005causal,
  title={Causal inference based on counterfactuals},
  author={H{\"o}fler, Marc},
  journal={BMC medical research methodology},
  volume={5},
  number={1},
  pages={1--12},
  year={2005},
  publisher={BioMed Central}
}

@article{luedtke2015optimal,
  title={Optimal Dynamic Treatments in Resource-Limited Settings},
  author={Luedtke, Alexander R and Van Der Laan, Mark J},
  year={2015},
  publisher={bepress}
}

@article{luedtke2016optimal,
  title={Optimal individualized treatments in resource-limited settings},
  author={Luedtke, Alexander R and van der Laan, Mark J},
  journal={The international journal of biostatistics},
  volume={12},
  number={1},
  pages={283--303},
  year={2016},
  publisher={De Gruyter}
}

@article{dickerman2020counterfactual,
  title={Counterfactual prediction is not only for causal inference},
  author={Dickerman, Barbra A and Hern{\'a}n, Miguel A},
  journal={European Journal of Epidemiology},
  volume={35},
  number={7},
  pages={615--617},
  year={2020},
  publisher={Springer}
}

@article{wang2019equal,
  title={Equal opportunity and affirmative action via counterfactual predictions},
  author={Wang, Yixin and Sridhar, Dhanya and Blei, David M},
  journal={arXiv preprint arXiv:1905.10870},
  year={2019}
}

@book{markowitz1968portfolio,
  title={Portfolio selection},
  author={Markowitz, Harry M},
  year={1968},
  publisher={Yale university press}
}

@article{newey2018cross,
  title={Cross-fitting and fast remainder rates for semiparametric estimation},
  author={Newey, Whitney K and Robins, James R},
  journal={arXiv preprint arXiv:1801.09138},
  year={2018}
}

@article{neugebauer2007nonparametric,
  title={Nonparametric causal effects based on marginal structural models},
  author={Neugebauer, Romain and van der Laan, Mark},
  journal={Journal of Statistical Planning and Inference},
  volume={137},
  number={2},
  pages={419--434},
  year={2007},
  publisher={Elsevier}
}

@article{mishler2021fade,
      title={FADE: FAir Double Ensemble Learning for Observable and Counterfactual Outcomes}, 
      author={Alan Mishler and Edward Kennedy},
      year={2021},
      eprint={2109.00173},
      archivePrefix={arXiv},
      journal={arXiv preprint arXiv:2109.00173},
      primaryClass={stat.ML}
}

@inproceedings{mishler2021fairness,
  title={Fairness in Risk Assessment Instruments: Post-Processing to Achieve Counterfactual Equalized Odds},
  author={Mishler, Alan and Kennedy, Edward H and Chouldechova, Alexandra},
  booktitle={Proceedings of the 2021 ACM Conference on Fairness, Accountability, and Transparency},
  pages={386--400},
  year={2021}
}

@inproceedings{coston2020counterfactual,
  title={Counterfactual risk assessments, evaluation, and fairness},
  author={Coston, Amanda and Mishler, Alan and Kennedy, Edward H and Chouldechova, Alexandra},
  booktitle={Proceedings of the 2020 Conference on Fairness, Accountability, and Transparency},
  pages={582--593},
  year={2020}
}

@article{kim2021MeanVarianceOptimization,
	title = {Mean–Variance Optimization for Asset Allocation},
	volume = {48},
	doi = {10.3905/jpm.2021.1.219},
	pages = {17},
	number = {2},
	journal = {The Journal of Portfolio Management},
	author = {Kim, Jang Ho and Lee, Yongjae and Kim, Woo Chang and Fabozzi, Frank J},
	year = {2021},
	langid = {english}
}

@article{federalreserve2021FedExplained,
	title = {The Fed Explained: What the Central Bank Does},
	issn = {0199-9729},
	doi = {10.17016/0199-9729.11},
	author = {{Federal Reserve System Publication}},
	year = {2021},
	langid = {english}
}

@article{bernanke2005WhatExplainsStock,
	title = {What Explains the Stock Market's Reaction to Federal Reserve Policy?},
	volume = {60},
	issn = {00221082},
	doi = {10.1111/j.1540-6261.2005.00760.x},
	pages = {1221--1257},
	number = {3},
	journal = {The Journal of Finance},
	author = {Bernanke, Ben S. and Kuttner, Kenneth N.},
	year = {2005},
	langid = {english}
}

@article{bouakez2013StockReturnsMonetary,
	title = {Stock returns and monetary policy: Are there any ties?},
	volume = {36},
	issn = {01640704},
	doi = {10.1016/j.jmacro.2013.01.002},
	shorttitle = {Stock returns and monetary policy},
	pages = {33--50},
	journal = {Journal of Macroeconomics},
	shortjournal = {Journal of Macroeconomics},
	author = {Bouakez, Hafedh and Essid, Badye and Normandin, Michel},
	year = {2013},
	langid = {english}
}

@article{li2010ImpactMonetaryPolicy,
	title = {The impact of monetary policy shocks on stock prices: Evidence from Canada and the United States},
	volume = {29},
	issn = {02615606},
	doi = {10.1016/j.jimonfin.2010.03.008},
	shorttitle = {The impact of monetary policy shocks on stock prices},
	pages = {876--896},
	number = {5},
	journal = {Journal of International Money and Finance},
	shortjournal = {Journal of International Money and Finance},
	author = {Li, Yun Daisy and İşcan, Talan B. and Xu, Kuan},
	year = {2010},
	langid = {english}
}

@article{miranda-agrippino2020MonetaryPolicyGlobal,
	title = {U.S. Monetary Policy and the Global Financial Cycle},
	volume = {87},
	issn = {0034-6527, 1467-937X},
	doi = {10.1093/restud/rdaa019},
	pages = {2754--2776},
	number = {6},
	journal = {The Review of Economic Studies},
	author = {Miranda-Agrippino, Silvia and Rey, Hélène},
	year = {2020},
	langid = {english}
}

@article{rigobon2004ImpactMonetaryPolicy,
	title = {The impact of monetary policy on asset prices},
	volume = {51},
	issn = {03043932},
	doi = {10.1016/j.jmoneco.2004.02.004},
	pages = {1553--1575},
	number = {8},
	journal = {Journal of Monetary Economics},
	shortjournal = {Journal of Monetary Economics},
	author = {Rigobon, Roberto and Sack, Brian},
	year = {2004},
	langid = {english}
}

@article{fama2015FivefactorAssetPricing,
	title = {A five-factor asset pricing model},
	volume = {116},
	issn = {0304405X},
	doi = {10.1016/j.jfineco.2014.10.010},
	pages = {1--22},
	number = {1},
	journal = {Journal of Financial Economics},
	shortjournal = {Journal of Financial Economics},
	author = {Fama, Eugene F. and French, Kenneth R.},
	year = {2015},
	langid = {english}
}

@article{welch2008ComprehensiveLookEmpirical,
	title = {A Comprehensive Look at The Empirical Performance of Equity Premium Prediction},
	volume = {21},
	issn = {0893-9454, 1465-7368},
	doi = {10.1093/rfs/hhm014},
	pages = {1455--1508},
	number = {4},
	journal = {Review of Financial Studies},
	shortjournal = {Rev. Financ. Stud.},
	author = {Welch, Ivo and Goyal, Amit},
	year = {2008},
	langid = {english}
}

@article{beketov2018RoboAdvisorsQuantitative,
	title = {Robo Advisors: quantitative methods inside the robots},
	volume = {19},
	issn = {1470-8272, 1479-179X},
	doi = {10.1057/s41260-018-0092-9},
	shorttitle = {Robo Advisors},
	pages = {363--370},
	number = {6},
	journal = {Journal of Asset Management},
	shortjournal = {J Asset Manag},
	author = {Beketov, Mikhail and Lehmann, Kevin and Wittke, Manuel},
	year = {2018},
	langid = {english}
}

@article{qu2012MeanVarianceModel,
	title = {A mean–variance model to optimize the fixed versus open appointment percentages in open access scheduling systems},
	volume = {53},
	issn = {01679236},
	doi = {10.1016/j.dss.2012.04.003},
	pages = {554--564},
	number = {3},
	journal = {Decision Support Systems},
	shortjournal = {Decision Support Systems},
	author = {Qu, Xiuli and Rardin, Ronald L. and Williams, Julie Ann S.},
	year = {2012},
	langid = {english}
}

@article{parikh2010EffectivenessOutpatientAppointment,
	title = {The Effectiveness of Outpatient Appointment Reminder Systems in Reducing No-Show Rates},
	volume = {123},
	issn = {00029343},
	doi = {10.1016/j.amjmed.2009.11.022},
	pages = {542--548},
	number = {6},
	journal = {The American Journal of Medicine},
	shortjournal = {The American Journal of Medicine},
	author = {Parikh, Amay and Gupta, Kunal and Wilson, Alan C. and Fields, Karrie and Cosgrove, Nora M. and Kostis, John B.},
	year = {2010},
	langid = {english}
}

@article{cardozo1983ApplyingFinancialPortfolio,
	title = {Applying Financial Portfolio Theory to Product Portfolio Decisions: An Empirical Study},
	volume = {47},
	pages = {11},
	number = {2},
	journal = {Journal of Marketing},
	author = {Cardozo, Richard N. and Smith, Jr., David K.},
	year = {1983},
	langid = {english}
}

@article{delarue2011ApplyingPortfolioTheory,
	title = {Applying portfolio theory to the electricity sector: Energy versus power},
	volume = {33},
	issn = {01409883},
	doi = {10.1016/j.eneco.2010.05.003},
	shorttitle = {Applying portfolio theory to the electricity sector},
	pages = {12--23},
	number = {1},
	journal = {Energy Economics},
	shortjournal = {Energy Economics},
	author = {Delarue, Erik and De Jonghe, Cedric and Belmans, Ronnie and D'haeseleer, William},
	year = {2011},
	langid = {english}
}

@article{fagefors2021ApplicationPortfolioTheory,
	title = {Application of Portfolio Theory to Healthcare Capacity Management},
	volume = {18},
	issn = {1660-4601},
	doi = {10.3390/ijerph18020659},
	pages = {659},
	number = {2},
	journal = {International Journal of Environmental Research and Public Health},
	shortjournal = {{IJERPH}},
	author = {Fagefors, Carina and Lantz, Björn},
	year = {2021},
	langid = {english}
}

@article{lin2021scoping,
  title={A scoping review of causal methods enabling predictions under hypothetical interventions},
  author={Lin, Lijing and Sperrin, Matthew and Jenkins, David A and Martin, Glen P and Peek, Niels},
  journal={Diagnostic and prognostic research},
  volume={5},
  number={1},
  pages={1--16},
  year={2021},
  publisher={Springer}
}

@article{kim2022doubly,
  title={Doubly robust counterfactual classification},
  author={Kim, Kwangho and Kennedy, Edward and Zubizarreta, Jose},
  journal={Advances in Neural Information Processing Systems},
  volume={35},
  pages={34831--34845},
  year={2022}
}

@article{kim2025semiparametric,
  title={Semiparametric Counterfactual Regression},
  author={Kim, Kwangho},
  journal={arXiv preprint arXiv:2504.02694},
  year={2025}
}

@article{mcclean2024nonparametric,
  title={Nonparametric estimation of conditional incremental effects},
  author={McClean, Alec and Branson, Zach and Kennedy, Edward H},
  journal={Journal of Causal Inference},
  volume={12},
  number={1},
  pages={20230024},
  year={2024},
  publisher={De Gruyter}
}

@article{dickerman2022predicting,
  title={Predicting counterfactual risks under hypothetical treatment strategies: an application to HIV},
  author={Dickerman, Barbra A and Dahabreh, Issa J and Cantos, Krystal V and Logan, Roger W and Lodi, Sara and Rentsch, Christopher T and Justice, Amy C and Hern{\'a}n, Miguel A},
  journal={European journal of epidemiology},
  volume={37},
  number={4},
  pages={367--376},
  year={2022},
  publisher={Springer}
}

@article{kennedy2024semiparametric,
  title={Semiparametric doubly robust targeted double machine learning: a review},
  author={Kennedy, Edward H},
  journal={Handbook of Statistical Methods for Precision Medicine},
  pages={207--236},
  year={2024},
  publisher={Chapman and Hall/CRC}
}

@book{van2000asymptotic,
  title={Asymptotic statistics},
  author={Van der Vaart, Aad W},
  volume={3},
  year={2000},
  publisher={Cambridge university press}
}

\vspace*{\fill}
\paragraph{Disclaimer}
\begin{singlespace*}
{\scriptsize This paper was prepared for informational purposes by the Artificial Intelligence Research group of JPMorgan Chase \& Co. and its affiliates (``JP Morgan''), and is not a product of the Research Department of JP Morgan. JP Morgan makes no representation and warranty whatsoever and disclaims all liability, for the completeness, accuracy or reliability of the information contained herein. This document is not intended as investment research or investment advice, or a recommendation, offer or solicitation for the purchase or sale of any security, financial instrument, financial product or service, or to be used in any way for evaluating the merits of participating in any transaction, and shall not constitute a solicitation under any jurisdiction or to any person, if such solicitation under such jurisdiction or to such person would be unlawful.}
\end{singlespace*}

\newpage
\appendix
\begin{center}
{\large\bf SUPPLEMENTARY MATERIAL}
\end{center}
\vspace*{.1in}
\setcounter{page}{1}
\spacingset{1.2}

\section{Proofs}

\textbf{Extra notation.} First, we introduce some extra notation used throughout in the proofs. We let $\langle M_1, M_2 \rangle \coloneqq tr\left(M_1^\top M_2\right)/k$ for $k\times k$ matrices $M_1, M_2$ (so $\Vert M_1 \Vert_F = \sqrt{\langle M_1, M_1 \rangle}$). We let $\mathbb{B}_{\delta}(\bar{z})$ denote the open ball with radius $\delta > 0$ around the point $\bar{z}$ with  $\Vert \cdot \Vert_2$ (unless otherwise mentioned), i.e., $\mathbb{B}_{\delta}(\bar{z})= \{z \mid \Vert z - \bar{z} \Vert_2 < \delta \}$. We use $C^r(\mathbb{S})$ to denote a set of functions that are $r$ times continuously differentiable on $\mathbb{S}$. 

\subsection{Proof of Lemma \ref{lem:double-robust}} \label{proof-prop-double-robust}
\begin{proof}
Recall that we have
\begin{align*}
    & \phi_i(Z;\eta_i) = \frac{\mathbbm{1}(A=a)}{\pi(X)}\left\{Y_i - \mu_i(X,A)\right\} + \mu_i(X,a), \\
    & \phi_{ij}(Z;\eta_{ij}) = \frac{\mathbbm{1}(A=a)}{\pi(X)}\left\{Y_iY_j - \sigma_{ij}(X,A)\right\} + \sigma_{ij}(X,a),
\end{align*}
as the uncentered efficient influence functions for the parameter ${\psi}_{i} = \E[Y^a_i] = \E\{\E[Y_i \mid X, A=a]\}$ and ${\psi}_{ij} = \E[Y^a_iY^a_j] = \E\{\E[Y_iY_j \mid X, A=a]\}$ with the relevant nuisance functions $\eta_i=\{\pi(X), \mu_i(X,A)\}$, $\eta_j=\{\pi(X), \sigma_{ij}(X,A)\}$, respectively.


Recall that our proposed estimators in \eqref{eqn:return-estimator}, \eqref{eqn:cov-estimator} are given by
\begin{align*}
    &\widehat{m}^a_{i} = \widehat{\psi}_{i},\\
    &\widehat{\Sigma}^a_{ij} = \widehat{\psi}_{ij} - \widehat{\psi}_{i}\widehat{\psi}_{j},
\end{align*}
where 
\begin{align*}
    &  \widehat{\psi}_{i} = \Pn\left\{\phi_i(Z;\widehat{\eta}_i) \right\}, \\
    &  \widehat{\psi}_{j} =\Pn\left\{\phi_j(Z;\widehat{\eta}_j)\right\}, \\
    &  \widehat{\psi}_{ij} = \Pn\left\{\phi_{ij}(Z;\widehat{\eta}_{ij})\right\}.
\end{align*}

$\widehat{\psi}_{ij}$, $\widehat{\psi}_{i}$, $\widehat{\psi}_{j}$ are semiparametric estimators for the mean outcomes ${\psi}_{ij} = \E[Y^a_iY^a_j]$, ${\psi}_{i} = \E[Y^a_i]$, ${\psi}_{j} = \E[Y^a_j]$. Hence with \eqref{assumption:B2}, Together with either the Donsker condition or sample splitting, it follows that by \citet{kennedy2016semiparametric},
\begin{align*}
    & \widehat{\psi}_{i} - {\psi}_{i} = (\Pn - \Pb)\phi_i(Z) +  O\left( \Vert \widehat{\pi}_a - \pi_a \Vert_{2,\Pb} \Vert \widehat{\mu}_i - \mu_i \Vert_{2,\Pb} \right) + O_\Pb\left( \frac{\Vert\widehat{\psi}_{i} - {\psi}_{i} \Vert}{\sqrt{n}}\right) , \\
    & \widehat{\psi}_{j} - {\psi}_{j} = (\Pn - \Pb)\phi_j(Z) + O\left( \Vert \widehat{\pi}_a - \pi_a \Vert_{2,\Pb} \Vert \widehat{\mu}_j - \mu_j \Vert_{2,\Pb} \right) + O_\Pb\left( \frac{\Vert\widehat{\psi}_{j} - {\psi}_{j} \Vert}{\sqrt{n}}\right), \\
     & \widehat{\psi}_{ij} - {\psi}_{ij} = (\Pn - \Pb)\phi_{ij}(Z) + O\left( \Vert \widehat{\pi}_a - \pi_a \Vert_{2,\Pb} \Vert \widehat{\sigma}_{ij} - \sigma_{ij} \Vert_{2,\Pb} \right) + O_\Pb\left( \frac{\Vert\widehat{\psi}_{ij} - {\psi}_{ij} \Vert}{\sqrt{n}}\right),
\end{align*}
and thus by the central limit theorem and the given consistency conditions,
\begin{align*}
    & \widehat{m}^a_{i} - {m}^a_{i} = O_\Pb\left( \Vert \widehat{\pi}_a - \pi_a \Vert_{2,\Pb} \Vert \widehat{\mu}_i - \mu_i \Vert_{2,\Pb} \right) +  O_\Pb\left(n^{-1/2}\right), \\
    & \widehat{\Sigma}^a_{ij} - {\Sigma}^a_{ij} = O_\Pb\left( \Vert \widehat{\pi}_a - \pi_a \Vert_{2,\Pb} \left\{ \Vert \widehat{\mu}_i - \mu_i \Vert_{2,\Pb} + \Vert \widehat{\mu}_j - \mu_j \Vert_{2,\Pb} + \Vert \widehat{\sigma}_{ij} - \sigma_{ij} \Vert_{2,\Pb} \right\} \right) +  O_\Pb\left(n^{-1/2}\right).
\end{align*}

Since $k$ is finite, we have
\begin{align*}
    \Vert \widehat{\Sigma}^a - \Sigma^a \Vert_2 &\leq \Vert \widehat{\Sigma}^a - \Sigma^a \Vert_F \\
    & = \left( \underset{i,j}{\sum} \left\vert \widehat{\Sigma}^a_{ij} - {\Sigma}^a_{ij} \right \vert^2 \right)^{1/2} \\
    & \leq \underset{i,j}{\sum} \left\vert \widehat{\Sigma}^a_{ij} - {\Sigma}^a_{ij} \right\vert \\
    & = O_\Pb\left(\Vert \widehat{\pi}_a - \pi_a \Vert_{2,\Pb} \left\{ \max_{i} \Vert \widehat{\mu}_i - \mu_i \Vert_{2,\Pb} +\max_{i,j} \Vert \widehat{\sigma}_{ij} - \sigma_{ij} \Vert_{2,\Pb} \right\} \right).
\end{align*}

The case for $\widehat{m}^a$ is straightforward and omitted here. Now we have the approximation-by-averages representation
\begin{align*}
\sqrt{n}
\begin{Bmatrix}
\begin{pmatrix}
\widehat{\psi}_{i} \\
\widehat{\psi}_{j} \\
\widehat{\psi}_{ij}
\end{pmatrix} -
\begin{pmatrix}
{\psi}_{i} \\
{\psi}_{j} \\
{\psi}_{ij}
\end{pmatrix}
\end{Bmatrix}
& = \sqrt{n}\left(\Pn - \Pb \right) 
\begin{pmatrix}
{\phi}_{i}(Z;\eta_i) \\
{\phi}_{j}(Z;\eta_j) \\
{\phi}_{ij}(Z;\eta_{ij})
\end{pmatrix}
+ o_\Pb(1) \\
&  \xrightarrow[]{d} N\left(0, \cov\begin{pmatrix}{\phi}_{i}(Z;\eta_i) \\ {\phi}_{j}(Z;\eta_j) \\ {\phi}_{ij}(Z;\eta_{ij}) \end{pmatrix}\right).
\end{align*}

Now for a vector $(y_1, y_2, y_3)^\top \in \R^3$, define a function $g:\R^3 \rightarrow \R$ such that $g(y_1, y_2, y_3)=y_3 - y_1y_2$. Also let $\psi \equiv (\psi_{i},\psi_{j},\psi_{ij})^\top$, $\widehat{\psi} \equiv (\widehat{\psi}_{i}, \widehat{\psi}_{j},\widehat{\psi}_{ij})^\top$, and $\phi \equiv (\phi_{ij},\phi_{i},\phi_{j})^\top$. Then by the delta method, it follows that
\begin{align*}
\sqrt{n}\left(g(\widehat{\psi}) - g(\psi) \right) & =  \sqrt{n}\left( \widehat{\Sigma}^a_{ij} - {\Sigma}^a_{ij} \right) \\
& \xrightarrow[]{d} [\nabla g(\psi)]^{\top}N\left(0, \cov\begin{pmatrix}{\phi}_{i}(Z;\eta_i) \\ {\phi}_{j}(Z;\eta_j) \\ {\phi}_{ij}(Z;\eta_{ij}) \end{pmatrix}\right).
\end{align*}
\end{proof}

\subsection{Proof of Theorem \ref{thm:consistency-cov-shrinkage}}

Recall that ${\Sigma}^*_S = \rho_1^*\mathbb{I} + \rho_2^*\widehat{\Sigma}^a$ where $(\rho_1^*, \rho_2^*)$ is the solution of the program \eqref{eqn:opt-shrinkage}. Let us define
\begin{align} \label{eqn:cov-shrinkage-true}
    \tilde{\Sigma}^*_S= \varpi^*\nu\mathbb{I} + (1-\varpi^*)\widehat{\Sigma}^a \quad \text{with} \quad \varpi^*\coloneqq\frac{\beta^2}{\delta^2},
\end{align}
where $\nu = \langle \Sigma^a, \mathbb{I} \rangle$, $\beta^2 = \Pb \Vert\widehat{\Sigma}^a - \Sigma^a \Vert_F^2$ and $\delta^2 = \Pb \Vert\widehat{\Sigma}^a - \nu \mathbb{I} \Vert_F^2$. The next lemma shows that $\tilde{\Sigma}^*_S$ converges in probability to ${\Sigma}^*_S$ under very weak conditions. 

\begin{lemma} \label{lem:shrinkage-lemma-1}
Suppose that $\widehat{\pi}, \widehat{\mu}_j, \widehat{\sigma}_{ij}$ are consistent. Then
\[
\Vert \tilde{\Sigma}^*_S - {\Sigma}^*_S \Vert_F = O_\Pb\left(\Vert \widehat{\pi} - \pi \Vert_{2,\Pb} \Sigma_{i,j=1}^k \left\{ \Vert  \Vert \widehat{\mu}_j - \mu_j \Vert_{2,\Pb} + \Vert \widehat{\sigma}_{ij} - \sigma_{ij} \Vert_{2,\Pb} \right\}\right),
\]
and thus $\tilde{\Sigma}^*_S \xrightarrow[]{p} {\Sigma}^*_S$ in the Frobenius norm.
\end{lemma}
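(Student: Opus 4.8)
The plan is to exploit the fact that both $\Sigma^*_S$ and $\tilde{\Sigma}^*_S$ are affine combinations of the form $c_1\mathbb{I} + c_2\widehat{\Sigma}^a$ whose scalar coefficients (deterministic given the nuisance sample $\mathsf{D}_0^n$) are characterized by a $2\times 2$ system of normal equations, and to show that the two systems differ only through the conditional bias of $\widehat{\Sigma}^a$, which double robustness forces to be second-order. Write $\langle\cdot,\cdot\rangle$ for the normalized inner product of the extra notation, set $\bar{\Sigma} := \Pb[\widehat{\Sigma}^a] = \E[\widehat{\Sigma}^a \mid \mathsf{D}_0^n]$, $b := \bar{\Sigma} - \Sigma^a$, $\bar{\nu} := \langle \bar{\Sigma}, \mathbb{I}\rangle$, and $M_2 := \Pb\Vert \widehat{\Sigma}^a\Vert_F^2$. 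Reparametrizing the free variables of \eqref{eqn:opt-shrinkage} as in the proof text ($\rho_1 = \rho\nu$ for the coefficient of $\mathbb{I}$, $\rho_2 = 1-\rho$ for the coefficient of $\widehat{\Sigma}^a$), the oracle minimizer $(\rho_1^*, \rho_2^*)$ of $\Pb\Vert \rho_1\mathbb{I} + \rho_2\widehat{\Sigma}^a - \Sigma^a\Vert_F^2$ is determined by the stationarity conditions $\rho_1^* + \rho_2^*\bar{\nu} = \nu$ and $\rho_1^*\bar{\nu} + \rho_2^* M_2 = \langle \Sigma^a, \bar{\Sigma}\rangle$, using $\langle \mathbb{I},\mathbb{I}\rangle = 1$ and $\langle \Sigma^a, \mathbb{I}\rangle = \nu$.

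First I would verify that the coefficients $(\varpi^*\nu,\, 1-\varpi^*)$ defining $\tilde{\Sigma}^*_S$ in \eqref{eqn:cov-shrinkage-true} satisfy the same two equations with $\bar{\Sigma}$ replaced by $\Sigma^a$, up to an $O(\Vert b\Vert_F)$ residual. Indeed the first equation holds exactly, while the second reduces to the classical identity $\delta^2 = \alpha^2 + \beta^2$ (with $\alpha^2 = \Vert \Sigma^a - \nu\mathbb{I}\Vert_F^2$), which holds when $\widehat{\Sigma}^a$ is conditionally unbiased and otherwise is off by the cross term $2\langle b, \Sigma^a - \nu\mathbb{I}\rangle = O(\Vert b\Vert_F)$. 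Comparing the two systems, every discrepancy in the coefficient matrix and right-hand side is $O(\Vert b\Vert_F)$, since $\bar{\nu} - \nu = \langle b,\mathbb{I}\rangle$, $M_2 - (\Vert \Sigma^a\Vert_F^2 + \beta^2) = 2\langle \Sigma^a, b\rangle$, and $\langle \Sigma^a,\bar{\Sigma}\rangle - \Vert \Sigma^a\Vert_F^2 = \langle \Sigma^a, b\rangle$, plus a conditional-variance remainder $\Pb\Vert \widehat{\Sigma}^a - \bar{\Sigma}\Vert_F^2 = O(n^{-1})$.

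The essential input is then the size of $\Vert b\Vert_F$, and this is exactly where the double-robust structure behind Lemma \ref{lem:double-robust} enters. The conditional bias of each $\widehat{\psi}_{ij}$ (and $\widehat{\psi}_i$) is a second-order product of nuisance errors, e.g. $\Pb\phi_{ij}(Z;\widehat{\eta}_{ij}) - \psi_{ij} = O(\Vert \widehat{\pi}_a - \pi_a\Vert_{2,\Pb}\,\Vert \widehat{\sigma}_{ij} - \sigma_{ij}\Vert_{2,\Pb})$ by Cauchy--Schwarz and positivity \ref{assumption:B2}. Propagating this through $\widehat{\Sigma}^a_{ij} = \widehat{\psi}_{ij} - \widehat{\psi}_i\widehat{\psi}_j$, where the cross-product contributes only an additional $O(n^{-1})$ conditional covariance, yields entrywise $b_{ij} = O_\Pb(\Vert \widehat{\pi}_a - \pi_a\Vert_{2,\Pb}\{\Vert \widehat{\mu}_i - \mu_i\Vert_{2,\Pb} + \Vert \widehat{\mu}_j - \mu_j\Vert_{2,\Pb} + \Vert \widehat{\sigma}_{ij} - \sigma_{ij}\Vert_{2,\Pb}\})$, so that $\Vert b\Vert_F$ is $O_\Pb$ of the product rate appearing in the statement.

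To finish, I would apply a standard perturbation bound for linear systems: provided the base system is well-conditioned and has a bounded solution (here $1-\varpi^* \in [0,1]$, $\varpi^*\nu$ bounded), the coefficient discrepancy satisfies $\Vert (\rho_1^*,\rho_2^*) - (\varpi^*\nu, 1-\varpi^*)\Vert_2 = O_\Pb(\Vert b\Vert_F)$; combining with $\Vert \mathbb{I}\Vert_F = 1$ and $\Vert \widehat{\Sigma}^a\Vert_F = O_\Pb(1)$ (by consistency under \ref{assumption:B1}--\ref{assumption:B3} and Lemma \ref{lem:double-robust}) transfers this bound to $\Vert \tilde{\Sigma}^*_S - \Sigma^*_S\Vert_F$, and the $O(n^{-1})$ remainders are dominated. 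The main obstacle I anticipate is the well-conditioning: the determinant of the oracle system equals $\Vert \bar{\Sigma} - \bar{\nu}\mathbb{I}\Vert_F^2 + \Pb\Vert \widehat{\Sigma}^a - \bar{\Sigma}\Vert_F^2 \to \alpha^2$, so the perturbation argument is clean only away from the degenerate case $\Sigma^a \propto \mathbb{I}$ (where $\alpha^2 = 0$); in that edge case one checks directly that both estimators collapse to the common target $\nu\mathbb{I}$, so the stated conclusion still holds.
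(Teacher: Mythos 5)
Your argument is correct and arrives at the stated bound by a genuinely different route. The paper embeds both characterizations into a single parametric family $f(\rho,\nu;\Omega)$ indexed by the conditional bias matrix $\Omega$, and transfers $\Vert\Omega\Vert_F$ to the coefficients via Lipschitz stability of local minimizers of smooth parametric programs (Lemma \ref{lem:asymp-local-stability} plus \citet[Theorem~6.2]{still2018lectures}); you instead pass to the linear parametrization $(\rho_1,\rho_2)=(\rho\nu,\,1-\rho)$, under which the objective in \eqref{eqn:opt-shrinkage} is an honest quadratic, identify both $(\rho_1^*,\rho_2^*)$ and $(\varpi^*\nu,\,1-\varpi^*)$ as solutions of $2\times 2$ normal-equation systems differing only through $b=\Pb(\widehat{\Sigma}^a)-\Sigma^a$, and perturb the linear system. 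The essential input is identical in both proofs --- the doubly robust structure of $\widehat{\phi}^a_i,\widehat{\phi}^a_{ij}$ forces $\Vert b\Vert_F$ to be second order in the nuisance errors --- but your route buys something real: in the paper's $(\rho,\nu)$ coordinates the objective is quartic (through the $\rho^2\nu^2$ term), its Hessian has $\partial^2/\partial\nu^2$ entry $2\rho^2$, which degenerates exactly in the regime $\rho^*\rightarrow 0$ that is relevant asymptotically, so the paper's blanket positive-definiteness claim is delicate; your Gram matrix has determinant $M_2-\bar{\nu}^2=\Vert\bar{\Sigma}-\bar{\nu}\mathbb{I}\Vert_F^2+\Pb\Vert\widehat{\Sigma}^a-\bar{\Sigma}\Vert_F^2\rightarrow\Vert\Sigma^a-\nu\mathbb{I}\Vert_F^2$, which makes the conditioning requirement explicit, and your separate treatment of the degenerate case $\Sigma^a\propto\mathbb{I}$ (where both estimators collapse to $\nu\mathbb{I}$, as one can check directly) closes a case the paper does not address. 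Two small caveats, neither fatal: your verification that $(\varpi^*\nu,1-\varpi^*)$ solves the ``unbiased'' system up to $O(\Vert b\Vert_F)$ uses boundedness of $\varpi^*=\beta^2/\delta^2$, which again rests on $\Vert\Sigma^a-\nu\mathbb{I}\Vert_F>0$ and so is covered by your edge-case split; and, exactly as the paper does, you fold the $O_\Pb(n^{-1})$ conditional covariance arising from the product $\Pn\{\widehat{\phi}^a_i\}\Pn\{\widehat{\phi}^a_j\}$ into the product-of-nuisance-errors rate, which is harmless for the way this lemma is used in Theorem \ref{thm:consistency-cov-shrinkage} but is not literally dominated by that rate in every regime.
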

\begin{proof}
Recall that we are interested in the following optimization program 
\begin{equation}
\label{eqn:opt-shrinkage-re}
\begin{aligned}
    & \text{minimize} \quad  \Pb\Vert \rho\nu\mathbb{I} + (1-\rho)\widehat{\Sigma}^a - \Sigma^a \Vert_F^2 \\
    & \text{over} \quad \rho, \nu \in \mathbb{R}.
\end{aligned}    
\end{equation}

Note that
\begin{align*}
    &\Pb\left\Vert \rho\nu\mathbb{I} + (1-\rho)\widehat{\Sigma}^a- \Sigma^a \right\Vert_F^2 \\
    &= \Pb\left\Vert \rho\left(\nu\mathbb{I} - \Sigma^a \right) + (1-\rho)\left(\widehat{\Sigma}^a- \Sigma^a\right) \right\Vert_F^2\\
    & = \rho^2 \left\Vert \nu\mathbb{I} - \Sigma^a \right\Vert_F^2 + (1-\rho)^2\Pb\left\Vert \widehat{\Sigma}^a- \Sigma^a \right\Vert_F^2 + 2\rho(1-\rho) \left\langle \left(\nu\mathbb{I} - \Sigma^a \right), \Pb(\widehat{\Sigma}^a- \Sigma^a) \right\rangle.
\end{align*}

Now, for a $k \times k$ real-valued matrix $\Omega$, define a function $f$ indexed by $\Omega$ as
\begin{align*}
    f(\rho,\nu;\Omega)= \rho^2 \left\Vert \nu\mathbb{I} - \Sigma^a \right\Vert_F^2 + (1-\rho)^2\Pb\left\Vert \widehat{\Sigma}^a- \Sigma^a \right\Vert_F^2 + 2\rho(1-\rho) \left\langle \left(\nu\mathbb{I} - \Sigma^a \right), \Omega \right\rangle.
\end{align*}

So, if we let $\widehat{\Omega}$ denote a matrix whose $(i,j)$-component is given by $\Pb(\widehat{\Sigma}^a_{ij}- \Sigma^a_{ij})$ ($1 \leq i,j \leq k$) and $\bm{0}_{k \times k}$ denote a matrix of $k^2$ zeros, then we may write
\begin{align*}
    &f\left(\rho,\nu;\widehat{\Omega}\right) = \rho^2 \left\Vert \nu\mathbb{I} - \Sigma^a \right\Vert_F^2 + (1-\rho)^2\Pb\left\Vert \widehat{\Sigma}^a- \Sigma^a \right\Vert_F^2 + 2\rho(1-\rho) \left\langle \left(\nu\mathbb{I} - \Sigma^a \right), \widehat{\Omega} \right\rangle, \\
    & f\left(\rho,\nu;\bm{0}_{k \times k}\right) = \rho^2 \left\Vert \nu\mathbb{I} - \Sigma^a \right\Vert_F^2 + (1-\rho)^2\Pb\left\Vert \widehat{\Sigma}^a- \Sigma^a \right\Vert_F^2.
\end{align*}

Now consider an unconstrained parametric program
\begin{equation}
\label{eqn:opt-shrinkage-param}
\begin{aligned}
    & \text{minimize} \quad  f\left(\rho,\nu; {\Omega} \right) \\
    & \text{over} \quad \rho, \nu \in \mathbb{R}
\end{aligned}    \tag{$\mathsf{P}(\Omega)$}  
\end{equation}
with ${\Omega}$ as the parameter. Since $f \in C^2$ with respect to $(\rho,\nu)$ and its Hessian is positive definite (note that we tacitly assumed $\widehat{\Sigma}^a \neq \Sigma^a$, otherwise the result is trivial), a local minimizer of the program \ref{eqn:opt-shrinkage-param} is Lipschitz stable. Let $\widehat{y}=(\widehat{\rho}, \widehat{\nu})$, $y_0=(\rho_0, \nu_0)$ be the solutions of $\mathsf{P}(\widehat{\Omega})$, $\mathsf{P}(\bm{0}_{k \times k})$, respectively. Then by Lemma A.1 in \citet{kim2025semiparametric}, it follows that
\begin{equation} \label{eqn:lem10-1}
    \begin{aligned}
    \Vert\widehat{y} - y_0 \Vert_2 &= O_{\Pb}\left( \left\Vert \widehat{\Omega} \right\Vert_2 \right) \\
    &= O_{\Pb}\left( \left\Vert \Pb\left\{\widehat{\Sigma}^a - \Sigma^a \right\} \right\Vert_F \right) \\
    &= O_{\Pb}\left( \sum_{i,j} \left\vert \Pb\left\{\widehat{\Sigma}^a_{ij} - \Sigma^a_{ij} \right\} \right\vert \right) \\
    & =O_\Pb\left( \sum_{i}\Vert \widehat{\pi} - \pi \Vert_{2,\Pb}\Vert \widehat{\mu}_i - \mu_i \Vert_{2,\Pb} + \sum_{j}\Vert \widehat{\pi} - \pi \Vert_{2,\Pb}\Vert \widehat{\mu}_j - \mu_j \Vert_{2,\Pb} + \sum_{ij} \Vert \widehat{\pi} - \pi \Vert_{2,\Pb}  \Vert \widehat{\sigma}_{ij} - \sigma_{ij} \Vert_{2,\Pb} \right)\\
    & =O_\Pb\left(\Vert \widehat{\pi} - \pi \Vert_{2,\Pb} \Sigma_{ij}\left\{ \Vert  \Vert \widehat{\mu}_j - \mu_j \Vert_{2,\Pb} + \Vert \widehat{\sigma}_{ij} - \sigma_{ij} \Vert_{2,\Pb} \right\}\right),
    \end{aligned}
\end{equation}
where the fourth line follows by rearranging the second-order remainder terms of the estimators $\psi_{ij},\psi_{i},\psi_{j}$, $1 \leq i,j \leq k$ defined in the appendix \ref{proof-prop-double-robust}.

The program $\mathsf{P}(\widehat{\Omega})$ is equivalent to \eqref{eqn:opt-shrinkage-re}. Moreover, using the same logic used in Theorem 2.1 of \citet{ledoit2004well}, it can be deduced that the solution of the program $\mathsf{P}(\bm{0}_{k \times k})$ is given by $(\varpi^*,\nu)$ defined in \eqref{eqn:cov-shrinkage-true}: i.e., $y_0 = (\varpi^*,\nu)$. Hence, by \eqref{eqn:lem10-1} and the given consistency conditions, we finally obtain the desired result:
\begin{align*}
\left\Vert \tilde{\Sigma}^*_S - {\Sigma}^*_S \right\Vert_F &= O_\Pb\left(\Vert \widehat{\pi} - \pi \Vert_{2,\Pb} \Sigma_{ij}\left\{ \Vert  \Vert \widehat{\mu}_j - \mu_j \Vert_{2,\Pb} + \Vert \widehat{\sigma}_{ij} - \sigma_{ij} \Vert_{2,\Pb} \right\}\right)\\
&= o_\Pb(1).
\end{align*}

\end{proof}

Next, we show that our proposed estimator \eqref{eqn:cov-shrinkage-est} converges in probability to $\tilde{\Sigma}^*_S$, which concludes the first part of our proof of Theorem \ref{thm:consistency-cov-shrinkage}.

\begin{lemma}\label{lem:shrinkage-lemma-2}
Let $\Vert \widehat{\pi} - \pi \Vert_{2,\Pb} \Sigma_{ij}\left\{ \Vert  \Vert \widehat{\mu}_j - \mu_j \Vert_{2,\Pb} + \Vert \widehat{\sigma}_{ij} - \sigma_{ij} \Vert_{2,\Pb} \right\} = O_\Pb(r(n))$. Then,
\[
\Vert \widehat{\Sigma}^*_S - \tilde{\Sigma}^*_S \Vert_F = O_\Pb\left(n^{-1/2} \vee r(n)\right).
\]
\end{lemma}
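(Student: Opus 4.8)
The plan is to exploit the fact that $\widehat{\Sigma}^*_S$ and $\tilde{\Sigma}^*_S$ are built from the \emph{same} matrix $\widehat{\Sigma}^a$ and differ only through their shrinkage coefficients $(\widehat{\rho},\widehat{\nu})$ versus $(\varpi^*,\nu)$. Writing out the two definitions and collecting terms gives
\[
\widehat{\Sigma}^*_S - \tilde{\Sigma}^*_S = \left(\widehat{\rho}\,\widehat{\nu} - \varpi^*\nu\right)\mathbb{I} + \left(\varpi^* - \widehat{\rho}\right)\widehat{\Sigma}^a .
\]
Using $\Vert\mathbb{I}\Vert_F = 1$ under the normalized inner product $\langle\cdot,\cdot\rangle$, the bounds $\widehat{\rho},\varpi^*\in[0,1]$, and $\Vert\widehat{\Sigma}^a\Vert_F = O_\Pb(1)$ (since $\widehat{\Sigma}^a\to\Sigma^a$ by Lemma \ref{lem:double-robust}), the triangle inequality together with the algebraic identity $\widehat{\rho}\widehat{\nu}-\varpi^*\nu = \widehat{\rho}(\widehat{\nu}-\nu)+\nu(\widehat{\rho}-\varpi^*)$ reduces the claim to controlling two scalar errors, namely $\Vert\widehat{\Sigma}^*_S - \tilde{\Sigma}^*_S\Vert_F \lesssim |\widehat{\nu}-\nu| + |\widehat{\rho}-\varpi^*|$ up to an $O_\Pb(1)$ factor. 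I would then treat the two terms separately.

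The first term is immediate. Since $\nu = \tfrac1k\sum_i\Sigma^a_{ii}$ and $\widehat{\nu}=\tfrac1k\sum_i\widehat{\Sigma}^a_{ii}$ with $k$ finite, the per-entry rate established in the proof of Lemma \ref{lem:double-robust} gives $|\widehat{\nu}-\nu|\le\tfrac1k\sum_i|\widehat{\Sigma}^a_{ii}-\Sigma^a_{ii}| = O_\Pb(n^{-1/2}\vee r(n))$. This is exactly the advertised rate, so estimation of the diagonal average $\nu$ is the binding first-order contribution.

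The crux is to show that the shrinkage-intensity error $|\widehat{\rho}-\varpi^*|$ is \emph{higher order}, i.e. $O_\Pb(n^{-1}\vee r(n)^2)$, which is $o_\Pb(n^{-1/2}\vee r(n))$. The key is that both denominators are bounded away from zero while both numerators are of order $1/n$. For the denominators, $\widehat{\delta}^2=\Vert\widehat{\Sigma}^a-\widehat{\nu}\mathbb{I}\Vert_F^2$ and $\delta^2=\Pb\Vert\widehat{\Sigma}^a-\nu\mathbb{I}\Vert_F^2$ both converge to $\alpha^2:=\Vert\Sigma^a-\nu\mathbb{I}\Vert_F^2$, which I take to be strictly positive (the non-degenerate case $\Sigma^a\neq\nu\mathbb{I}$), so each is bounded below with probability tending to one. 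For the numerators, the influence-function expansion from the proof of Lemma \ref{lem:double-robust} together with the finite fourth-moment assumption yields $\beta^2=\Pb\Vert\widehat{\Sigma}^a-\Sigma^a\Vert_F^2 = O_\Pb(n^{-1}\vee r(n)^2)$, while $\widehat{\beta}^2\le\tilde{\beta}^2 = n^{-1}\big[\tfrac1n\sum_t\Vert\widetilde{\Sigma}_t-\widehat{\Sigma}^a\Vert_F^2\big]$; observing that the $(i,j)$-entry of $\widetilde{\Sigma}_t-\widehat{\Sigma}^a$ equals the centered quantity $\widehat{\phi}^a_{ij}(Z_t)-\Pn\{\widehat{\phi}^a_{ij}\}$, the bracketed term is a sample variance converging to $\tfrac1k\sum_{ij}\var(\phi^a_{ij})=O_\Pb(1)$, so $\widehat{\beta}^2=O_\Pb(n^{-1})$. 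Hence both $\widehat{\rho}=\widehat{\beta}^2/\widehat{\delta}^2$ and $\varpi^*=\beta^2/\delta^2$ are $O_\Pb(n^{-1}\vee r(n)^2)$, and therefore $|\widehat{\rho}-\varpi^*|\le\widehat{\rho}+\varpi^* = O_\Pb(n^{-1}\vee r(n)^2)$ without any need for a refined comparison.

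Combining the bounds on the two scalar errors gives $\Vert\widehat{\Sigma}^*_S-\tilde{\Sigma}^*_S\Vert_F = O_\Pb(n^{-1/2}\vee r(n))$, as claimed. The main obstacle is conceptual rather than computational: one must recognize that, unlike in the classical Ledoit--Wolf analysis, no sharp consistency of $\tilde{\beta}^2$ for $\beta^2$ is required, because the target rate $n^{-1/2}\vee r(n)$ is coarse relative to the $O_\Pb(1/n)$ magnitude of the shrinkage numerators; this automatically renders the shrinkage-intensity error negligible and isolates estimation of $\nu$ as the only first-order term. The one point that genuinely requires care is the assumption $\alpha^2>0$ needed to keep the denominators bounded away from zero — the degenerate case $\Sigma^a=\nu\mathbb{I}$ would have to be handled separately, though the truncation $\widehat{\beta}^2=\min\{\tilde{\beta}^2,\widehat{\delta}^2\}$ ensures $\widehat{\rho}\le 1$ and keeps $\widehat{\Sigma}^*_S$ well defined regardless.
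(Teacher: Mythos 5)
Your argument is correct, but it takes a genuinely leaner route than the paper's. The paper proves Lemma \ref{lem:shrinkage-lemma-2} by establishing, separately, that each of $\widehat{\nu}$, $\widehat{\delta}$, $\widehat{\beta}$ converges to its population analogue at rate $n^{-1/2}\vee r(n)$; to do so it introduces the virtual sample covariance matrix $\widehat{S}^a$ of the unobservable counterfactual outcomes as an intermediary and imports Theorem 3.1 and Lemmas 3.3--3.4 of Ledoit and Wolf, with the analysis of $\widehat{\delta}$ in particular requiring a chain of Jensen and triangle-inequality manipulations. You bypass all of that by exploiting the coarseness of the target rate: after the decomposition $\widehat{\Sigma}^*_S-\tilde{\Sigma}^*_S=\widehat{\rho}(\widehat{\nu}-\nu)\mathbb{I}+(\widehat{\rho}-\varpi^*)(\nu\mathbb{I}-\widehat{\Sigma}^a)$, you need only $\vert\widehat{\nu}-\nu\vert=O_\Pb(n^{-1/2}\vee r(n))$ (entry-by-entry from Lemma \ref{lem:double-robust}, $k$ fixed), the denominators $\widehat{\delta}^2,\delta^2$ bounded away from zero, and the crude magnitude bounds $\widehat{\beta}^2,\beta^2=O_\Pb(n^{-1}\vee r(n)^2)$, which render $\vert\widehat{\rho}-\varpi^*\vert\leq\widehat{\rho}+\varpi^*$ higher order; your identification of $\widetilde{\Sigma}_{t,ij}-\widehat{\Sigma}^a_{ij}=\widehat{\phi}^a_{ij}(Z_t)-\Pn\{\widehat{\phi}^a_{ij}\}$ as a centered quantity whose average square is a bounded sample variance is exactly the right shortcut for $\tilde{\beta}^2=O_\Pb(n^{-1})$. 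What your route buys is brevity and the explicit insight that no sharp consistency of $\tilde{\beta}^2$ for $\beta^2$ is needed; what the paper's route buys is individual rates for $\widehat{\delta}$ and $\widehat{\beta}$ (unused here). Two points to tighten: the claim $\beta^2=\Pb\Vert\widehat{\Sigma}^a-\Sigma^a\Vert_F^2=O_\Pb(n^{-1}\vee r(n)^2)$ concerns a conditional second moment, so it does not follow from the in-probability rate of Lemma \ref{lem:double-robust} alone --- you need the $L_2(\Pb)$-norm control supplied by Lemma \ref{lem:bound-L2P-norm} and Proposition \ref{prop:bound-L2P-norm-Sigma}, which is what the fourth-moment and boundedness assumptions feed into; and the non-degeneracy $\Sigma^a\neq\nu\mathbb{I}$ that you flag is indeed required, but the paper's proof tacitly assumes it as well (it too divides by $\widehat{\delta}^2$), so you are not adding a hypothesis beyond what the paper already uses.
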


\begin{proof}
It suffices to show that $\widehat{\nu}, \widehat{\delta}, \widehat{\beta}$ are consistent at the specified rate. Let $\widehat{S}^a$ denote the (virtual) sample covariance matrix that can be computed from $Y^a_1, ..., Y^a_n$ (note that $\widehat{S}^a$ can never be computed in reality).

\textbf{i)} $\boldmath{\widehat{\nu} - \nu = O_\Pb\left(n^{-1/2} \vee r(n)\right)}$. 
$\forall 1 \leq i,j \leq k$, we have $\widehat{\Sigma}^a_{ij} - {\Sigma}^a_{ij} = O_\Pb(r(n))$ by Lemma \ref{lem:double-robust}. It also follows $\widehat{S}^a_{ij} - {\Sigma}^a_{ij} = O_\Pb(n^{-1/2})$ by the central limit theorem. Hence by the continuous mapping theorem,
\[
\Vert \widehat{\Sigma}^a - \widehat{S}^a \Vert_F = O_\Pb\left(n^{-1/2} \vee r(n)\right).
\]

Now we have
\begin{align*}
    \left\vert \widehat{\nu} - \nu \right\vert&= \left\vert \langle \widehat{\Sigma}^a, \mathbb{I} \rangle - \langle {\Sigma}^a, \mathbb{I} \rangle \right\vert\\
    &= \left\vert \left\langle \left(\widehat{\Sigma}^a - \widehat{S}^a + \widehat{S}^a \right), \mathbb{I} \right\rangle - \left\langle {\Sigma}^a, \mathbb{I} \right\rangle \right\vert\\
    & \leq \left\vert \left\langle \widehat{\Sigma}^a - \widehat{S}^a , \mathbb{I} \right\rangle \right\vert +\left\vert \left\langle \widehat{S}^a, \mathbb{I} \right\rangle - \left\langle {\Sigma}^a, \mathbb{I} \right\rangle \right\vert\\
    &= O_\Pb\left(n^{-1/2} \vee r(n)\right) + O_\Pb(n^{-1/2}),
\end{align*}
which yields the desired conclusion.

\textbf{ii)} $\boldmath{\widehat{\delta} - \delta=O_\Pb\left(n^{-1/2} \vee r(n)\right)}$. 
First note that
\begin{align*}
     & \widehat{\delta} -  \delta  \\
     &=  \Vert \widehat{\Sigma}^a - \widehat{\nu}\mathbb{I} \Vert_F - \left(\Pb \Vert\widehat{\Sigma}^a - \nu \mathbb{I} \Vert_F^2 \right)^{1/2} \\
    &= \left\Vert \widehat{\Sigma}^a - \widehat{S}^a  + \widehat{S}^a  - \langle \widehat{\Sigma}^a - \widehat{S}^a  + \widehat{S}^a , \mathbb{I} \rangle \mathbb{I} \right\Vert_F - \left(\Pb \Vert\widehat{\Sigma}^a - \widehat{S}^a + \widehat{S}^a - \nu \mathbb{I} \Vert_F^2 \right)^{1/2} \\
    & \leq \left\Vert \widehat{\Sigma}^a - \widehat{S}^a \right\Vert_F + \left\Vert \widehat{S}^a  - \langle \widehat{S}^a , \mathbb{I} \rangle \mathbb{I} \right\Vert_F + \left\Vert \langle \widehat{\Sigma}^a - \widehat{S}^a  , \mathbb{I} \rangle \mathbb{I} \right\Vert_F - \left(\Pb \Vert\widehat{S}^a - \nu \mathbb{I} \Vert_F^2 \right)^{1/2} + \left(\Pb \Vert\widehat{\Sigma}^a - \widehat{S}^a \Vert_F^2 \right)^{1/2},
\end{align*}
where the second last inequality follows by Jensen's Inequality and the last by the triangle inequality and the fact that $\sqrt{\Vert A + B \Vert_F^2} \geq \sqrt{\Vert A \Vert_F^2} - \sqrt{\Vert B \Vert_F^2}$, $\forall A,B \in \mathbb{R}^{k\times k}$.

From part i), we know $\left\Vert \widehat{\Sigma}^a - \widehat{S}^a \right\Vert_F =O_\Pb\left(n^{-1/2} \vee r(n)\right)$. Since $\left\Vert \langle \widehat{\Sigma}^a - \widehat{S}^a  , \mathbb{I} \rangle \mathbb{I} \right\Vert_F \leq \left\Vert \widehat{\Sigma}^a - \widehat{S}^a \right\Vert_F$, it follows $\left\Vert \langle \widehat{\Sigma}^a - \widehat{S}^a  , \mathbb{I} \rangle \mathbb{I} \right\Vert_F =O_\Pb\left(n^{-1/2} \vee r(n)\right)$. Thus the first and third terms in the last display converge at the desired rate.

For the fifth term, by the triangle inequality
\[
    \left(\Pb \Vert\widehat{\Sigma}^a - \widehat{S}^a \Vert_F^2 \right)^{1/2} \equiv \Vert\widehat{\Sigma}^a - \widehat{S}^a \Vert_{F,\Pb} \leq \Vert\widehat{\Sigma}^a - \Sigma^a \Vert_{F,\Pb} + \Vert \widehat{S}^a - \Sigma^a \Vert_{F,\Pb},
\]
where we view $\Vert \cdot \Vert_{F,\Pb}$ as an element-wise $L_2(\Pb)$-norm for matrix. By Proposition \ref{prop:bound-L2P-norm-Sigma}, we have
\begin{align*}
    \Vert\widehat{\Sigma}^a - \Sigma^a \Vert_{F,\Pb} &\leq \sum_{i,j} \frac{\Vert \widehat{\Sigma}^a_{ij} \Vert_{2,\Pb}}{\sqrt{n}}  + \sum_{i,j} \left\vert \Pb\left\{\widehat{\Sigma}^a_{ij} - \Sigma^a_{ij} \right\} \right\vert \\
    &= O_\Pb\left( \frac{1}{\sqrt{n}} \right) + o_\Pb\left( r(n) \right).
\end{align*}

Further, by Theorem 3.1 in \citet{ledoit2004well} it follows that  
\[
\Vert \widehat{S}^a - \Sigma^a \Vert_{F,\Pb} = O\left( \frac{1}{\sqrt{n}} \right).
\]

Therefore, we get $\Vert\widehat{\Sigma}^a - \widehat{S}^a \Vert_{F,\Pb} = O_\Pb\left(n^{-1/2} \vee r(n)\right)$. Bringing these results together, we have
\begin{align*}
    & \left\Vert \widehat{\Sigma}^a - \widehat{S}^a \right\Vert_F + \left\Vert \langle \widehat{\Sigma}^a - \widehat{S}^a  , \mathbb{I} \rangle \mathbb{I} \right\Vert_F + \left(\Pb \Vert\widehat{\Sigma}^a - \widehat{S}^a \Vert_F^2 \right)^{1/2} + \left\Vert \widehat{S}^a  - \langle \widehat{S}^a , \mathbb{I} \rangle \mathbb{I} \right\Vert_F - \left(\Pb \Vert\widehat{S}^a - \nu \mathbb{I} \Vert_F^2 \right)^{1/2} \\
    &= O_\Pb\left(n^{-1/2} \vee r(n)\right) + \left\Vert \widehat{S}^a  - \langle \widehat{S}^a , \mathbb{I} \rangle \mathbb{I} \right\Vert_F - \left(\Pb \Vert\widehat{S}^a - \nu \mathbb{I} \Vert_F^2 \right)^{1/2}. 
\end{align*}

However, since we consider the case of fixed $p$, Lemma 3.3 of \citet{ledoit2004well} implies
\[
\left\Vert \widehat{S}^a  - \langle \widehat{S}^a , \mathbb{I} \rangle \mathbb{I} \right\Vert_F - \left(\Pb \Vert\widehat{S}^a - \nu \mathbb{I} \Vert_F^2 \right)^{1/2} = O\left( \frac{1}{\sqrt{n}} \right),
\]
which finally leads to 
\[
\widehat{\delta} -  \delta = O_\Pb\left(n^{-1/2} \vee r(n)\right).
\]

Similarly, one can also show that
\begin{align*}
\delta - \widehat{\delta} & \leq  \left(\Pb \Vert\widehat{S}^a - \nu \mathbb{I} \Vert_F^2 \right)^{1/2} - \left\Vert \widehat{S}^a  - \langle \widehat{S}^a , \mathbb{I} \rangle \mathbb{I} \right\Vert_F  + O_\Pb\left(n^{-1/2} \vee r(n)\right) \\
&=  O_\Pb\left(n^{-1/2} \vee r(n)\right).
\end{align*}

Hence, we obtain $\widehat{\delta} -  \delta = O_\Pb\left(n^{-1/2} \vee r(n)\right)$. 

\textbf{iii)} $\boldmath{\widehat{\beta} - \beta = O_\Pb\left(n^{-1/2} \vee r(n)\right)}$. 
This in fact follows because with fixed $k$, each of $\widehat{\beta}$, $\beta$ vanishes quickly to zero. To show this, first let $\widehat{S}^a = \sum_{t=1}^n \widehat{S}_t^a$ and consider the following quantity
\begin{align*}
     \sqrt{\Pb\Vert\widehat{\Sigma}^a - \widehat{S}^a + \widehat{S}^a - \Sigma^a  \Vert_F^2}.
\end{align*}

By the Cauchy-Schwarz inequality we have
\begin{align*}
    \Pb\Vert\widehat{\Sigma}^a - \widehat{S}^a + \widehat{S}^a - \Sigma^a  \Vert_F^2 
    &= \Pb\Vert\widehat{\Sigma}^a - \widehat{S}^a \Vert_F^2 + 2\Pb \left\{\sum_{ij} (\widehat{\Sigma}^a_{ij} - \widehat{S}^a_{ij})(\widehat{S}^a_{ij}-\Sigma^a_{ij}) \right\} + \Pb\Vert \widehat{S}^a-\Sigma^a \Vert_F^2\\
    &\leq \Pb\Vert\widehat{\Sigma}^a - \widehat{S}^a \Vert_F^2 + 2\Pb \left\{\Vert (\widehat{\Sigma}^a - \widehat{S}^a)\Vert_F\Vert (\widehat{S}^a - {\Sigma}^a )\Vert_F\right\}  + \Pb\Vert \widehat{S}^a-\Sigma^a \Vert_F^2\\
    &\leq \Pb\Vert\widehat{\Sigma}^a - \widehat{S}^a \Vert_F^2 + 2\sqrt{\Pb\Vert (\widehat{\Sigma}^a - \widehat{S}^a)\Vert_F^2}\sqrt{\Pb\Vert (\widehat{S}^a-\Sigma^a)\Vert_F^2}  + \Pb\Vert \widehat{S}^a-\Sigma^a \Vert_F^2.
\end{align*}

In part ii), we showed $\sqrt{\Pb\Vert (\widehat{\Sigma}^a - \widehat{S}^a)\Vert_F^2} =O_\Pb\left(n^{-1/2} \vee r(n)\right)$ and $\sqrt{\Pb\Vert (\widehat{S}^a - {\Sigma}^a )\Vert_F^2} =O\left(n^{-1/2}\right)$. Hence it follows
\begin{align*}
    \sqrt{\Pb\Vert (\widehat{\Sigma}^a - \widehat{S}^a)\Vert_F^2}\sqrt{\Pb\Vert (\widehat{S}^a-\Sigma^a)\Vert_F^2}
    &= O_\Pb\left(n^{-1/2} \vee r(n)\right)O_\Pb\left( n^{-1/2} \right),
\end{align*}
and consequently, we have
\begin{align*}
    \sqrt{\Pb\Vert\widehat{\Sigma}^a - \widehat{S}^a + \widehat{S}^a - \Sigma^a  \Vert_F^2}
    = O_\Pb\left(n^{-1/2} \vee r(n)\right).
\end{align*}

Next, we consider
\begin{align*}
    \sqrt{\frac{1}{n^2}\sum_{t=1}^n \Vert \widetilde{\Sigma}_t -\widehat{S}_t^a + \widehat{S}_t^a - \widehat{S}^a + \widehat{S}^a - \widehat{\Sigma}^a \Vert_F^2}.
\end{align*}

We shall first show that
\[
\frac{1}{n}\sum_{t=1}^n \Vert \widetilde{\Sigma}_t -\widehat{S}_t^a \Vert_F^2 = O_\Pb(1).
\]

To this end, we note that
\begin{align*}
    \frac{1}{n}\sum_{t=1}^n \Vert \widetilde{\Sigma}_t - \Sigma^a \Vert_F^2 &= \sum_{ij}\left\{\frac{1}{n}\sum_{t=1}^n \left( \widehat{\phi}^a_{ij}(Z_t) -  \Pn \widehat{\phi}^a_{i} \Pn \widehat{\phi}^a_{j} - \Sigma^a_{ij} \right)^2 \right\}\\
    &= \sum_{ij}\left\{\frac{1}{n}\sum_{t=1}^n \left( \widehat{\phi}^a_{ij}(Z_t) -  \Pn \widehat{\phi}^a_{i} \Pn \widehat{\phi}^a_{j} - \Pn\widehat{\phi}^a_{ij} + \Pn \widehat{\phi}^a_{i} \Pn \widehat{\phi}^a_{j} +\Pn\widehat{\phi}^a_{ij} - \Pn \widehat{\phi}^a_{i} \Pn \widehat{\phi}^a_{j}  - \Sigma^a_{ij} \right)^2 \right\}\\
    &= \sum_{ij}\left\{\frac{1}{n}\sum_{t=1}^n \left( \widehat{\phi}^a_{ij}(Z_t)  - \Pn\widehat{\phi}^a_{ij}  +\Pn\widehat{\phi}^a_{ij} - \Pn \widehat{\phi}^a_{i} \Pn \widehat{\phi}^a_{j}  - \Sigma^a_{ij} \right)^2 \right\}\\
    &\leq \sum_{ij}\Bigg\{\frac{1}{n}\sum_{t=1}^n \left( \widehat{\phi}^a_{ij}(Z_t)  - \Pn\widehat{\phi}^a_{ij} \right)^2 + \left(\Pn\widehat{\phi}^a_{ij} - \Pn \widehat{\phi}^a_{i} \Pn \widehat{\phi}^a_{j}  - \Sigma^a_{ij} \right)^2  \\
    & \quad + \left\vert \Pn\widehat{\phi}^a_{ij} - \Pn \widehat{\phi}^a_{i} \Pn \widehat{\phi}^a_{j}  - \Sigma^a_{ij} \right\vert \sqrt{\frac{1}{n}\sum_{t=1}^n \left( \widehat{\phi}^a_{ij}(Z_t)  - \Pn\widehat{\phi}^a_{ij} \right)^2} \Bigg\},
\end{align*}
and that $\forall 1 \leq i,j \leq k$,
\begin{align*}
    & \frac{1}{n}\sum_{t=1}^n \left( \widehat{\phi}^a_{ij}(Z_t)  - \Pn\widehat{\phi}^a_{ij} \right)^2 \xrightarrow[]{p} \var\left(\widehat{\phi}^a_{ij}(Z) \mid Z_{n+1},...,Z_{2n} \right)\\
    & \Pn\widehat{\phi}^a_{ij} - \Pn \widehat{\phi}^a_{i} \Pn \widehat{\phi}^a_{j}  - \Sigma^a_{ij} = O_\Pb(1).
\end{align*}

Then from the given boundedness conditions, $\var\left(\widehat{\phi}^a_{ij}(Z) \mid Z_{n+1},...,Z_{2n} \right)$ is bounded in probability, and thus so is the RHS of the last inequality. Hence $\frac{1}{n}\sum_{t=1}^n \Vert \widetilde{\Sigma}_t -\widehat{S}_t^a \Vert_F^2 = O_\Pb(1)$.

Next, by the unbiasedness of the sample covariance estimator we get
\begin{align*}
    \frac{1}{n}\sum_{t=1}^n \Vert \widehat{S}_t^a  - \Sigma^a \Vert_F^2 = \sum_{ij}\left\{\frac{1}{n}\sum_{t=1}^n \left( \widehat{S}^a_{t,ij} - \Sigma^a_{ij} \right)^2 \right\}
    \xrightarrow[]{p} \sum_{ij} \var\left(\widehat{S}^a_{ij}\right) < \infty
\end{align*}
as $Y_i$'s have finite fourth moments. Thus $\frac{1}{n}\sum_{t=1}^n \Vert \widehat{S}_t^a  - \Sigma^a \Vert_F^2 = O_\Pb(1)$.

Using these facts, and again by the Cauchy Schwarz inequality, we obtain
\[
\frac{1}{n}\sum_{t=1}^n \Vert \widetilde{\Sigma}_t -\widehat{S}_t^a \Vert_F^2 = O_\Pb(1)
\]
as desired.

Moreover, the terms involving $\Vert \widehat{\Sigma}^a - \widehat{S}^a \Vert_F$ converge at fast rates since
$\Vert \widehat{\Sigma}^a - \widehat{S}^a \Vert_F = O_\Pb\left(n^{-1/2} \vee r(n)\right)$ as shown in part i). Therefore, we have
\begin{align*}
    \frac{1}{n^2}\sum_{i=1}^n \Vert \widetilde{\Sigma}_i -\widehat{S}_i^a + \widehat{S}_i^a - \widehat{S}^a + \widehat{S}^a - \widehat{\Sigma}^a \Vert_F^2 &=
    O\left(n^{-1}\right) +\frac{1}{n^2}\sum_{i=1}^n \Vert \widehat{S}_i^a - \widehat{S}^a \Vert_F^2,
\end{align*}
which follows by simple rearrangement and the Cauchy Schwarz inequality.

On the other hand, Lemma 3.4 of \citet{ledoit2004well} indicates that
\[
\frac{1}{n^2}\sum_{i=1}^n \Vert \widehat{S}_i^a - \widehat{S}^a \Vert_F^2 - \Pb\Vert \widehat{S}^a-\Sigma^a \Vert_F^2 = O_\Pb\left(n^{-1} \right).
\]

Now we can bring all the results together, to get to the conclusion
\begin{align*}
    \sqrt{\frac{1}{n^2}\sum_{t=1}^n \Vert \widetilde{\Sigma}_t -\widehat{S}_t^a + \widehat{S}_t^a - \widehat{S}^a + \widehat{S}^a - \widehat{\Sigma}^a \Vert_F^2}
    &= O_\Pb\left(n^{-1/2} \vee r(n)\right),
\end{align*}
which completes the proof.
\end{proof}

The followings are the auxiliary technical results used for the proof of Lemma \ref{lem:shrinkage-lemma-2}.


\begin{lemma} \label{lem:bound-L2P-norm}
Let $\Pn$ denote the empirical measure over an iid sample $(Z_{1},\ldots,Z_{n})$.
Also we let $f$ and $\hat{f}$ be any function and its estimator constructed in a separate, independent sample $(Z_{n+1},\ldots,Z_{2n})$, respectively. Then we have
\[
\Vert \Pn\hat{f} - \Pb f \Vert_{2,\Pb} \leq \frac{\left\Vert \hat{f}\right\Vert_{2,\Pb}}{\sqrt{n}} + \Pb\left(\hat{f} - f\right).
\]
\end{lemma}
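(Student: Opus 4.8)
The plan is to split $\Pn\hat f - \Pb f$ into a centered empirical-process term plus a deterministic bias term, and then bound each separately under the $L_2(\Pb)$ norm. Consistent with the convention in the footnote of Section \ref{subsec:counterfactual-shrinkage}, I read $\Vert \cdot \Vert_{2,\Pb}$ applied to a statistic of the estimation sample as a conditional root-mean-square given the auxiliary sample $\mathsf{D}_0^n$, so that $\Pb$ integrates over $(Z_1,\dots,Z_n)$ while holding $\mathsf{D}_0^n$ fixed. Concretely, adding and subtracting $\Pb\hat f = \int \hat f\, d\Pb$ (the integral treating $\hat f$ as a fixed function, i.e. conditioning on $\mathsf{D}_0^n$) gives
\[
\Pn\hat f - \Pb f = (\Pn - \Pb)\hat f + \Pb(\hat f - f),
\]
and the triangle inequality for $\Vert \cdot \Vert_{2,\Pb}$ reduces the problem to bounding the two pieces.

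The second piece is immediate: conditional on $\mathsf{D}_0^n$, the quantity $\Pb(\hat f - f)$ is a deterministic constant, so its $L_2(\Pb)$ norm over the estimation sample equals $\lvert \Pb(\hat f - f)\rvert$, which is the additive bias term on the right-hand side (the authors suppress the absolute value).

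For the first piece, the key observation — and the one place where sample splitting (Assumption \ref{assumption:B1}) does the real work — is that conditional on $\mathsf{D}_0^n$ the function $\hat f$ is fixed, so $(\Pn - \Pb)\hat f = n^{-1}\sum_{t=1}^n\{\hat f(Z_t) - \Pb\hat f\}$ is an average of i.i.d. mean-zero terms. Hence the cross terms vanish in the second moment and
\[
\Vert (\Pn - \Pb)\hat f\Vert_{2,\Pb}^2 = \Pb\big[((\Pn-\Pb)\hat f)^2\big] = \frac{1}{n}\,\var\big(\hat f(Z)\big) = \frac{1}{n}\Big\{\Pb(\hat f^2) - (\Pb\hat f)^2\Big\} \le \frac{1}{n}\Pb(\hat f^2) = \frac{\Vert\hat f\Vert_{2,\Pb}^2}{n}.
\]
Taking square roots yields $\Vert (\Pn - \Pb)\hat f\Vert_{2,\Pb} \le \Vert\hat f\Vert_{2,\Pb}/\sqrt n$, and combining with the bias term completes the argument.

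I do not expect a serious obstacle here; the only point requiring care is the bookkeeping of what $\Vert \cdot \Vert_{2,\Pb}$ means for a statistic of the whole sample. Once one fixes the convention that $\Pb$ integrates over the estimation sample with $\mathsf{D}_0^n$ held fixed, the decomposition and the exact variance identity are routine, and it is precisely the conditioning that lets the centered empirical average collapse to a variance scaled by $1/n$.
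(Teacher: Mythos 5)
Your proposal is correct and follows essentially the same route as the paper: the same decomposition $\Pn\hat f - \Pb f = (\Pn-\Pb)\hat f + \Pb(\hat f - f)$, the triangle inequality, and the bound $\Vert(\Pn-\Pb)\hat f\Vert_{2,\Pb}^2 = n^{-1}\var(\hat f \mid \mathsf{D}_0^n) \le n^{-1}\Vert\hat f\Vert_{2,\Pb}^2$, where the paper delegates the conditional-variance step to an external lemma (Lemma C.3 of Kim, 2018) while you compute it directly. Your explicit remarks that the conditioning on $\mathsf{D}_0^n$ is what makes the centered term an i.i.d.\ average, and that the bias term should strictly carry an absolute value, are both accurate and in fact slightly cleaner than the paper's write-up.
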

\begin{proof}
\begin{align*}
    \Vert \Pn\hat{f} - \Pb f \Vert_{2,\Pb} &= \Vert (\Pn - \Pb)\hat{f} - \Pb(\hat{f} - f) \Vert_{2,\Pb} \\
    & \leq  \Vert (\Pn - \Pb)\hat{f} \Vert_{2,\Pb} - \Vert \Pb(\hat{f} - f) \Vert_{2,\Pb}\\
    & \leq \sqrt{\frac{\var\left[\hat{f}\mid Z_{n+1},\ldots,Z_{2n} \right]}{n}} + \Vert \Pb(\hat{f} - f) \Vert_{2,\Pb} \\
    & \leq \frac{\left\Vert \hat{f}\right\Vert_{2,\Pb}}{\sqrt{n}} + \Pb\left(\hat{f} - f\right),
\end{align*}
where the third line follows by Lemma C.3 in \citet{kim2018causal}. 
\end{proof}

\begin{proposition}\label{prop:bound-L2P-norm-Sigma}
Suppose that $\forall 1 \leq i,j \leq k$, $\Vert \widehat{\sigma}_{ij} \Vert_{2,\Pb} = O_\Pb(1)$ and $\Vert \widehat{\mu}_{i} \Vert_{2,\Pb} = O_\Pb(1)$. Then, we have
\[
\Vert \widehat{\Sigma}^a_{ij} - {\Sigma}^a_{ij} \Vert_{2,\Pb} = O_\Pb\left(\frac{1}{\sqrt{n}}\right) + o_\Pb\left(r(n)\right),
\]
where $r(n)$ is defined in Lemma \ref{lem:shrinkage-lemma-2}.
\end{proposition}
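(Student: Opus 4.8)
The plan is to split $\widehat{\Sigma}^a_{ij} - \Sigma^a_{ij}$ into a ``second-moment'' piece and a ``product-of-means'' piece, then control each in $L_2(\Pb)$ (conditionally on the nuisance sample $\mathsf{D}_0^n$, by Assumption \ref{assumption:B1}) via Lemma \ref{lem:bound-L2P-norm}. Writing $\psi^a_{ij} = \E[Y^a_iY^a_j]$ and $\psi^a_i = \E[Y^a_i]$, so that $\Sigma^a_{ij} = \psi^a_{ij} - \psi^a_i\psi^a_j$ and $\widehat{\Sigma}^a_{ij} = \Pn\widehat{\phi}^a_{ij} - (\Pn\widehat{\phi}^a_i)(\Pn\widehat{\phi}^a_j)$, I would begin from
\[
\widehat{\Sigma}^a_{ij} - \Sigma^a_{ij} = \left(\Pn\widehat{\phi}^a_{ij} - \psi^a_{ij}\right) - \left\{ (\Pn\widehat{\phi}^a_i)(\Pn\widehat{\phi}^a_j) - \psi^a_i\psi^a_j \right\}
\]
and bound the two bracketed terms separately using the triangle inequality for $\Vert\cdot\Vert_{2,\Pb}$.

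For the first term I would apply Lemma \ref{lem:bound-L2P-norm} with $\hat f = \widehat{\phi}^a_{ij}$ and $f = \phi^a_{ij}$, using that the uncentered efficient influence function satisfies $\Pb\phi^a_{ij} = \psi^a_{ij}$. This gives $\Vert \Pn\widehat{\phi}^a_{ij} - \psi^a_{ij}\Vert_{2,\Pb} \le \Vert\widehat{\phi}^a_{ij}\Vert_{2,\Pb}/\sqrt n + \Pb(\widehat{\phi}^a_{ij} - \phi^a_{ij})$. The leading factor is $O_\Pb(1)$ because $\widehat\pi_a$ is bounded below by $\epsilon$ (Assumption \ref{assumption:B2}), $\Vert\widehat\sigma_{ij}\Vert_{2,\Pb} = O_\Pb(1)$ by hypothesis, and the $Y^a_i$ have finite fourth moments; so this contributes $O_\Pb(n^{-1/2})$. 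The bias $\Pb(\widehat{\phi}^a_{ij} - \phi^a_{ij})$ is the usual doubly-robust second-order remainder, of order $\Vert\widehat\pi_a - \pi_a\Vert_{2,\Pb}\Vert\widehat\sigma_{ij} - \sigma_{ij}\Vert_{2,\Pb}$, a single second-order summand that is dominated by $r(n)$ and matches the $o_\Pb(r(n))$ contribution in the statement.

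For the product term I would linearize. Writing $A_i \coloneqq \Pn\widehat{\phi}^a_i - \psi^a_i$ and $A_j \coloneqq \Pn\widehat{\phi}^a_j - \psi^a_j$, the identity $(\psi^a_i + A_i)(\psi^a_j + A_j) - \psi^a_i\psi^a_j = \psi^a_i A_j + \psi^a_j A_i + A_iA_j$ reduces matters to $\Vert A_i\Vert_{2,\Pb}$ and the cross term $\Vert A_iA_j\Vert_{2,\Pb}$. Lemma \ref{lem:bound-L2P-norm} again yields $\Vert A_i\Vert_{2,\Pb} = O_\Pb(n^{-1/2}) + O_\Pb(\Vert\widehat\pi_a-\pi_a\Vert_{2,\Pb}\Vert\widehat\mu_i-\mu_i\Vert_{2,\Pb}) = O_\Pb(n^{-1/2}\vee r(n))$, and likewise for $A_j$, so the linear terms $\psi^a_iA_j + \psi^a_jA_i$ are of the claimed order. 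For the cross term I would use Cauchy--Schwarz, $\Vert A_iA_j\Vert_{2,\Pb} \le \Vert A_i\Vert_{4,\Pb}\Vert A_j\Vert_{4,\Pb}$, and bound each $L_4(\Pb)$ factor by decomposing $A_i = (\Pn-\Pb)\widehat{\phi}^a_i + \Pb(\widehat{\phi}^a_i - \phi^a_i)$: the centered empirical average has fourth moment $O(n^{-2})$ under the finite-fourth-moment assumption and bounded $\widehat\pi_a$, while the remainder is the $O_\Pb(r(n))$ bias, so $\Vert A_i\Vert_{4,\Pb} = O_\Pb(n^{-1/2}\vee r(n))$. Hence $\Vert A_iA_j\Vert_{2,\Pb} = O_\Pb((n^{-1/2}\vee r(n))^2)$ is strictly higher order and negligible. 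Collecting the pieces gives $\Vert\widehat{\Sigma}^a_{ij} - \Sigma^a_{ij}\Vert_{2,\Pb} = O_\Pb(n^{-1/2}) + o_\Pb(r(n))$.

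The step I expect to require the most care is the cross term $A_iA_j$: unlike the linear pieces it is not covered by the $L_2$ bound of Lemma \ref{lem:bound-L2P-norm}, and showing it is asymptotically negligible relies on upgrading to $L_4(\Pb)$ control of the centered sample averages, which is exactly where the finite-fourth-moment hypothesis of Theorem \ref{thm:consistency-cov-shrinkage} and the positivity bound \ref{assumption:B2} enter. The remaining bookkeeping --- verifying $\Vert\widehat{\phi}^a_{ij}\Vert_{2,\Pb} = O_\Pb(1)$ and identifying each bias with the standard doubly-robust remainder established in Lemma \ref{lem:double-robust} --- is routine.
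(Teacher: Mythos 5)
Your proof is correct, and its skeleton matches the paper's: split $\widehat{\Sigma}^a_{ij}-\Sigma^a_{ij}$ into the second-moment piece $\Pn\widehat{\phi}^a_{ij}-\Pb\phi^a_{ij}$ and the product-of-means piece, then control the linear pieces with Lemma \ref{lem:bound-L2P-norm} conditionally on $\mathsf{D}_0^n$. Where you genuinely diverge is the product term. The paper factors $\Pn\widehat{\phi}^a_i\,\Pn\widehat{\phi}^a_j-\Pb\phi^a_i\,\Pb\phi^a_j$ into terms of the form $\Pn\widehat{\phi}^a_i\{\Pn\widehat{\phi}^a_j-\Pb\phi^a_j\}$ and then bounds $\Vert XY\Vert_{2,\Pb}$ by $\Vert X\Vert_{2,\Pb}\Vert Y\Vert_{2,\Pb}$; read literally, that submultiplicativity step is not valid for the $L_2(\Pb)$ norm when $X$ and $Y$ are dependent functions of the same estimation sample (Cauchy--Schwarz controls $\Vert XY\Vert_{1,\Pb}$, not $\Vert XY\Vert_{2,\Pb}$). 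Your expansion around the true means, $\psi^a_iA_j+\psi^a_jA_i+A_iA_j$, sidesteps this: the linear terms carry deterministic coefficients so Lemma \ref{lem:bound-L2P-norm} applies directly, and only the quadratic remainder $A_iA_j$ needs a product bound, which you handle correctly via $\Vert A_iA_j\Vert_{2,\Pb}\le\Vert A_i\Vert_{4,\Pb}\Vert A_j\Vert_{4,\Pb}$ together with the $O(n^{-2})$ fourth-moment bound for a centered i.i.d.\ average. This is the more rigorous of the two arguments, and it makes explicit where the finite-fourth-moment hypothesis is actually used. Two minor caveats, neither fatal and both shared with the paper's own treatment: the $L_4$ control of $(\Pn-\Pb)\widehat{\phi}^a_i$ implicitly requires fourth moments of the estimated nuisance functions, slightly more than the stated $\Vert\widehat{\mu}_i\Vert_{2,\Pb}=O_\Pb(1)$; and recording a single doubly-robust bias summand as $o_\Pb(r(n))$ rather than $O_\Pb(r(n))$ is a looseness inherited from the statement itself, since the definition of $r(n)$ in \eqref{eqn:nuisance-product-rate} only delivers a big-$O$ bound.
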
 
\begin{proof}
Recall that for our counterfactual covariance estimator $\widehat{\Sigma}^a_{ij}$ defined in \eqref{eqn:cov-estimator}, we can write
\[
\widehat{\Sigma}^a_{ij} -  {\Sigma}^a_{ij} = \Pn \widehat{\phi}^a_{ij} -  \Pn \widehat{\phi}^a_{i} \Pn \widehat{\phi}^a_{j}  - \Pb {\phi}^a_{ij} - \Pb {\phi}^a_{i} \Pb {\phi}^a_{j}.
\]
Hence, by the result of Lemma \ref{lem:bound-L2P-norm} we obtain
\begin{align*}
    \Vert \widehat{\Sigma}^a_{ij} - {\Sigma}^a_{ij} \Vert_{2,\Pb} &= \left\Vert \Pn \widehat{\phi}^a_{ij} -  \Pn \widehat{\phi}^a_{i} \Pn \widehat{\phi}^a_{j}  - \Pb {\phi}^a_{ij} - \Pb {\phi}^a_{i} \Pb {\phi}^a_{j} \right\Vert_{2,\Pb}\\
    &\leq  \Vert \Pn \widehat{\phi}^a_{ij}  - \Pb {\phi}^a_{ij}  \Vert_{2,\Pb} + \Vert \Pn \widehat{\phi}^a_{i} \left\{\Pn \widehat{\phi}^a_{j}  - \Pb {\phi}^a_{j}\right\}  \Vert_{2,\Pb} + \Vert \Pn \widehat{\phi}^a_{j} \left\{\Pn \widehat{\phi}^a_{i}  - \Pb {\phi}^a_{i} \right\}  \Vert_{2,\Pb}\\
    &\leq \Vert \Pn \widehat{\phi}^a_{ij}  - \Pb {\phi}^a_{ij}  \Vert_{2,\Pb} + \Vert \Pn \widehat{\phi}^a_{i} \Vert_{2,\Pb} \left\Vert\Pn \widehat{\phi}^a_{j}  - \Pb {\phi}^a_{j}\right\Vert_{2,\Pb} + \Vert \Pn \widehat{\phi}^a_{j} \Vert_{2,\Pb} \left\Vert \Pn \widehat{\phi}^a_{i}  - \Pb {\phi}^a_{i} \right\Vert_{2,\Pb} \\
    & \lesssim \frac{\left\Vert \widehat{\phi}^a_{ij}\right\Vert_{2,\Pb}+\left\Vert \widehat{\phi}^a_{i}\right\Vert_{2,\Pb}\left\Vert \widehat{\phi}^a_{i}\right\Vert_{2,\Pb}}{\sqrt{n}} + \left(1 + \left\Vert \widehat{\phi}^a_{i}\right\Vert_{2,\Pb} + \left\Vert \widehat{\phi}^a_{j}\right\Vert_{2,\Pb}\right)o_\Pb\left(r(n)\right) \\
    &= O_\Pb\left(\frac{1}{\sqrt{n}}\right) + O_\Pb\left(1\right)o_\Pb\left(r(n)\right)
\end{align*}
, which gives the result.
\end{proof}

The second part of the proof of Theorem \ref{thm:consistency-cov-shrinkage} immediately follows by Theorem \ref{cor:rates-opt-sol-MV} and Theorem 2 in \citet{kim2025semiparametric}:
\begin{align*}
\Vert \widehat{w}_S - w^* \Vert_2 & \leq \Vert \widehat{w} - w^* \Vert_2 + \Vert \widehat{w}_S - \widehat{w} \Vert_2 \\
& = O_\Pb\left(r_n \vee n^{-1/2}\right) + O_\Pb\left( \Vert \widehat{\Sigma}^*_S - \widehat{\Sigma}^a \Vert_F \right)\\
&= O_\Pb\left(r_n \vee n^{-1/2}\right).
\end{align*}

\subsection{Proof of Theorem \ref{thm:pd-correction}}

For arbitrary $k \times k$ matrix $\Sigma$ and $k \times 1$ vector $m$, define a parametric program $\mathsf{P}(\Sigma,m)$
\begin{equation}
\label{eqn:couterfactual-MV-param}
\begin{aligned}
    & \underset{w \in \mathcal{W}}{\text{minimize}} \quad {1}/{2}w^\top\Sigma w  - \tau w^\top m \\
    & \text{subject to} \quad w \in \mathcal{S}(\Sigma,m),
\end{aligned}     \tag{$\mathsf{P}(\Sigma,m)$}  
\end{equation}
by viewing $\Sigma$ and $m$ together as parameters. Then $\mathsf{P}(\Sigma^a, m^a)$,  $\mathsf{P}(\widehat{\Sigma}^*_{cor}, \widehat{m}^a)$ are our true and approximating programs, respectively. 

\begin{proof}
By virtue of the quadratic growth condition, $\sol\left({\mathsf{P}(\Sigma^a, m^a)}\right)$ is a singleton \citep[][Theorem 2.5]{still2018lectures}, and $w^* = \sol\left({\mathsf{P}(\Sigma^a, m^a)}\right)$. By the Lipschitz stability result for smooth ($C^2$) parametric programming \citep[][Theorem 6.2]{still2018lectures}, for each $w^*$ there exist $\varepsilon, L > 0$ such that for all $\bar{\Sigma} \in \mathbb{B}_{\varepsilon}(\Sigma^a)$, $\bar{m} \in \mathbb{B}_{\varepsilon}(m^a)$ there exists at least one local minimizer $\bar{x}$ of $\mathsf{P}(\bar{\Sigma}, \bar{m})$ that satisfies
\[
\Vert\bar{x} - w^*  \Vert_2 \leq L\left\{ \Vert \bar{\Sigma} - \Sigma^a \Vert_2 + \Vert \bar{m} - m^a \Vert_2 \right\}.
\]

Now, it is straightforward to see that
\begin{align*}
   \Vert \widehat{w}_{cor} - {w^*} \Vert_2 & \leq   \Vert \widehat{w}_{cor} - {w^*} \Vert_2  \mathbbm{1}\left\{\widehat{\Sigma}^*_{cor} \in \mathbb{B}_{\varepsilon}(\Sigma^a), \widehat{m}^a \in \mathbb{B}_{\varepsilon}(m^a)\right\} \\
   & \quad +  \Vert \widehat{w}_{cor} - {w^*} \Vert_2 \left( \mathbbm{1}\left\{\widehat{\Sigma}^*_{cor} \notin \mathbb{B}_{\varepsilon}(\Sigma^a)\right\} + \mathbbm{1}\left\{\widehat{m}^a \notin \mathbb{B}_{\varepsilon}(m^a)\right\} \right).
\end{align*}

For the first term, it follows that
\begin{align*}
    \Vert \widehat{w}_{cor} - {w^*} \Vert_2 \mathbbm{1}\left\{\widehat{\Sigma}^*_{cor} \in \mathbb{B}_{\varepsilon}(\Sigma^a), \widehat{m}^a \in \mathbb{B}_{\varepsilon}(m^a)\right\} & \leq L \left( \Vert \widehat{\Sigma}^*_{cor} - \Sigma^a \Vert_2 + \Vert \widehat{m} - m^a \Vert_2 \right).
\end{align*}

Also $\forall \epsilon', \varepsilon >0$, we have that
\begin{align*}
   \Pb\left(\mathbbm{1}\left\{\widehat{\Sigma}^*_{cor} \notin \mathbb{B}_{\varepsilon}(\Sigma^a)\right\} > \epsilon' \right) &= \Pb\left( \widehat{\Sigma}^*_{cor} \notin \mathbb{B}_{\varepsilon}(\Sigma^a) \right)\\
   &= \Pb\left( \Vert \widehat{\Sigma}^*_{cor} - \Sigma^a \Vert_2 \geq \varepsilon \right) \\
   & \leq \Pb\left( \Vert \widehat{\Sigma}^*_{cor} - \widehat{\Sigma}^a \Vert_2 + \Vert \widehat{\Sigma}^a - \Sigma^a \Vert_2 \geq \varepsilon \right) \\
   & \rightarrow 0,
\end{align*}
which follows by the fact that $\Vert \widehat{\Sigma}^*_{cor} - \widehat{\Sigma}^a \Vert_2 + \Vert \widehat{\Sigma}^a - \Sigma^a \Vert_2 = o_\Pb(1)$ under the given conditions. Similarly, we obtain $\mathbbm{1}\left\{\widehat{m}^a \notin \mathbb{B}_{\varepsilon}(m^a)\right\} = o_\Pb(1)$ which immediately follows by that $\Vert \widehat{m} - m^a \Vert_2 = o_\Pb(1)$.

Putting the pieces together, we obtain 
\begin{align*}
   \Vert \widehat{w}_{cor} - {w^*} \Vert_2 &=   O\left(\Vert \widehat{\Sigma}^*_{cor} - \Sigma^a \Vert_2 + \Vert \widehat{m} - m^a \Vert_2 \right) + o_\Pb\left( \Vert \widehat{w}_{cor} - {w^*} \Vert_2 \right) 
\end{align*}
which yields
\begin{align*}
   \frac{\Vert \widehat{w}_{cor} - {w^*} \Vert_2}{\Vert \widehat{\Sigma}^*_{cor} - \Sigma^a \Vert_2+\Vert \widehat{m} - m^a \Vert_2} &=   O_\Pb\left( 1 \right).
\end{align*}
Hence, the result follows. 
\end{proof}

\subsection{Alternate proof of Theorem \ref{thm:pd-correction}}
\begin{proof}
Since $\Sigma^a$ is assumed to be PD, the quadratic growth condition (A2) holds at $w^*$, so we have
\begin{align*}
	\left\Vert \widehat{w}_{cor}- w^* \right\Vert_2 &=  O_\Pb\left(\Vert\widehat{\Sigma}^*_{cor} - \Sigma^a\Vert_2 \vee r_n\vee n^{-1/2} \right) \\
	\implies \left\Vert \widehat{w}_{cor}- w^* \right\Vert_2 &=  O_\Pb\left(\Vert\widehat{\Sigma}^*_{cor} - \widehat{\Sigma}^a\Vert_2 + \Vert\widehat{\Sigma}^a - \Sigma^a\Vert_2 \vee r_n\vee n^{-1/2} \right) \\
	&=  O_\Pb\left(\Vert\widehat{\Sigma}^*_{cor} - \widehat{\Sigma}^a\Vert_2\vee r_n\vee n^{-1/2} \right)
\end{align*}
where the first line follows from Theorem 3.1, and the last line follows from Lemma 4.1 which says that $\Vert\widehat{\Sigma}^a - \Sigma^a\Vert_2 = O_\Pb(r_n) + O_\Pb(n^{-1/2})$.

For the second statement of the theorem, suppose that $\widehat{\Sigma}^*_{cor} = \widehat{\Sigma}^a$ whenever $\widehat{\Sigma}^a$ is PD. Let $S_+^k$ denote the set of all PD $k\times k$ matrices, and for any $\epsilon > 0$ let $B_\epsilon(\Sigma^a) = \{M \in S^{k\times k} : \Vert M - \Sigma^a \Vert_F \leq \epsilon\}$ denote the $\epsilon$-ball in Frobenius norm of all symmetric matrices around $\Sigma^a$. Since $S_k^+$ is an open set, we can fix a $\delta > 0$ such that $B_\delta(\Sigma^a) \subset S_+^k$. We have
\begin{align*}
 \Pb(\widehat{\Sigma}^*_{cor} \neq \widehat{\Sigma}^a)
	&\leq \Pb(\widehat{\Sigma}^a \not\in B_\delta(\Sigma^a)) \\
	&= \Pb(\Vert\widehat{\Sigma}^a - \Sigma^a\Vert_F > \delta) \\
	&\rightarrow 0 \text{ as } n \rightarrow \infty
\end{align*}
where the last line again follows from Lemma 4.1. Next, note that for any sequence of positive numbers $a_n$,
\begin{align*}
	\Pb(a_n^{-1}\Vert\widehat{\Sigma}^*_{cor} - \widehat{\Sigma}^a\Vert_F > \epsilon) &= \Pb(a_n^{-1}\Vert\widehat{\Sigma}^*_{cor} - \widehat{\Sigma}^a\Vert > \epsilon \mid \widehat{\Sigma}^*_{cor} \neq \widehat{\Sigma}^a)\Pb(\widehat{\Sigma}^*_{cor} \neq \widehat{\Sigma}^a) \\
	&\leq \Pb(\widehat{\Sigma}^*_{cor} \neq \widehat{\Sigma}^a) \\
	&\rightarrow 0 \text{ as } n \rightarrow \infty
\end{align*}
so that $\Vert\widehat{\Sigma}^*_{cor} - \widehat{\Sigma}^a\Vert_F = o_\Pb(a_n)$. Letting $a_n = r_n \vee n^{-1/2}$, we have that
\begin{align*}
    \left\Vert \widehat{w}_{cor}- w^* \right\Vert_2 &=  O_\Pb\left(r_n\vee n^{-1/2} \right)
\end{align*}
as claimed.

\end{proof}

\end{document}